\newtheorem{prop}{Proposition}[section]
\newtheorem{theorem}{Theorem}[section]
\newtheorem{lemma}{Lemma}[section]
\newtheorem{remark}{Remark}[section]
\newtheorem{assumption}{Assumption}
\newcommand{\be}{\begin{equation}}
\newcommand{\ee}{\end{equation}}
\newcommand{\ba}{\begin{align}}
\newcommand{\ea}{\end{align}}
\newcommand{\PP}{\mathcal{P}}
\newcommand{\cl}{\mathcal{L}}
\newcommand{\La}{\Lambda}
\newcommand{\N}{\mathbb{N}}
\newcommand{\R}{\mathbb{R}}
\newcommand{\skii}{\sum_{k_1=1}^{N}}
\newcommand{\dd}{,\dots,}
\newcommand{\kk}{\mathbf{k}}
\newcommand{\e}{\mathbf{e}}
\newcommand{\q}{\mathbf{q}}
\newcommand{\bfd}{\mathbf{d}}
\newcommand{\x}{\mathbf{x}}
\newcommand{\z}{\mathbf{z}}
\newcommand{\Z}{\mathbf{Z}}
\newcommand{\Znax}{\Z_{\boldsymbol\alpha,x}^{(n)}}
\newcommand{\bsa}{\boldsymbol\alpha}
\newcommand{\ZZ}{\mathbb{Z}}
\newcommand{\s}{\mathfrak{S}}
\newcommand{\lv}{\left\vert}
\newcommand{\rv}{\right\vert}
\newcommand{\ml}{\mathfrak{L}}
\newcommand{\pa}{\partial}
\begin{document}

\title{Markov semigroups with hypocoercive-type generator in infinite dimensions: ergodicity and smoothing}
\author{ V. Kontis\footnote{v.kontis@imperial.ac.uk,  
Department of Epidemiology and Biostatistics,
Imperial College London, UK },
 M. Ottobre\footnote{michelaottobre@gmail.com, Department of Mathematics,
 Heriot Watt University, Edinburgh, UK } and B. Zegarlinski\footnote{b.zegarlinski@imperial.ac.uk, Department of Mathematics, Imperial College London, London, UK} \footnote{Supported by Royal Society Wolfson RMA} }

\date{\today}
\maketitle

\begin{abstract}
\noindent We start by considering finite dimensional Markovian dynamics in $\R^m$ generated by operators of hypocoercive type and for such models we obtain short and long time pointwise estimates for all the derivatives, of any order and in any direction, along the semigroup. We then look at infinite dimensional models (in  $(\R^m)^{\ZZ^d}$) produced by the  interaction of infinitely many finite dimensional dissipative dynamics of the type indicated above.  For these infinite dimensional models we study finite speed of propagation of information, well-posedness of the semigroup, time behaviour of the derivatives and strong ergodicity problem. 
       \end{abstract}

\section{Introduction} \label{S.1:Introduction}

In this paper we consider infinite dimensional models of interacting dissipative systems with noncompact state space. In particular we develop a basis for the construction and analysis of dissipative semigroups whose generators are given in terms of noncommuting vector fields and for which the equilibrium measures are not a priori known. The ergodicity theory in the case where an invariant measure is not given in advance, in the noncompact subelliptic setup, is an  interesting and challenging problem which was initially studied in \cite{DKZ2011}, and which we extend in new directions in this paper, developing a strategy based on generalised gradient bounds. In the following we  will first present the main results of the paper and we will then  relate them to existing results in the literature. 

Hypoelliptic operators of hypocoercive type have received a lot of attention in recent years (see e.g. \cite{Herau2007,V} and references therein), as they naturally arise in non-equilibrium statistical mechanics, for example in the context of the heat bath formalism.  These are second order operators on $\R^m$ in  H\"ormander form 
$$L= Z_{0}^2+B,$$
where  $Z_{0}$ and $B$ are first order differential operators. The principal part is spanned by at least one field $Z_{0}$ which, together with the term of first order $B,$ generate fields $Z_{j+1} \equiv  [B,Z_j]$, $j=0,..,N-1$ spanning the full Lie algebra. Therefore by H\"ormander theorem, (see e.g. \cite{H1}, \cite{BLU}, \cite{CSCV} and references therein), such semigroups have the strong smoothing property. Motivated by  \cite{V},  we will refer to these generators as {\em hypocoercive-type} operators (see Remark \ref{remhypocoer}).
 
At the beginning of the paper,  in Section 2,  we describe a systematic inductive method which allows to obtain quantitative short and long time estimates  for the space-derivatives of the semigroup generated by $L$.  We obtain pointwise bounds on the derivatives  of any order and in any space direction.  The techniques of Section 2 were  originally developed in \cite{MO_thesis} and are based on combining the hypocoercive method presented in \cite{Herau2007, V} with the classic Bakry-Emery semigroup approach (\cite{BE}).  Section \ref{Sec:subs1to2} contains an explanation of our  technique and  its relations with the aforementioned methods in a simplified setting,  so that the involved notation of   Section \ref{Sec:subs2to2}, which is devoted to proving the time behaviour of the derivatives in full generality, does not obfuscate the idea behind the method we present.  While obtaining such estimates is an interesting problem in itself,  further motivation for obtaining {\em pointwise}  estimates comes from the fact that in the infinite dimensional situation we are interested in this paper, typically one does not have any reference measure.  As a consequence, since we do not have the integration by parts trick at our disposal, generally we need to sacrifice estimates  in direction of  $B$. To the best of our knowledge, a purely analytical method adapted to obtaining pointwise bounds on the time-behaviour of the derivatives of any order of degenerate Markov semigroups was so far lacking.  We now come to present the infinite dimensional problem tackled in the subsequent sections of the paper.

Once we have studied the finite dimensional diffusion in $\R^m$ generated by the operator $L$, we study systems of infinitely many interacting diffusions of hypocoercive type. This is done by considering  the lattice $\mathbb{Z}^d$ and, roughly speaking, ``placing"  an isomoprhic copy of our $\R^m$-diffusion  at each point of such a  lattice.  Finally, we let these dynamics interact, obtaining in this way an infinite dimensional Markovian dynamics in $(\R^m)^{\mathbb{Z}^d}$. 

In Section 3 we provide a general construction of Markov semigroups in an infinite dimensional setup with the underlying space given  as a subset of an infinite product space (including an infinite  product of noncompact Lie algebras). We improve on the results described in \cite{DKZ2011} for semigroups with all generating fields present in the principal part of the generator. Firstly, we relax the conditions on the structure of the Lie groups and the principal part. Secondly, we obtain a generalisation of the allowed interaction including a  second order perturbation part dependent on fields acting on different coordinates. In addition, we prove stronger finite speed of propagation of information estimates providing a tree bound decay when derivatives with respect to many different coordinates act on the semigroup. This in particular allows us to prove smoothness of the semigroups in our general infinite dimensional setup, thus filling an important gap in the literature  (one may also expect that our estimates will provide some additional information about the equilibrium measure).   
%
%
 Additionally this allows to provide some new criteria for ergodicity of the semigroups.

In Section 4 we provide a strategy for proving the existence of invariant measures for a semigroup. Assuming a Lyapunov-type condition for the generator of a finite dimensional semigroup acting on a suitable unbounded function with compact level sets, we formulate conditions on the interaction allowing to  apply a weak compactness criterion for the generator of the infinite dimensional non product semigroup constructed in Section 3.

In Section 5 we provide a  criterion for uniqueness of the invariant measure using first order as well as higher order estimates. 

 In general, it is an art how to apply our criterion to particular models. We provide a number of  explicit examples in \cite{MVII}.
In this  companion paper, we provide a concrete illustration of the full flexibility of  the theory developed in this paper with a variety of application areas and including examples where the methods we provide here are applied in a non-standard way. 
This includes a number of examples of infinite dimensional models with smoothing and ergodicity  estimates,  where precise dependence on parameters can be obtained or where one has long time concentration along some directions only. 
\subsection{Relation with the literature}
As we have already remarked, regarding the finite dimensional framework,  the techniques of Section \ref{S.2:Short and long time behaviour} result from combining the Bakry-Emery semigroup approach (\cite{BE}) with the hypoelliptic/hypocoercive methods proposed by H\'erau and Villani (\cite{Herau2007, V}). As is well known, the semigroup approach leads to pointwise estimates, but it is mainly designed for elliptic dynamics. The more recent framework proposed in \cite{Herau2007, V} is devised for degenerate diffusions but it requires an a priori knowledge of the invariant measure $\mu$ of the semigroup and it indeed produces estimates in the weighted space $L^2(\mu)$.  Both of these techniques are entirely analytical. Combining these approaches results in a method that enjoys the perks of both: it is suited to the degenerate setup and it produces pointwise estimates. Moreover, it doesn't require any knowledge about the invariant measure and it can be adapted to tackle infinite dimensional problems, as we show in this paper.     As far as the finite dimensional setting is concerned, another viable approach to study the time behaviour of the space derivatives of the semigroup is the probabilistic one, via Malliavin calculus
(see for example \cite{Crisan} and references therein). However it might be technically involved to extend this technique to the infinite dimensional framework that we are aiming for. In contrast, the method we propose is easy to extend to the infinite dimensional setup. 

We now come to explaining how our results for the infinite dimensional dynamics relate to existing ones in the literature.  The problem of construction and ergodicity of dissipative dynamics for infinite dimensional interacting particle systems with bounded state space has a long history, see e.g. \cite{L}, \cite{GZ} and references therein. For a construction of Markov semigroups on the space of continuous functions acting on  an infinite dimensional underlying space (well suited to study strong ergodicity problems), we refer to \cite{Z1} in case of fully elliptic operators, to \cite{DKZ2011}  for the  subelliptic setup, and to \cite{LuZ} for  L\'evy type generators; these constructions will be even more extended in this paper.

 An interesting approach via stochastic differential equations can be  found in  \cite{DaPZ} (see also  \cite{BT}, \cite{BHT} and references therein). We mention also another approach via  Dirichlet forms theory (see e.g. \cite{AKR}, \cite{R} and reference therein), which is well adapted to $L_2$ theory. 

For symmetric semigroups, recent progress has been made in proving the log-Sobolev inequality
for infinite dimensional H\"ormander-type generators $\mathcal{L}$ which are symmetric in the weighted space 
$L_2(\mu)$,  defined with respect to  a suitable nonproduct measure $\mu$ 
(\cite{LZ},
\cite{HZ}, \cite{IP}, \cite{I}, \cite{IKZ}). One can therefore expect an extension of the established
strategy (\cite{Z1}) for proving strong pointwise
ergodicity for the corresponding Markov semigroups $P_t\equiv e^{t\mathcal{L}}$
(or, in case of the compact spaces, even in the uniform norm as e.g. in \cite{GZ}).
To obtain a fully fledged theory in this direction which could include for example configuration spaces given by infinite products of general noncompact nilpotent Lie groups other than
Heisenberg-type groups, one needs to conquer a (finite dimensional) problem of
sub-Laplacian bounds (of the corresponding control distance) 
which for the moment remains still very hard.  We remark that in the fully elliptic case a strategy based on classical Bakry-Emery arguments and involving a restricted class of interactions can be achieved (even for nonlocal generators; see e.g. \cite{LuZ}). In the case of the stochastic strategy of \cite{DaPZ}, the convexity assumption enters via a dissipativity condition in a suitable Hilbert space and does not improve the former one as far as ergodicity is concerned (although on the other hand it allows to study a number of stochastically natural models). In the subelliptic setup, involving subgradients, this strategy faces serious obstacles, see e.g. comments in \cite{BBBC}.

 To summarize, the purpose of this paper is twofold: regarding the finite dimensional setup, we improve on the methods presented in \cite{V} by adapting the hypocoercive techniques to problems in which an invariant measure might not be a priori known; in infinite dimensions, we provide results about systems of infinitely many interacting diffusions, thereby completing and extending the framework of \cite{ DKZ2011, LuZ, Z1}. 

\section{Short and long time behaviour of $n^{th}$ order derivatives in finite dimensions} \label{S.2:Short and long time behaviour}
Consider a second order differential operator on $\R^m$, of the form
\begin{equation}\label{operator}
L=Z_0^2+B, 
\end{equation}
where $Z_0$ and $B$ are first order differential operators. We will assume   the following commutator  structure:

\begin{assumption}[\textbf{CR.I}] Assume that for some  $N \in \mathbb{N}$, $N\geq 1$, there exist $N$  differential operators $Z_1,\dots,Z_N$, such that
the following commutator relations hold true:
\begin{align} 
\label{comm} 
 [B,Z_j]&=Z_{j+1},   \qquad \qquad \; \mbox{for all }\,\,  j=0,\dots,N-1, \\
[B,Z_N]&=\sum_{j=0}^{N}c_jZ_j,  \nonumber \\
[Z_i, Z_j]&=\sum_{h=0}^N c_{ijh} Z_h, 
\qquad \mbox{for all }\,\, 0\leq i,j\leq N, \nonumber
\end{align}
for some constants $c_j \in \mathbb{R},\, j=0,..,N-1$, $c_N \in [0,\infty)$  and $c_{ijh} \in \mathbb{R}$, with $c_{0jh}\equiv 0 \,\, for \,\, h\geq j-1$.
\end{assumption}
The main result of this section is Theorem \ref{thm.1}, regarding the time behaviour of fields of any order along the semigroup. In order to state such a theorem we first need to introduce some notation and to further detail our framework. 
 
We will assume that the collection of differential operators $B$ and $Z_0, Z_1\dd Z_N$ span $\R^m$ at each point.\footnote{Strictly speaking, this assumption is not needed in the finite dimensional case. However it will be needed for the infinite dimensional problem. In $\R^m$, it is simply the case that one will obtain estimates in all the directions that can be obtained from the successive commutators between $Z_0$ and $B$, including $Z_0$ but not $B$.}
For $Z_{k_j}$, $k_l\in\{0,..,N\}$, $l=1,..,n$, and $n\in\mathbb{N}$, we set
\[ \Z_{\kk,n}\equiv  \Z_{k_1,..,k_n} := Z_{k_1}\cdot{\dots}\cdot Z_{k_n}.
\]
In this section we will be referring to  terms of the form 
$ \Z_{\kk,n}f $ and $ Z_0\Z_{\kk,n} f  \equiv \left( Z_0Z_{k_1}\cdot{\dots}\cdot Z_{k_{n}}f \right)$
as terms of length $n$ and   terms of length $n+1$ starting with $Z_0$, respectively. 
We will use $\mathbf{e}_l$, $l=1,..,n$, with $(\mathbf{e}_l)_m=\delta_{lm}$, for the standard basis in $\mathbb{R}^n$, and
$\kk\equiv \sum_{l=1,..,n}k_l \mathbf{e}_l$ with non-negative integer coefficients $k_l $, ${l=1,..,n}$,
and we set
\[|\kk|_n:= \sum_{l=1}^n k_l.\]

In the following $\parallel\cdot\parallel_{\infty}$ indicates the supremum norm. We will use the notation $P_t\equiv e^{tL}$, $t\geq 0$,  for the semigroup generated by the operator $L$ and set $f_t\equiv e^{tL }f$  for any continuous bounded function $f$. 


For some strictly positive constants $a_{\kk,n}\equiv a_{k_1,..,k_n}, b_{\kk,n}\equiv b_{k_1,..,k_n}$, 
$0\leq k_l\leq N$, $l=1,..,n$, and $d$ (to be chosen later),  
we define the following time dependent quadratic forms
\[ \label{gamma_0}
Q^{(0)}_t f_t=d\vert f_t\vert^2 
\]
and  
\begin{align}
\bar \Gamma^{(1)}_t f_t&\equiv \sum_{j=0}^N a_j t^{2j+1}
\vert Z_j f_t\vert^2, \nonumber \\
\label{gamma_1} \Gamma^{(1)}_t f_t&\equiv \bar \Gamma^{(1)}_t f_t +
\sum_{j=1}^{N}b_jt^{2j}(Z_{j-1}f_t)(Z_{j}f_t),\\
Q^{(1)}_t f_t &\equiv  \Gamma^{(1)}_t f_t + Q^{(0)}_t f_t. \label{13B}
\end{align} 
For general $n\geq 2$ we set, 
\begin{align}
\bar \Gamma^{(n)}_t f_t  
&\equiv
 \sum_{  |\kk|_{n}=0}^{nN} 
a_{\kk,n}
t^{2|\kk|_{n} +n}
\vert \Z_{\kk,n} f_t\vert^2, \nonumber\\ 
\Gamma^{(n)}_t f_t  
&\equiv
\bar \Gamma^{(n)}_t f_t  
+
 \sum_{0 \leq |\kk|_{n}:\, k_1\geq 1} ^{nN}
b_{\kk,n} t^{2|\kk|_{n} +n-1}
(\Z_{\kk-\mathbf{e}_1,n} f_t)
(\Z_{\kk,n} f_t) , \nonumber
\\
 Q^{(n)}_t f_t  
&\equiv \Gamma^{(n)}_t f_t+ 
  Q^{(n-1)}_t f_t.  \label{7star}
\end{align} 

\begin{theorem}\label{thm.1} 
Suppose the operators $B,Z_j$, $j=1,..,N$ satisfy Assumption \textup{(\textbf{CR.I})} and $P_t\equiv e^{tL}$ is a Markov semigroup with generator $L$ given by 
\eqref{operator}. Then for all
$n\in\mathbb{N}$ and for all $0 \leq l\leq n$ there exist strictly positive constants $a_{\kk,l} , b_{\kk,l}, \bar d_l,d_l$ and $T\in (0,\infty]$ such that 
\be \label{gammabound} 
\bar d_l\bar \Gamma_t^{(l)} f_t \leq \Gamma_t^{(l)} f_t  \leq d_l \left(P_t f^2 - (P_t f)^2\right),   \qquad\mbox{for all }\,\, 1\leq l\leq n 
\mbox{ and } 0<t<T. 
\ee
Moreover if $c_j\equiv 0$ and $c_{0jh}\equiv 0$ for all $j=1,..,N$, then $T=\infty$, and in particular we have
\be\label{gammabound1}
\|\Z_{\kk,n} f_t\|_\infty^2\leq \frac{C}{t^{2 \lv \kk\rv_n+n}}\| P_tf^2-(P_tf)^2\|_{\infty}\leq  \frac{C}{t^{2 \lv \kk\rv_n+n}}\inf_{c\in\mathbb{R}}\|  f-c\|_{\infty}^2, \qquad\mbox{for all }\,\, t>0,
\ee
with some constant $C\in(0,\infty)$ independent of $t$ and $f$.
\end{theorem}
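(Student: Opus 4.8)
The plan is to prove the chain of inequalities \eqref{gammabound} by induction on $n$, using a single differential inequality for the full quadratic form $Q_t^{(n)}f_t$. The key object is the time derivative $\left(\frac{\pa}{\pa t}-L\right)Q_t^{(n)}f_t$, computed term by term via Lemma \ref{lem.2} applied with $\ml=L$, $X=Z_0$, $Y=B$: for each summand $t^{p}(\Z_{\kk,n}f_t)(\Z_{\lb,n}f_t)$ one gets the explicit time-derivative of the power $t^p$, a manifestly nonpositive "good" term $-2t^p(Z_0\Z_{\kk,n}f_t)(Z_0\Z_{\lb,n}f_t)$ coming from the carr\'e-du-champ, and cross terms $t^p([\Z_{\kk,n},L]f_t)(\Z_{\lb,n}f_t)+\text{sym}$. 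The commutators $[\Z_{\kk,n},L]$ are controlled by \eqref{rec1}, \eqref{comm}--\eqref{commZZgen}: $[\Z_{\kk,n},B]$ produces terms $\Z_{\kk+\e_j,n}$ of the same length $n$ but with $|\kk|_n$ raised by one, while $[\Z_{\kk,n},Z_0^2]$ produces either terms starting with $Z_0$ of length $n+1$ (absorbed by the good term via Cauchy--Schwarz with a small constant) or, through the structure constants $c_{ijh}$, lower-order terms. The crucial bookkeeping is that the powers of $t$ are tuned — $t^{2|\kk|_n+n}$ for the diagonal squares and $t^{2|\kk|_n+n-1}$ for the $b$-cross terms — precisely so that each "bad" term produced by a commutator is matched, after Cauchy--Schwarz, against a term already present in $\bar\Gamma_t^{(n)}$ or $\bar\Gamma_t^{(n+1)}$ with one higher power of $t$, so that on a short enough time interval $(0,T)$ the small factor $t$ lets the good terms and the inductively-controlled $Q_t^{(n-1)}$ dominate.

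Concretely, I would first fix the constants greedily from the top down: choose the highest-order $a_{\kk,n}$ with $|\kk|_n=nN$ first, then descend in $|\kk|_n$, at each stage picking $a_{\kk,n}$ (and the $b_{\kk,n}$) large enough relative to the already-fixed higher-order constants to absorb the bad terms they generate, and finally $d$ large enough to dominate everything via $Q_t^{(0)}=d|f_t|^2$ and the elementary bound $Z_0^2|f_t|^2=2(Z_0^2 f_t)(f_t)+2|Z_0 f_t|^2\geq \dots$ from \eqref{Z_0||}. The outcome of this computation is a bound of the form $\left(\frac{\pa}{\pa t}-L\right)Q_t^{(n)}f_t\leq 0$ on $(0,T)$, provided $Q_t^{(n-1)}f_t$ already satisfies its own such inequality (the induction hypothesis) and $T$ is small. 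Then the parabolic maximum principle (using that $P_t$ is Markov, so $\sup$-norm estimates propagate) gives $Q_t^{(n)}f_t\leq P_t Q_0^{(n)}f = P_t(d|f|^2)=d P_t f^2$, and since $Q_t^{(n)}f_t\supseteq Q_t^{(0)}f_t=d|f_t|^2=d(P_tf)^2$, subtracting yields $\Gamma_t^{(n)}f_t+\dots\leq d(P_tf^2-(P_tf)^2)$; the lower bound $\bar d_n\bar\Gamma_t^{(n)}f_t\leq\Gamma_t^{(n)}f_t$ is an elementary Cauchy--Schwarz estimate on the $b$-cross terms (each $b_{\kk,n}t^{2|\kk|_n+n-1}(\Z_{\kk-\e_1,n}f_t)(\Z_{\kk,n}f_t)$ is bounded below by $-\tfrac12 b_{\kk,n}(\epsilon t^{2|\kk|_n+n}|\Z_{\kk,n}f_t|^2+\epsilon^{-1}t^{2|\kk|_n+n-2}|\Z_{\kk-\e_1,n}f_t|^2)$, and the powers again line up so that for $t<T$ and $b_{\kk,n}$ small relative to the $a$'s this is a small perturbation). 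Iterating \eqref{gammabound} down to $l=0$ gives the claim for all $0\le l\le n$.

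For the "moreover" part, when $c_j\equiv 0$ and $c_{0jh}\equiv 0$ for all $j$, the commutator $[B,Z_N]$ no longer feeds back lower-order terms and $Z_0$ commutes with all the $Z_j$ in the relevant way, so the bad terms that forced $t$ to be small disappear — more precisely, the terms whose control required a factor of $t$ are exactly those arising from $c_j$ and $c_{0jh}$, and without them the differential inequality $\left(\frac{\pa}{\pa t}-L\right)Q_t^{(n)}f_t\leq 0$ holds for all $t>0$, i.e. $T=\infty$. Then \eqref{gammabound} with $l=n$ and the definition of $\bar\Gamma_t^{(n)}$ give, for each fixed $\kk$, $a_{\kk,n}t^{2|\kk|_n+n}\|\Z_{\kk,n}f_t\|_\infty^2\leq\|\bar\Gamma_t^{(n)}f_t\|_\infty\leq \bar d_n^{-1}d_n\|P_tf^2-(P_tf)^2\|_\infty$, which is \eqref{gammabound1} after noting $P_tf^2-(P_tf)^2=P_t(f-c)^2-(P_t(f-c))^2\leq P_t(f-c)^2\leq\|f-c\|_\infty^2$ for any $c\in\R$ and optimizing over $c$.

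\textbf{Main obstacle.} The hard part is the bookkeeping in the differential inequality: verifying that \emph{every} term generated by the commutators $[\Z_{\kk,n},L]$ — both the length-raising ones from $[\cdot,B]$ and the structure-constant ones from $[\cdot,Z_0^2]$ (which involve $Z_0\Z_{\kk+\text{shift}}$ of length $n+1$, needing the $(n+1)$-level good term, and products with lower-order $\Z$'s, needing $Q_t^{(n-1)}$) — can be absorbed with a consistent choice of constants and time powers, and in particular that the constraint $c_{0jh}\equiv 0$ for $h\ge j-2$ is exactly what is needed to keep the recursion from being ill-founded. Setting up the induction so that the $(n+1)$-level good terms are genuinely available when bounding the $n$-level inequality (i.e. carrying enough of $Q_t^{(n+1)}$ along, or equivalently showing the length-$(n+1)$-starting-with-$Z_0$ terms are harmless) is the delicate structural point; the rest is Cauchy--Schwarz and the maximum principle.
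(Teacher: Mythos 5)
Your proposal follows the same approach as the paper's own proof: the same inductive scheme with $Q_t^{(n)}$, the same differential inequality $(-L+\partial_t)Q_t^{(n)}f_t\le 0$ established via Young's inequality with time-dependent weights tuned to the rank heuristic, the same top-down constant selection (encoded in the paper by the conditions \eqref{cond.1} and \eqref{fruit1}), the same Young lower bound for $\bar d_n\bar\Gamma_t^{(n)}\le\Gamma_t^{(n)}$, and the same small-$t$ absorption of the $c_j$ and $c_{0jh}$ contributions. The one structural worry you flag as your "main obstacle" --- that the good length-$(n+1)$ terms $|Z_0\Z_{\kk,n}f_t|^2$ seem to require carrying $Q_t^{(n+1)}$ along --- dissolves once you note that those terms are not part of $\Gamma_t^{(n+1)}$ but are the carr\'e-du-champ of the level-$n$ form, generated on the fly by Lemma \ref{lem.1}, identity \eqref{Z_0||a}, so the induction genuinely closes at level $n$.
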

Before coming to the proof of Theorem \ref{thm.1}, we make a couple of remarks in order to give some more intuition about the statement of such a theorem.
\begin{remark}\label{rem:flex}
\textup{In words, Theorem \ref{thm.1} states the following: under the general commutator relations of Assumption \textup{(\textbf{CR.I})}, the time behaviour of the first inequality in \eqref{gammabound1} is valid only for $0<t<T$ with $T$ small enough, typically $T<1$, i.e. in full generality we can only obtain a short time estimate.  However, if we assume for example that the fields $Z_i, i=0 \dd N$ commute and that $Z_N$ commutes with $B$, then  the time behaviour \eqref{gammabound} is valid for any $t>0$. In this paper we work under the relatively general Assumption \textup{(\textbf{CR.I})}. We would like to emphasize that the technique we use to prove  Theorem \ref{thm.1} is quite flexible and might give better results depending on the case at hand. In particular one might be able to improve on the time interval in which the estimate is valid when exact knowledge of the constants appearing in  Assumption \textup{(\textbf{CR.I})} is available. This improvement might  also be obtained in cases where}
$$
 [B,Z_j]= \alpha_j Z_{j+1},   \qquad \mbox{\textup{for all}}\,\,\,  
 j=0,\dots,N-1, 
 $$
 \textup{for some large positive  constants $\alpha_j$. When the generator contains  a dilation operator it is also possible to obtain exponential decay. We have illustrated this fact with an  (infinite dimensional) example in \cite{MVII}, see also Theorem \ref{thm5.1}.} 
\end{remark}
\begin{remark}\label{remhypocoer}
\textup{Notice that the above proposition is coherent with H\"ormander's rank nomenclature, as it agrees  with the heuristics according to which for any differential operators $X$ and $Y$,  $r(XY)=r(X)+r(Y)$ and $r([X,Y])=r(X)+r(Y)$, where $r(X)$ denotes the rank of the operator $X$. In particular, $Z_0$ is an operator of rank 1 and $B$ is an operator of rank 2, so that $r(Z_j)=2j+1$, for any $0\leq j \leq N$ and 
$r(Z_{k_1}\cdot{\dots}\cdot Z_{k_n})=2\sum_{j=1}^n k_j+n$. \\
Because new vector fields are obtained only through commutators with the rank 2 operator $B$, we will refer to $L$ as to  a hypocoercive-type operator, in analogy with the setting considered in \cite{V}.
However we would like to stress that despite this clear analogy, the setting in which we are going to work is quite far from the one of the hypocoercivity theory. Indeed, as we have mentioned in the introduction, here we do not assume the existence of a reference (equilibrium) measure and the estimates we obtain are pointwise. }
\end{remark}
\begin{remark}\textup{Because of the linearity of the operator, all the results of Theorem \ref{thm.1} still hold if
$$
L=\sum_{i=1}^MZ_{0,i}^2+B,
$$
for some $M>1$. We do not present the results in such generality only to avoid having cumbersome notations, especially in the proof of Theorem \ref{thm.1}  and in the infinite dimensional setting. }

\end{remark}
The proof of Theorem \ref{thm.1} is quite lengthy although in principle not complicated. The lengthy calculations that such a proof requires might obscure the simple idea behind it; especially, they might conceal the flexibility of our approach.   In order to clearly  explain the strategy of proof, we gather in  Section \ref{Sec:subs1to2} below  a simple explanation of the principle behind our approach with a sketch of the  proof of Theorem \ref{thm.1} in the simple case $m=2$ and $N=n=1$.  The full proof of Theorem \ref{thm.1} is instead deferred to Section \ref{Sec:subs2to2}. 

\subsection{Strategy of proof of Theorem \ref{thm.1}: combining semigroup and hypocoercivity methods}\label{Sec:subs1to2} 
In this section we fix $n=N=1$ and $m=2$, i.e. we consider a Markov generator on $\R^2$ of the form  \eqref{operator} and we assume that $B$, $Z_0$ and $Z_1:=[B,Z_0]$ span $\R^2$ at each point.\footnote{In many applications one finds that only $Z_0$ and $Z_1$ are actually needed to fully span $\R^2$. See for example \cite{MVII}. } We are interested in determining the time behaviour of the fields $Z_0$ and $Z_1$, along the semigroup $f_t\equiv e^{t L} f$, i.e. we want to study the time behaviour of  $Z_0 f_t$  and $Z_1 f_t$. Notice that in this simple case the quadratic form $Q^{(1)}_t$ defined in \eqref{13B} - which, for the purposes of this section, we will just denote by $Q_t$ - reduces to
$$
Q_t (f_t)= d \lv f_t\rv^2+  a_0 t \lv Z_0 f_t\rv^2 +a_1 t^3 \lv Z_1 f_t \rv^2 + b t^2 (Z_0 f_t) (Z_1 f_t),
$$
where $d,a_0, a_1$ and $b$ are strictly positive constants to be determined later.
 To explain why we use such a time-dependent quadratic form,     let us start with a simple observation: suppose we consider, instead of $Q_t$, the function $\tilde{Q}_t$ defined as follows:
$$
\tilde{Q}_t (f_t) = d \lv f_t\rv^2+  a_0 t \lv Z_0 f_t\rv^2 +a_1 t^3 \lv Z_1 f_t \rv^2 \,.
$$
If we could prove 
\be\label{tildeQ}
\pa_t (\tilde{Q}_tf_t) < 0 \qquad \mbox{for } t \mbox{ in some interval say } [0,T], 
\ee
then  we would be done as the above would imply
$$
\tilde{Q}_tf_t< \tilde{Q}_0f = d \lv f\rv^2 \,\,\Longrightarrow \,\, 
\lv Z_0 f_t\rv^2 < \frac{d a_0^{-1}}{t} \lv  f\rv^2 \mbox{ and }
\lv Z_1 f_t\rv^2 < \frac{d a_1^{-1}}{t^3} \lv  f\rv^2,
$$
for all $t \in [0, T]$.  However, as long as we use the time dependent form $\tilde{Q}_t$,  \eqref{tildeQ} is in general not true. Indeed, roughly speaking, in order to prove \eqref{tildeQ}, one usually needs to prove that
$$
\pa_t (\tilde{Q}_tf_t) < -\kappa \left( \lv Z_0 f_t\rv^2 +
 \lv Z_1 f_t \rv^2\right), \quad \mbox{ for some } \kappa>0.
$$
If we use the form $\tilde{Q}_t$, the negative  terms $- \kappa  \lv Z_1 f_t \rv^2$ will not appear in the expression for $\pa_t (\tilde{Q}_tf_t)$. The mixed term $(Z_0 f_t) (Z_1 f_t)$  is added to the quadratic form precisely to solve this issue. Such a trick was introduced in \cite{Herau2007}  and then pushed forward in \cite{V}. However in both works the authors used quadratic forms which did not contain the pointwise values of the function $f_t$ and its derivatives, but rather the weighted $L^2$ norm of such quantities. It is important to stress that, 
using the  quadratic Young's inequality, i.e.
\be\label{young}
 \forall \, x,y\in \R,\, \mathfrak{d}>0\qquad \qquad
\lv xy\rv \leq \frac{\lv x\rv^2}{2\mathfrak{d}}+\frac{\mathfrak{d}\lv y\rv^2}{2},
\ee
with $\mathfrak{d}$   a constant times a suitable positive 
power of $t$, 
we can show that there exists a suitable choice of the constant $b$ such that $Q_t$ is still  positive. Indeed, choosing $\mathfrak{d}=t/b$, we obtain \be\label{boundbelow}
Q_t(f_t)\geq d \lv f_t\rv^2+ t (a_0- b^2/2)  \lv Z_0 f_t\rv^2 + t^3 
(a_1 -1/2)  \lv Z_1 f_t \rv^2 \geq 0. 
\ee
Hence,  choosing $a_0 > b^2/2$ and $a_1 >1/2$ guarantees the positivity of $Q_t f_t$. 
Unfortunately, even after this modification, it is still the case that the inequality $\pa_t (Q_t f_t)<0$ is in general not true.  We therefore devise  another strategy, which makes use of the classic Bakry-Emery semigroup approach: instead of trying to prove that  $\pa_t ( Q_tf_t)<0$, we show
\be\label{propro}
\pa_{s}(P_{t-s}Q_s(f_s)) <0, \quad \mbox{for } t \mbox{ in some interval  } [0,T].
\ee
Integrating the above inequality in $[0,t]$, we obtain 
\begin{align*}
& P_0 (Q_t(f_t))- P_t (Q_0 f) <0  \\
 \Rightarrow &  \,\,Q_t (f_t) < d \|f \|_{\infty}^2 \quad \mbox{for } t\in [0,T],
\end{align*} 
which, thanks to \eqref{boundbelow},  implies the desired bounds. Notice that in the above we used the contractivity of the Markov semigroup. In general one will just have $Q_t(f_t)< P_t (f_t^2)$. 

 A straightforward calculation shows that proving the property \eqref{propro} reduces to showing 
$$
(-L+ \pa_t) (Q_t (f_t)) <0. 
$$
Proving such an inequality is done by repeatedly using the Young's inequality, in the same way  shown in \eqref{boundbelow}. We now turn to the full  proof of Theorem \ref{thm.1}.

 \subsection{Proof of Theorem \ref{thm.1}}\label{Sec:subs2to2}
Throughout the proof of Theorem  \ref{thm.1} we will often use some elementary  facts, which we gather in Lemma  \ref{lem.1}, Lemma \ref{lem.2} and Lemma \ref{lem.3} below, for the reader's convenience. 

\begin{lemma} \label{lem.1}
For any $n\in\mathbb{N}$ and $\kk\equiv (k_1,..,k_n)$ and any smooth function $f$ the following relations hold true:
\begin{align}
& L \vert \Z_{\kk, n}f \vert^2 -
2\left(L\Z_{\kk, n} f \right)
\left( \Z_{\kk, n}f \right)=
 +2\vert Z_0 \Z_{\kk, n}f \vert^2,  \nonumber\\
& Z_0^2 \vert \Z_{\kk, n}f \vert^2  =
2\left(Z_0^2\Z_{\kk, n} f \right)
\left( \Z_{\kk, n}f \right) 
 +2\vert Z_0 \Z_{\kk, n}f \vert^2,  \nonumber 
\end{align}
If $[Z_N, B]=0$ then
\begin{align}
& [\Z_{\kk, n},B]=-\sum_{1\leq j:k_j\neq N}^{n} 
\Z_{\kk+\mathbf{e}_j,n} . \label{rec1}
\end{align}
The above equality also simply holds if $k_j\neq N$ for all $j=1 \dd N$. Finally, if $c_{0jh}=0$ for all $j$ (i.e. if $[Z_0, Z_j]=0$ for all $j$) then $[\Z_{\kk,n}, L]=[\Z_{\kk,n},B]$ for any $n \geq 1$.
\end{lemma}
\begin{proof}[Proof of Lemma \ref{lem.1}] 
The first relation is a general property of a generator of (sub)diffusion with second order part given by $Z_0$ and the second is a just a different version
of the same. 
Recalling that for any three operators $X, Y$ and $W$, 
\begin{equation}\label{commrule}
[XY,W]=X[Y,W]+[X,W]Y,
\end{equation}
from (\ref{comm}), for $k_1\neq N$, we have
\begin{equation}\label{it1}
[\Z_{\kk, n}, B]= 
Z_{k_1}[Z_{k_2}\cdot{\dots}\cdot Z_{k_{n}},B]-
Z_{k_1+1}Z_{k_2}\cdot{\dots}\cdot Z_{k_{n}} \equiv Z_{k_1}[Z_{k_2}\cdot{\dots}\cdot Z_{k_{n}},B]- \Z_{\kk+\mathbf{e}_1, n}.
\end{equation}
Iterating (\ref{it1}) one obtains (\ref{rec1}). Regarding  the last statement, this can be obtained,  when
$[Z_0,Z_j]=0$ for any $j$, by using \eqref{commrule}.
\end{proof}
\begin{lemma} \label{lem.2}
Let $X$ and $Y$ be first order differential operators and $\ml=X^2+Y$. Assume $\ml$ generates a semigroup such that for any smooth functions
$h$ also  $h_t=e^{t\ml} h$ is smooth. Then for any differential operators $W, V$ (of any order $ l\geq 0$), we have
\begin{align} 
\left(-\ml +\frac{\pa}{\pa_t}\right)  Wh_t \cdot Vh_t &= - X^2  \left( Wh_t \cdot Vh_t\right)- Y\left( Wh_t \cdot Vh_t\right)+(W \ml h_t)(Vh_t) + (W h_t)(V\ml h_t)  \nonumber \\
&=-2  XW h_t  \cdot XV h_t +  ([W,\ml]h_t) \cdot V h_t +  W h_t \cdot ([V,\ml]h_t).\nonumber
\end{align}
%
%
\end{lemma}

\begin{lemma}\label{lem.3}
Suppose 
\[
[Z_{k},Z_0] = -\sum_{0\leq l< k-1}^{N} c_{0kl}Z_{l}\,  .
\]
Then there exist real numbers ${\boldsymbol\eta}_{\kk,\kk',n}\equiv{\boldsymbol\eta}_{\kk,\kk'}$, ${\boldsymbol\zeta}_{\kk,\kk',n}\equiv{\boldsymbol\zeta}_{\kk,\kk'}$, such that
\be
[\Z_{\kk,n},Z_0^2]
 =  \sum_{\kk'}{\boldsymbol\eta}_{\kk,\kk'} Z_0\Z_{\kk',n} 
 +\sum_{\kk'}{\boldsymbol\zeta}_{\kk,\kk'} \Z_{\kk',n}  \nonumber
\ee
with
$0\leq |\kk'|_n < |\kk|_n -1\leq nN$.
\end{lemma}

The proof of the above Lemma \ref{lem.3} can be found after the proof of Theorem \ref{thm.1}.

\begin{proof}[Proof of Theorem \ref{thm.1}] 
We will show that given $n\in\mathbb{N}$, for all $0 \leq l\leq n$ one can choose the coefficients $a_{\kk,l} , b_{\kk,l}, d_l\in(0,\infty)$, so that
$$
\mbox{for all }\,\,  0 \leq l \leq n,  \qquad
\partial_s \left[ P_{t-s}\left(Q_s^{(l)}f_s\right)\right] < 0;
$$
hence, integrating on $[0,t]$, for $t\in(0,T]$ (for some $T>0$ to be determined later), we get
\be 
Q_t^{(l)} f_t  =P_0\left(Q_t^{(l)} f_t\right) < P_t\left(Q_0^{(l)}f_0\right) \equiv d  P_t f^2 .\nonumber
\ee
 
\noindent \label{Simplified Case}
 Because 
$$
\partial_s  \left[P_{t-s}\left(Q_s^{(n)}f_s\right)\right]=P_{t-s}\left(
-LQ_s^{(n)}f_s+\partial_s Q_s^{(n)}f_s\right),
$$
and the semigroup $P_t$ preserves positivity, the whole thing boils down to proving that $\forall n\geq 1$ there exist strictly positive constants 
$
\{ a_{\kk,n}: \, 0\leq|\kk |_n\leq n N\},
$ 
$
\{ b_{\kk,n} : \, 0\leq|\kk |_n\leq n N\,  \mbox{ with } k_1\geq 1  \}
$ and $d_n\in(0,\infty)$ 
such that 
\begin{equation}\label{what_to_prove}
\forall t>0, \qquad\qquad
\left(-L+\partial_t\right)\left(Q_t^{(n)} f_t\right)\, \leq \, 0. \phantom{AAAA}
\end{equation}
In order to streamline the proof we first consider  Assumption  \textup{(\textbf{CR.I})} with $c_j=0$ and $c_{0jh} = 0$ for all $j=1,..,N$, i.e. we first prove 
\eqref{gammabound1}. Later  we will explain how to remove this restriction (at the cost of obtaining bounds that are valid only for small $T$) and obtain \eqref{gammabound}.  

\medskip

$\bullet$ {\bf Proof of \eqref{gammabound1}.}   Suppose that Assumption  \textup{(\textbf{CR.I})} holds  with $c_j=0$ and $c_{0jh} = 0$ for all $j=1,..,N$. 
We will prove (\ref{what_to_prove}) by induction on $n$. 
The inductive basis, i.e. the proof that for  $n=0$ there exists $d\in(0,\infty)$ such that $\forall t>0 \quad \left(-L+\partial_t\right)Q^{(0)}_tf_t\leq 0$, is straightforward. Indeed
$$
\left(-L+\partial_t \right)\vert f_t\vert^2= -Z_0^2 \vert f_t\vert^2-B\vert f_t\vert^2+ 2f_tL f_t=-2 \vert Z_0 f_t\vert^2 \leq 0, 
$$
where we simply used the fact that $Z_0^2$ is a second order differential operator and $B$ is a first order differential operator.

Now we make an inductive assumption that for any $n\geq 1$ and for all $l=1,..,n-1$ there exist strictly positive constants  
$
\{a_{\kk,l}: \, 0\leq|\kk |_l\leq l N\}, 
$ 
$
\{  b_{\kk,l} : \, 0\leq|\kk |_l\leq l N \mbox{ with } k_1\geq 1 \} 
$ and $d_l,\bar d_l\in(0,\infty)$, 
 such that  
\[\,\forall t>0 \qquad 
\left(-L+\partial_t\right)\left(Q_t^{(n-1)} f_t\right)\leq 0,
\]
and 
\[ \forall \, l=1,..,n-1 \qquad \bar d_l\bar \Gamma_t^{(l)} f_t \leq \Gamma_t^{(l)} f_t\, .\] 
Under this inductive  assumption we need to prove  that there exist strictly positive constants  \linebreak 
$\{a_{\kk,n}: \, 0\leq|\kk |_n\leq n N\},$ 
$\{  b_{\kk,n} : \, 0\leq|\kk |_n\leq nN  \mbox{ with } k_1\geq 1 \}$  and $d_n, \bar d_n\in(0,\infty)$  such that 

$$\forall t>0 \qquad
\left(-L+\partial_t\right)
\left(Q^{(n)}_t f_t\right)\leq 0
$$
and 
$$
\bar d_l\bar \Gamma_t^{(n)} f_t \leq \Gamma_t^{(n)} f_t\, .
$$

\noindent 
Because we are assuming that 
$\left(-L+\partial_t\right)\left(Q_t^{(n-1)} f_t\right)\leq 0$, for some appropriate choice of the constants, looking at \eqref{7star},  we only need to study the following quantity: 
\begin{align*}
\left(-L+\partial_t\right)\left(\Gamma^{(n)}_t f_t\right)=&\left(-L+\partial_t\right)
\sum_{|\kk|_n=0}^{nN} a_{\kk,n} t^{2|\kk|_n+n}
\vert \Z_{\kk,n}f_t\vert^2\\
&+\left(-L+\partial_t\right) \sum_{0\leq |\kk|_n: k_1\geq 1}^{nN}
b_{\kk,n} t^{2|\kk|_n+n-1}
 (\Z_{\kk-\mathbf{e}_1,n}f_t)(\Z_{\kk,n}f_t).
\end{align*}
%
We stress that throughout this calculation, we intend for all the constants $b_{N+1, k_2 \dd k_n}$ to be equal to zero.
To further expand the expression on the right hand side of the above, we use  Lemma \ref{lem.1}  together with  Lemma \ref{lem.2}, for our generator (\ref{operator}), and we obtain
\begin{align}
&\left(-L+\partial_t\right)\left(\Gamma^{(n)}_t f_t\right)=\nonumber\\
&-2\sum_{|\kk|_n=0}^{nN} a_{\kk,n} t^{2|\kk|_n+n}
\vert  Z_0 \Z_{\kk,n}f_t\vert^2  \label{I1}\\
&- 2\sum_{0\leq |\kk|_n: k_1\geq 1}^{nN}
b_{\kk,n} t^{2|\kk|_n+n-1}
 ( Z_0\Z_{\kk-\mathbf{e}_1,n}f_t)( Z_0\Z_{\kk,n}f_t) \label{I2}\\
&
+\sum_{|\kk|_n=0}^{nN}  a_{\kk,n} (2|\kk|_n+n)t^{2|\kk|_n+n-1}
\vert \Z_{\kk,n}f_t\vert^2
\label{II1}\\
&
+ \sum_{0\leq |\kk|_n: k_1\geq 1}^{nN}
b_{\kk,n} (2|\kk|_n+n-1) t^{2|\kk|_n+n-2}
 (\Z_{\kk-\mathbf{e}_1,n}f_t)(\Z_{\kk,n}f_t)
\label{II2}\\
&+ 2  \sum_{|\kk|_n=0}^{nN} a_{\kk,n} t^{2|\kk|_n+n}
  \Z_{\kk,n}f_t\cdot   [ \Z_{\kk,n},L]f_t              \label{III1}\\
&+\sum_{0\leq |\kk|_n: k_1\geq 1}^{nN}
b_{\kk,n} t^{2|\kk|_n+n-1} \left\{
 ([\Z_{\kk-\mathbf{e}_1,n},L]f_t) \Z_{\kk,n}f_t   +  (\Z_{\kk-\mathbf{e}_1,n}f_t)[\Z_{\kk,n},L]f_t  
 \right\}. \label{III2}
\end{align}
We control these terms as follows. 
Let us set 
$$[\mathrm{I}]\equiv(\ref{I1})+(\ref{I2}),\quad [\mathrm{II}]\equiv(\ref{II1})+(\ref{II2}),
\quad [\mathrm{III}]\equiv(\ref{III1})+(\ref{III2})
$$ 
and study these addends separately. Recall now  the  quadratic Young's inequality \eqref{young}, which we will repeatedly use. In particular we will choose    $\mathfrak{d}$  in \eqref{young} to be a constant times a suitable positive power of $t$. The time dependent factor will be relevant for bounds involving factors with differential operators of different rank to obtain time dependences of appropriate homogeneity. 
We have
\begin{align}\label{est:I}
[\mathrm{I}]\leq &
-2\sum_{|\kk|_n=0}^{nN} a_{\kk,n} t^{2|\kk|_n+n}
\vert  Z_0 \Z_{\kk,n}f_t\vert^2  \\
&
+ 2  \sum_{0\leq |\kk|_n: k_1\geq 1}^{nN}
b_{\kk,n}\left( b_{\kk,n}  t^{2|\kk|_n+n-2}
 | Z_0\Z_{\kk-\mathbf{e}_1,n}f_t|^2+
t^{2|\kk|_n+n} |Z_0\Z_{\kk,n}f_t|^2 /b_{\kk,n} \right).
 \label{Isecond}
\end{align}
We look separately at the terms with $k_1=0$ and at  the terms with $k_1>0$. In doing so, we need to notice that  terms of the form $\lv Z_0^2Z_{k_2}\cdot{\dots}\cdot Z_{k_{n}}f_t\rv$ (i.e. those with $k_1=0$) come from  (\ref{est:I}) when $k_1=0$ but also from the first addend in $(\ref{Isecond})$ when $k_1=1$. Hence  
\begin{align*}
[\mathrm{I}]&\leq  2 \sum_{k_2,..,k_n=0}^N
\left(-a_{0,k_2,\dots,k_{n}}+b^2_{1,k_2,\dots,k_{n}}
\right)t^{2|\kk-k_1\e_1 |_n+n}\lv Z_0^2\Z_{k_2,..,k_{n}}f_t \rv^2\\
&+ 2 \sum_{0\leq |\kk|_n : k_1\geq 1}^{nN}\left(-a_{k_1,k_2,\dots,k_{n}}+b^2_{k_1+1,k_2,\dots,k_{n}}
+1\right)t^{^{2|\kk|_n+n}}
\lv Z_0\Z_{\kk,n} f_{t}\rv^2, 
\end{align*}
with the understanding that $b_{k_j+1}=0$ if $k_j=N$. 
Thus we can make term [I] nonpositive choosing
\begin{equation}
\label{cond.1}
\frac12 a_{k_1,k_2,\dots,k_{n}} > b^2_{k_1+1,k_2,\dots,k_{n}}
+1, \qquad k_1\geq 0.
\end{equation}
We can and do assume that similar \textsl{strict} inequality is satisfied on induction level $n-1$. 
We repeat the same kind of procedure for $[\mathrm{II}]$, applying first Young's inequality and then looking \nopagebreak[0] separately at the two cases $k_1=0$ and $k_1>0$ to get
\begin{align*}
[\mathrm{II}]&\leq 
\sum_{ |\kk|_n\geq 0}^{nN} 
a_{\kk,n} (2|\kk|_n+n) t^{2|\kk|_n+n-1}
 \lv \Z_{\kk,n} f_t \rv^2\\
&+\frac12\sum_{0\leq |\kk|_n : k_1\geq 1}^{nN}  b_{\kk,n} 
(2|\kk|_n+n-1)
\left( \varepsilon t^{2|\kk|_n+n-3}
\lv \Z_{\kk-\e_1,n} f_t \rv^2 
+ \varepsilon^{-1} t^{2|\kk|_n+n-1}
\lv \Z_{\kk,n} f_t \rv^2   \right)\\
&\leq (2nN+n)\sum_{ |\kk|_n\geq 0}^{nN} 
\left(a_{0,k_2,\dots,k_{n}} + \frac{\varepsilon}2 b_{1,k_2,\dots,k_{n}}\right)
t^{2|\kk|_n+n-1}\lv Z_0\Z_{k_2,.., k_{n}}f_t \rv^2 \\
&+ (2nN+n)\sum_{0\leq |\kk|_n : k_1\geq 1}^{nN} 
\left( a_{k_1,\dots,k_{n}}+{\varepsilon}^{-1}b_{k_1+1,k_2,\dots,k_{n}}+
\frac{\varepsilon}2 b_{k_1,..,k_n}   \right)
t^{2|\kk|_n+n-1}\lv \Z_{\kk,n} f_t \rv^2, 
\end{align*}
 with ${\varepsilon}< (2nN+n)^{-1}$  to be chosen later.
\nopagebreak[0]
Before turning to [\textup{III}] notice that, because of our current simplified assumption, we have 
$[Z_N,B]= [Z_0, Z_j]=0$ for all $j$; therefore we have $[\Z_{\kk,n} , L]=[\Z_{\kk,n} ,B]$ (see last statement of Lemma \ref{lem.1}) and  by (\ref{rec1})
\[[\Z_{\kk,n}, L] = -\sum_{1\leq j:k_j\neq N}^{n} 
\Z_{\kk+\mathbf{e}_j,n} .
\] 
Using this, we have
\begin{align*}
[\mathrm{III}] &= -2  \sum_{|\kk|_n=0}^{nN} a_{\kk,n} t^{2|\kk|_n+n}
  \Z_{\kk,n}f_t\cdot   \left(\sum_{1\leq j:k_j\neq N}^{n} 
\Z_{\kk+\e_j,n} f_t  \right)            \\
&\quad -\sum_{0\leq |\kk|_n: k_1\geq 1}^{nN}
b_{\kk,n} t^{2|\kk|_n+n-1} \left(
\left( \sum_{1\leq j:k_j \neq N}^{n} 
\Z_{\kk-\e_1+\e_j,n} f_t \right) \Z_{\kk,n}f_t    \right)   \\
&\quad -\sum_{0\leq |\kk|_n: k_1\geq 1}^{nN}
b_{\kk,n} t^{2|\kk|_n+n-1} \left(
  \Z_{\kk-\mathbf{e}_1,n}f_t \left( \sum_{1\leq j:k_j\neq N}^{n} 
\Z_{\kk+\mathbf{e}_j,n} f_t\right)  
 \right) \\
 & \equiv [\mathrm{III}a]+[\mathrm{III}b]+[\mathrm{III}c].
 \end{align*}
 Each of the sums on the right hand side is bounded using (\ref{young})
 to adjust the power of $t$ according to the rank of the corresponding differential operators as follows:
 \begin{align*}
[\mathrm{III}a] &\leq    \sum_{|\kk|_n=0}^{nN} a_{\kk,n} t^{2|\kk|_n+n}  \left(
a_{\kk,n}  t^{-1}\lv\Z_{\kk,n}f_t\rv^2 +
  a_{\kk,n}^{-1} n  t \sum_{1\leq j:k_j\neq N}^{n}  
\lv\Z_{\kk+\e_j,n} f_t\rv^2   \right)            \\
&\leq  \sum_{|\kk|_n=0}^{nN} \left( a_{\kk,n}^2 +n^2 \right) t^{2|\kk|_n+n-1}  \lv\Z_{\kk,n}f_t\rv^2
\end{align*}
Also,  
\begin{align*}
 [\mathrm{III}b]  &=  -\sum_{0\leq |\kk|_n: k_1\geq 1}^{nN}
b_{\kk,n} t^{2|\kk|_n+n-1} 
 \lv\Z_{\kk,n}f_t \rv^2     \\
   & -\sum_{0\leq |\kk|_n: k_1\geq 1}^{nN}
b_{\kk,n} t^{2|\kk|_n+n-1} \left(
\left( \sum_{2\leq j:k_j\neq N}^{n} 
\Z_{\kk-\e_1+\e_j,n} f_t \right) \Z_{\kk,n}f_t    \right) \\
 &\leq  -\sum_{0\leq |\kk|_n: k_1\geq 1}^{nN}
b_{\kk,n} t^{2|\kk|_n+n-1} 
 \lv\Z_{\kk,n}f_t \rv^2   \\
 & +\frac12 \sum_{0\leq |\kk|_n: k_1\geq 1}^{nN}
b_{\kk,n} t^{2|\kk|_n+n-1} \left(
(n-1) \sum_{2\leq j:k_j\neq N}^{n} 
\lv\Z_{\kk-\e_1+\e_j,n} f_t \rv^2 +\lv\Z_{\kk,n}f_t \rv^2   \right)  \\
 & =  \sum_{0\leq |\kk|_n: k_1\geq 1}^{nN}
\left( -\frac12b_{\kk,n} 
 +\frac{(n-1) }2  
\sum_{2\leq j:k_j\neq N}^{n} b_{\kk+\e_1-\e_j,n} 
  \right) t^{2|\kk|_n+n-1}  \lv\Z_{\kk,n}f_t \rv^2\, . 
\end{align*} 
  \begin{align*}
 [\mathrm{III}c] &\leq
 \frac12 \sum_{0\leq |\kk|_n: k_1\geq 1}^{nN}
b_{\kk,n} t^{2|\kk|_n+n-1} \left( b_{\kk,n} t^{-2}
 \lv \Z_{\kk-\e_1,n}f_t \rv^2 +
 n b_{\kk,n}^{-1} t^{2}\sum_{1\leq j:k_j\neq N}^{n} 
\lv\Z_{\kk+\e_j,n} f_t\rv^2
 \right)  \\
 & = \sum_{0\leq |\kk|_n: 0\leq k_1\leq N-1}^{nN}
\frac{b_{\kk+\e_1,n}^2}2  t^{2|\kk|_n+n-1}  
 \lv \Z_{\kk,n}f_t \rv^2 +
   \frac{n}2 \sum_{0\leq |\kk|_n: k_1\geq 1}^{nN} 
t^{2|\kk|_n+n+1 }\sum_{1\leq j:k_j\neq N}^{n} 
\lv\Z_{\kk+\e_j,n} f_t\rv^2
 \\
 & \leq  \sum_{0\leq |\kk|_n: k_1=0}^{nN}
\frac{b_{\kk+\e_1,n}^2}2  t^{2|\kk|_n+n-1}  
 \lv Z_0\Z_{k_2,..,k_n,n}f_t \rv^2 
 +
\sum_{0\leq |\kk|_n: k_1\geq 1}^{nN}
\left(\frac{b_{\kk+\e_1,n}^2}2 + \frac{n^2}2\right)  t^{2|\kk|_n+n-1}  
 \lv \Z_{\kk,n}f_t \rv^2.
\end{align*}
 We now combine and reorganize the bounds of  [\textup{III}] separating terms with $k_1=0$ which need to be offset by level $(n-1)$,
 (if necessary scaling coefficients of $Q_s^{(n-1)}$ by a positive  sufficiently large constant), and the ones with $k_1\geq 1$ which can only be offset by negative contribution in [\textup{III}b], as follows.
 \begin{align*}
[\mathrm{III}] &\leq 
\sum_{|\kk|_n=0}^{nN} \left( a_{\kk,n}^2 +n^2 \right) t^{2|\kk|_n+n-1}  \lv\Z_{\kk,n}f_t\rv^2  \\
& +\sum_{0\leq |\kk|_n: k_1\geq 1}^{nN}
\left( -\frac12b_{\kk,n} 
 +\frac{(n-1) }2  
\sum_{2\leq j:k_j\neq N}^{n} b_{\kk+\e_1-\e_j,n} 
  \right) t^{2|\kk|_n+n-1}  \lv\Z_{\kk,n}f_t \rv^2 \phantom{AAAAA} \\
& +\!\!\!\!   \sum_{0\leq |\kk|_n: k_1=0}^{nN} \!\!\!\!
\frac{b_{\kk+\e_1,n}^2}2\,  t^{2|\kk|_n+n-1}  
 \lv Z_0\Z_{k_2,..,k_n,n}f_t \rv^2 
 +\!\!\!\!
\sum_{0\leq |\kk|_n: k_1\geq 1}^{nN}
\left(\frac{b_{\kk+\e_1,n}^2}2 + \frac{n^2}2\right)  t^{2|\kk|_n+n-1}  
 \lv \Z_{\kk,n}f_t \rv^2.
\end{align*} 
Therefore
 \begin{align*}
 [\mathrm{III}] &\leq 
 \sum_{0\leq |\kk|_n: k_1=0}^{nN}  \left(a_{\kk,n}^2 +n^2+
\frac{b_{\kk+\e_1,n}^2}2  \right)  t^{2|\kk|_n+n-1} 
 \lv Z_0\Z_{k_2,..,k_n,n}f_t \rv^2   \\
& + \!\!
\sum_{0\leq |\kk|_n: k_1\geq 1}^{nN} \!\!
\left(a_{\kk,n}^2 +n^2 \!
-\frac12b_{\kk,n} 
 +\frac{(n-1) }2  \!\!\! 
\sum_{2\leq j:k_j\neq N}^{n} b_{\kk+\e_1-\e_j,n}  
+\frac{b_{\kk+\e_1,n}^2}2 + \frac{n^2}2\right)   
t^{2|\kk|_n+n-1}  \lv \Z_{\kk,n}f_t \rv^2  . 
\end{align*}
 Combining this with bound  of  [\textup{II}] and  separating the terms with $k_1=0$ (which need to be offset by level $(n-1)$) and the ones with
 $k_1\geq 1$ which can only be offset by the  negative contribution in [\textup{III}b], we obtain
\be 
[\mathrm{II}]+ [\mathrm{III}]\leq  
\sum_{0\leq |\kk|_n: k_1=0}^{nN}  
 \mathcal{A}_{k_2,\dots,k_{n}} 
 t^{2|\kk|_n+n-1} 
 \lv Z_0\Z_{k_2,..,k_n,n}f_t \rv^2 \\
+ \sum_{0\leq |\kk|_n : k_1\geq 1}^{nN} 
 \mathcal{B}_{\kk,n}
t^{2|\kk|_n+n-1}\lv \Z_{\kk,n} f_t \rv^2, \nonumber
\ee
where
$
 \mathcal{A}_{k_2,\dots,k_{n}}\equiv  
(2nN+n)\left(a_{0,k_2,\dots,k_{n}} + \frac{\varepsilon}2 b_{1,k_2,\dots,k_{n}}
\right) 
+\left( a_{0,k_2,\dots,k_{n}}^2 +n^2+
\frac{b_{1,k_2,\dots,k_{n}}^2}2 \right) 
$ and 
\begin{align}
 \mathcal{B}_{\kk,n}&\equiv 
(2nN+n)\left( a_{k_1,\dots,k_{n}}+{\varepsilon}^{-1}b_{k_1+1,k_2,\dots,k_{n}}+
\frac{\varepsilon}2 b_{k_1,..,k_n}   \right) \nonumber \\  
\qquad &+  a_{\kk,n}^2 +n^2 
-\frac12b_{\kk,n} 
 +\frac{(n-1) }2  
\sum_{2\leq j:k_j\neq N}^{n} b_{\kk+\e_1-\e_j,n}  
+\frac{b_{\kk+\e_1,n}^2}2 + \frac{n^2}2    
\nonumber\\
&= -\frac12\left( 1 - \varepsilon n(2N+1)\right) b_{\kk,n} +
(2nN+n)\left( a_{\kk,n}+{\varepsilon}^{-1}b_{\kk+\e_1,n}\right) 
 \nonumber\\  
\qquad &+  a_{\kk,n}^2 +n^2 
 +\frac{(n-1) }2  
\sum_{2\leq j:k_j\neq N}^{n} b_{\kk+\e_1-\e_j,n}  
+\frac{b_{\kk+\e_1,n}^2}2 + \frac{n^2}2.    \nonumber
\end{align}
\label{p.10} We note that the terms involving $\mathcal{A}_{k_2,\dots,k_{n}}$ can be offset by $Q^{(n-1)}_t$ (possibly at a cost of redefining its coefficients by multiplying them by a sufficiently large positive constant). On the other hand choosing
$\varepsilon n(2N+1)<1$  and the coefficients so that we have
\begin{align*}
 \mathcal{B}_{\kk,n} &\leq -\frac12\left( 1 - \varepsilon n(2N+1)\right) b_{\kk,n} +(2nN+n)\left( a_{\kk,n}+{\varepsilon}^{-1}b_{\kk+\e_1,n}\right)  \\  
 &+  a_{\kk,n}^2 +n^2 
 +\frac{(n-1) }2  
\sum_{2\leq j:k_j\neq N}^{n} b_{\kk+\e_1-\e_j,n}  
+\frac{b_{\kk+\e_1,n}^2}2 + \frac{n^2}2 \, < \, 0,
\end{align*}
this and \eqref{cond.1} can be represented as the following condition
\begin{equation} \label{fruit1}
a_{\kk,n} >> b^2_{\kk+\e_1,n},\qquad
b_{\kk,n} >> a_{\kk,n}, \qquad b_{\kk,n} >> b_{\kk+\e_1,n}, \qquad b_{\kk,n} >> b_{\kk+\e_1-\e_j,n}, \,\, j\geq 2\, ,
\end{equation} 
with a convention $x>>y$ meaning $x\geq Cy^2+C'$ with some constants $C,C'\in[1,\infty)$ sufficiently large and  possibly dependent on $n$, but not on $\kk$.
In this way we get (\ref{what_to_prove}).

We are now left with proving the following  statement
\begin{equation} \label{bGamma}
\bar d_n \bar \Gamma^{(n)}_sf_s\leq \Gamma^{(n)}_sf_s,
\end{equation}
for some $\bar d_n\in(0,\infty)$.
     To this end, we will use the lower bound implied
 by the quadratic Young inequality  
 \be  \label{younglower}
  -\frac{\lv x\rv^2}{\mathfrak{d}}
  - \mathfrak{d}\lv y\rv^2  \leq  xy ,\qquad \forall x,y\in \R,\, \mathfrak{d}>0. \nonumber
 \ee
We separate the terms with $k_1=0$ from 
the terms with $k_1 > 0$ so we get
\begin{align*}
\Gamma_t^{(n)}f_t &=   \sum_{k_2,..,k_n=0}^N\left[
a_{0,k_2 \dd k_n} t^{2\left(\sum_{j=2}^{n} k_j\right)+n}
\lv Z_0Z_{k_2} \cdot{ \dots } \cdot Z_{k_{n}} f_t\rv^2
\right.\\
&  + b_{1,k_2 \dd k_{n}}t^{2\left(\sum_{j=2}^{n}k_j \right)+n+1} 
\left( Z_0Z_{k_2} \cdot{ \dots } \cdot Z_{k_{n}} f_t \right)
\left( Z_1Z_{k_2} \cdot{ \dots } \cdot Z_{k_{n}} f_t \right)
\left. \right]\\
&+\!\!\!\sum_{k_2,..,k_n=0}^N \skii a_{k_1,..,k_n} t^{2\lv \kk \rv_{n}+n} 
\lv Z_{k_1} \cdot{ \dots } \cdot Z_{k_{n}} f_t\rv^2\\
&+ \!\!\!\sum_{k_2,..,k_n=0}^N \sum_{k_1=2}^N b_{k_1,..,k_n}  t^{2\lv \kk \rv_{n}+n-1}  \left( Z_{k_1-1}Z_{k_2} \cdot{ \dots } \cdot Z_{k_{n}} f_t \right) 
 \left( Z_{k_1}Z_{k_2} \cdot{ \dots } \cdot Z_{k_{n}} f_t \right).
\end{align*}
We now use the inequality  \eqref{younglower} on the second and fourth line of the above equations, with   $\mathfrak{d}=\frac{t}{b_{1,k_2 \dd k_{n}}}$ and $\mathfrak{d}=\frac{t}{b_{k_1,k_2 \dd k_{n}}}$, respectively and obtain

 \begin{align*}
 \Gamma_t^{(n)}f_t &  \geq  \sum_{k_2,..,k_n=0}^N\left[
a_{0,k_2 \dd k_{n}} t^{2\left(\sum_{j=2}^{n} k_j\right)+n}
\lv Z_0Z_{k_2} \cdot{ \dots } \cdot Z_{k_{n}} f_t\rv^2
\right.\\
& \left. + t^{2\left(\sum_{j=2}^{n}k_j \right)+n+1}  \left(
-\frac{ b_{1,k_2 \dd k_{n}}^2\lv Z_0Z_{k_2} \cdot{ \dots } \cdot Z_{k_{n}} f_t \rv^2}{t}- t
\lv Z_1Z_{k_2} \cdot{ \dots } \cdot Z_{k_{n}} f_t \rv^2 \right)
 \right]\\
&+\!\!\!\sum_{k_2,..,k_n=0}^N \skii a_{k_1,..,k_n} t^{2\lv \kk \rv_{n}+n} 
\lv Z_{k_1} \cdot{ \dots } \cdot Z_{k_{n}} f_t\rv^2\\
&+ \!\!\!\sum_{k_2,..,k_n=0}^N \sum_{k_1=2}^N   t^{2\lv \kk \rv_{n}+n-1} \left( -\frac{ b_{k_1,..,k_n}^2\lv Z_{k_1-1}Z_{k_2} \cdot{ \dots } \cdot Z_{k_{n}} f_t \rv^2 }{t}
 -t \lv Z_{k_1}Z_{k_2} \cdot{ \dots } \cdot Z_{k_{n}} f_t \rv^2\right)\\
 & \geq  \sum_{k_2,..,k_n=0}^N \left(a_{0,k_2 \dd k_{n}}- b_{1,k_2 \dd k_{n}}^2\right)
 t^{2\left(\sum_{j=2}^{n} k_j\right)+n} \lv Z_0Z_{k_2} \cdot{ \dots } \cdot Z_{k_{n}} f_t\rv^2\\
&+ \sum_{k_2,..,k_n=0}^N\skii\left(a_{k_1,..,k_n}- b_{k_1+1,k_2 \dd k_{n}}^2-1\right)
 t^{2\lv \kk \rv_{n}+n} \lv Z_{k_1} \cdot{ \dots } \cdot Z_{k_{n}} f_t\rv^2.
 \end{align*}
Because of \eqref{fruit1}, $a_{k_1,..,k_n}- Cb_{k_1+1,k_2 \dd k_{n}}^2-C' >0$, with some $C,C'\geq 1$ so one can choose
$ c_{n} >0$ so that the desired bound \eqref{bGamma} is satisfied.
This ends the proof of the simplified case, i.e the proof of \eqref{gammabound1}. 

\medskip

$\bullet$ {\bf Proof of \eqref{gammabound}.}   We now turn to the proof of \eqref{gammabound}, i.e. we remove our simplifying assumption. In this case the expression \eqref{I1}-\eqref{III2} remains unaltered, as well as  the analysis of the terms $[\mathrm{I}]$ and $[\mathrm{II}]$, as we used our simplifying assumption only to estimate $[\mathrm{III}]$. We therefore concentrate on the terms  $[\mathrm{III}]$.

Note that if  in Assumption \textup{(\textbf{CR.I})} $c_{0jh}=0$ for all $j$ but $c_j\neq 0$,
then    \eqref{rec1} no longer holds. More precisely, if $[Z_0, Z_j]=0$ for all $j$ then it is still true that 
$[\Z_{\kk,n}, B]=[\Z_{\kk,n}, L]$, but in this case \eqref{rec1} needs to be modified to take into account $[Z_N, B]\neq 0$. So, 
 when we expand the expression for $[\mathrm{III}]$,  we get the following additional terms:
 \begin{align*}\label{AdTerms}
& [\mathrm{ATI}]=-\sum_{i=0,..,N} c_i \left(2
\sum_{0\leq |\kk|_n}^{nN} \sum_{j=1,..,n} \delta_{k_j,N} a_{\kk,n} t^{2|\kk|_n+n} \Z_{\kk,n}f_t
\cdot  \Z_{\kk+(i-N)\e_j,n} \right.\\
& 
+ \left. \sum_{0\leq |\kk|_n: k_1\geq 1}^{nN} \sum_{j=1,..,n} \delta_{k_j,N}
b_{\kk,n} t^{2|\kk|_n+n-1}
\left( (  \Z_{\kk-\mathbf{e}_1+(i-N)\e_j,n}f_t)(  \Z_{\kk,n}f_t) +(  \Z_{\kk-\mathbf{e}_1,n}f_t)(  \Z_{\kk+(i-N)\e_j,n}f_t) \right)  \right)
 \end{align*}
Since by our assumption $c_N\geq 0$, we either get additional negative term (when $i=N$) with coefficient $c_N a_{\kk,n}  t^{2|\kk|_n+n}$ which can be used to beat those coming from the second sum with mixed terms, or we can apply quadratic Young inequality
to get terms as before but with a higher power of $t$
which for sufficiently small time do not change inequality obtained before in the simplified case.
 \label{p.13}
Now we discuss the general case, $c_j \neq 0$,  $c_{0jh} =0$ for $h\geq j-1$ and not all of them are equal to zero. In this case it is no longer true that $[\Z_{\kk,n},L]=[\Z_{\kk,n},B]$, which is why we need to use Lemma \ref{lem.3} to study $[\mathrm{III}]$.
Using such a lemma we find that,  together with the terms in  [$\mathrm{ATI}$], we also have the following additional contributions to $[\mathrm{III}]$:
\begin{align}
[\mathrm{ATII}]&=  2\sum_{|\kk|_n=0}^{nN} a_{\kk,n} t^{2|\kk|_n+n}
 \Z_{\kk,n}f_t [\Z_{\kk,n},Z_0^2]f_t \nonumber \\ 
 &+\sum_{0\leq |\kk|_n: k_1\geq 1}^{nN}
b_{\kk,n} t^{2|\kk|_n+n-1} \left(
 ([\Z_{\kk-\mathbf{e}_1,n},Z_0^2]f_t) \Z_{\kk,n}f_t   +  (\Z_{\kk-\mathbf{e}_1,n}f_t)[\Z_{\kk,n},Z_0^2]f_t   \right) \nonumber \\ 
%
&=2\sum_{|\kk|_n=0}^{nN} a_{\kk,n} t^{2|\kk|_n+n}
 \Z_{\kk,n}f_t \left( \sum_{|\kk'|_n<|\kk|_{n}-1}{\boldsymbol\eta}_{\kk,\kk'} Z_0 \Z_{\kk',n}   f_t  \right) \nonumber \\
&  + 2\sum_{|\kk|_n=0}^{nN} a_{\kk,n} t^{2|\kk|_n+n}
 \Z_{\kk,n}f_t \left( \sum_{|\kk'|_n<|\kk|_{n}-1}{\boldsymbol\zeta}_{\kk,\kk'} \Z_{\kk',n} f_t  \right)
\nonumber\\
&+ \!\!\!\!\sum_{0\leq |\kk|_n: k_1\geq 1}^{nN}
b_{\kk,n} t^{2|\kk|_n+n-1} 
\left(\sum_{|\kk'|_n<|\kk-\e_1|_n-1} \!\!{\boldsymbol\eta}_{\kk,\kk'} Z_0 \Z_{\kk',n}f_t  \cdot \Z_{\kk,n}f_t 
 +\sum_{|\kk'|_n<|\kk-\e_1|_n-1}\!\! {\boldsymbol\zeta}_{\kk,\kk'} \Z_{\kk',n} f_t \cdot \Z_{\kk,n}f_t  \right)
 \nonumber \\
 &+ \!\!\!\! \sum_{0\leq |\kk|_n: k_1\geq 1}^{nN}
\!\!\!   b_{\kk,n} t^{2|\kk|_n+n-1}  
 \left(\sum_{|\kk'|_n<|\kk|_n-1} \!\!\!\!{\boldsymbol\eta}_{\kk,\kk'} (\Z_{\kk-\mathbf{e}_1,n}f_t)\cdot Z_0\Z_{\kk',n} f_t 
 + \!\!\!\! \sum_{|\kk'|_n<|\kk|_n-1}\!\!\!\! {\boldsymbol\zeta}_{\kk,\kk'}  (\Z_{\kk-\mathbf{e}_1,n}f_t)\cdot \Z_{\kk',n} f_t \right). \nonumber
\end{align}
Because of our restriction on $|\kk'|$, all new terms come with a higher power of $t$ and therefore for sufficiently small time they can be offset by the principal terms discussed in the first stage (when all  $c_j$ and $c_{0kl}$ were assumed to be zero). 
 The proof is concluded once we observe that in order to prove the lower bound \eqref{bGamma}, we did not use the simplified form of Assumption ({\bf CR.I}) and hence such a bound still holds in this general case.   
\end{proof}
\begin{proof}[Proof of Lemma \ref{lem.3} ]
Observe that, with $\{X,Y\}\equiv XY+YX=2XY-[X,Y]$,  we have
\begin{align*} 
[Z_{k_j},Z_0^2] & = \{Z_0,[Z_{k_j},Z_0]\}=  -\sum_{l_j=0}^{N} c_{0 k_j l_j}\{Z_0,Z_{l_j}\}  \\
& =  -2\sum_{l_j=0}^{N} c_{0 k_j l_j} Z_0Z_{l_j} +  \sum_{l_j=0}^{N}
\gamma_{k_jl_j}Z_{l_j},  
\end{align*}
with
\[\gamma_{k_jl_j}\equiv \sum_{l=0}^{N} c_{0k_jl} c_{0ll_j}. \]
Using the commutator relation (\ref{commrule}) and  the above, we get   for $0\leq k_1,..,k_n\leq N$ 
\begin{align}
[Z_{\kk,n},Z_0^2]
&=\sum_{j=1}^{n}Z_{k_1}\cdot{\dots}\cdot Z_{k_{j-1}}[Z_{k_j},Z_0^2]Z_{k_{j+1}}\cdot{\dots}\cdot Z_{k_{n}}\nonumber\\
&=-2 \sum_{j=1}^{n}\sum_{l_j=0}^{N} c_{0k_jl_j} Z_{k_1}\cdot{\dots}\cdot Z_{k_{j-1}}Z_0
Z_{l_j} Z_{k_{j+1}}\cdot{\dots}\cdot Z_{k_{n}} \nonumber\\
& +\sum_{j=1}^{n}\sum_{l=0}^{N}\gamma_{k_jl} 
Z_{k_1}\cdot{\dots}\cdot Z_{k_{j-1}}Z_{l}Z_{k_{j+1}}\cdot{\dots}\cdot Z_{k_{n}}. \nonumber
\end{align}
We repeat the commutation process involving  the operator $Z_0$ until we bring it to the left. In this way we obtain
\be  
[\Z_{\kk,n},Z_0^2]
 =  \sum_{\kk'}{\boldsymbol\eta}_{\kk,\kk'} Z_0\Z_{\kk',n} 
 +\sum_{\kk'}{\boldsymbol\zeta}_{\kk,\kk'} \Z_{\kk',n}   \nonumber
\ee
with the following linear operators 
$$
\sum_{\kk'}{\boldsymbol\eta}_{\kk,\kk'} \Z_{\kk',n}\equiv -2 \sum_{j=1}^{n}\sum_{l_j=0}^{N} c_{0k_jl_j}  Z_{k_1}\cdot{\dots}\cdot Z_{k_{j-1}}
Z_{l_j} Z_{k_{j+1}}\cdot{\dots}\cdot Z_{k_{n}} 
$$
and
\begin{align*}
\sum_{\kk'}{\boldsymbol\zeta}_{\kk,\kk'} Z_{\kk',n}
&\equiv 
+ 2 \sum_{j=2}^{n}\sum_{i=2}^{j-1}\sum_{l_i,l_j=0}^{N} c_{0k_il_i} c_{0k_jl_j} Z_{k_1}\cdot{\dots}Z_{k_{i-1}} 
Z_{l_i} Z_{k_{i+1}} \cdot{\dots}\cdot Z_{k_{j-1}} 
Z_{l_j} Z_{k_{j+1}}\cdot{\dots}\cdot Z_{k_{n}} \nonumber\\
&\qquad +\sum_{j=2}^{n}\sum_{l_i,l_j=0}^{N} c_{0k_1l_1} c_{0k_jl_j} Z_{l_1}Z_{k_2}\cdot{\dots}\cdot Z_{k_{j-1}} 
Z_{l_j} Z_{k_{j+1}}\cdot{\dots}\cdot Z_{k_{n}}\nonumber \\
&\qquad +\sum_{j=1}^{n}\sum_{l_j=0}^{N}\gamma_{k_jl_j} 
Z_{k_1}\cdot{\dots}\cdot Z_{k_{j-1}}Z_{l_j}Z_{k_{j+1}}\cdot{\dots}\cdot Z_{k_{n}} .\nonumber
\end{align*}
Finally we note that because of our assumption on $c_{0,j,h}$, the summation over $\kk'$ is restricted by a condition $|\kk'|<|\kk|-1$. 
 \end{proof} 
\section{Infinite dimensional semigroups} 
\label{S.3:Infinite dimensional semigroups}
From this section on we focus on infinite dimensional dynamics on $(\R^m)^{\ZZ^d}$. 
The present section is organized as follows: in Section \ref{infinitDim} we present the setting and notation used in this infinite dimensional context. In view of the heavily computational nature of this part of the paper,  Section \ref{infinitDim} is complemented with Subsection \ref{subsubsec3}, which explains the strategy used throughout this section in a simplified scenario. In Section \ref{sec:exi} we prove the well posedness of the infinite dimensional dynamics generated by the operator \eqref{genlambda}   (Theorem \ref{existencethm}) and in Section \ref{sec:smoothingng}  the smoothing properties of the associated infinite dimensional semigroup (Theorem \ref{thm:smooth}).  Section \ref{sec:strsec} provides the preliminary estimates needed to prove the results of Section  \ref{sec:exi} and Section \ref{sec:smoothingng}, in particular finite speed of propagation of information type of bounds. 
 \subsection{Setting and notation} \label{infinitDim}
Let us first introduce the relevant spaces and the metrics that they are endowed with. 
The set $\mathbb{Z}^d$, $d \in \mathbb{N}$ with a distance ${dist} (x, y) \equiv \sum_{l = 1}^d | x^l - y^l |$, will be called a lattice (here $x^l$ is just the $l-$th component of $x \in \mathbb{Z}^d$). If a set $\Lambda \subset \mathbb{Z}^d$ is finite, we
denote that by $\Lambda \subset \subset \mathbb{Z}^d$. If $\Lambda \subset \subset \mathbb{Z}^d$ and $x \in \mathbb{Z}^d$ we denote by $d(x, \Lambda)$
the length of the shortest tree connecting each component of $x$ and $\Lambda (f)$. The space $\R^m$ is instead endowed with a metric $\mathbf{d}$. 

 Let $\Omega \equiv (\mathbb{R}^m)^{\mathbb{Z}^d}$. For a set $\Lambda \subset \subset
\mathbb{Z}^d$ and $\omega \equiv (\omega_x \in \mathbb{R}^m)_{x \in
\mathbb{Z}^d} \in \Omega$ we define its projection $\omega_{\Lambda} \equiv
(\omega_x \in  \mathbb{R}^m)_{x \in \Lambda}$ and set $\Omega_{\Lambda}\equiv (\R^m)^{\Lambda}$. A smooth
function $f : \Omega \to \mathbb{R}$ is called a cylinder function iff there
exists a set $\Lambda \subset \subset \mathbb{Z}^d$ and a smooth function
$\phi_{\Lambda} : \Omega_{\Lambda} \to \mathbb{R}$ such that 
$f (\omega) =\phi_{\Lambda} (\omega_{\Lambda})$. The smallest set for which such
representation is possible for a given cylinder function $f$ is denoted by
$\Lambda (f)$. We will then say that $f$ is localized in $\Lambda$. It is
known  (see e.g. \cite{GZ}) that  the set of cylinder functions is dense in the set of
continuous functions on $\Omega$.
 
If $Z$ is a differential operator in $\mathbb{R}^m$, we denote by  $Z_x$ an isomorphic copy of the operator $Z$ acting only  on the  variable $\omega_x$, i.e. $Z_x$ is a copy of $Z$ acting on the copy of $\R^m$  placed at $x \in \ZZ^d$. 
In particular we will consider families of first order operators $D_x, Y_{\alpha, x}$, $x \in \mathbb{Z}^d$ and $\alpha \in I$ for some finite index set $I$, which are isomorphic copies of operators at the origin $x_0 \equiv 0$.  In other words, $D$ and $\{Y_{\alpha}\}_{\alpha \in I}$ are first order operators on $\R^m$; $D_x$ and $\{Y_{\alpha, x}\}_{\alpha \in I}$ are, 
for every $x \in \ZZ^d$, copies of $D$ and $\{Y_{\alpha}\}_{\alpha \in I}$, respectively,  acting on the copy of $\R^m$ placed at $x \in \ZZ^d$. 

We will assume the following commutation relations 
\begin{assumption}[\textbf{GCR}] For any $x,y\in\ZZ^d$ we have:
\begin{itemize}
 \item If $x \neq y$, then 
 $$[Y_{\alpha ,x} , Y_{\beta,y} ]=
[Y_{\alpha,x},D_y]=0,  \qquad  \mbox{for any } \alpha, \beta \in I; $$ 
\item For every $\alpha \in I$, and $x \in \mathbb{Z}^d$
 \begin{align*}
  &
  [Y_{\alpha,x} ,D_x]
  ={\kappa}_{\alpha} Y_{\alpha,x}, 
   {\hspace{2em}} {\kappa}_{\alpha} \geq 0\\
  & [Y_{\alpha,x} ,Y_{\beta ,x} ]
  =\sum_{\gamma \in I} c_{\alpha \beta \gamma} Y_{\gamma,x}\, , 
\end{align*}
with some real constants $c_{\alpha \beta \gamma}$. 
\end{itemize}
We will denote 
$c\equiv \sup_{\alpha,\beta,\gamma \in I}\lv c_{\alpha\beta\gamma} \rv.$
\end{assumption}
\begin{remark}\label{rem3.1}\textup{
We remark that in general,
if the constants $c_{\alpha \beta \gamma} \neq 0$, a compatibility condition
(coming from Jacobi identity) may force all $\kappa_{\alpha} = 0$. The case
when $\kappa_{\alpha} > 0$ for all $\alpha$ will be called {\em stratified case}.}
\end{remark}
Later it will be convenient to use the following notation for operators of
order $n \in \mathbb{N}$:
\[ \label{Yops} \mathbf{Y}_{{\boldsymbol{\iota}}, \mathbf{x}}^{(n)}
   \equiv Y_{\iota_1, x_1} \dots Y_{\iota_n, x_n} \]
where  $\mathbf{x} \equiv (x_1, .., x_n)$, $x_i \in \mathbb{Z}^d$, i.e.  ${\bf x} \subset \ZZ^d$ is a subset of  $\ZZ^d$ of cardinality $n$. Also, we denote
\[ | \mathbf{Y}_{\mathbf{x}}^{(n)} f|^2 \equiv
   \sum_{{\boldsymbol\iota}}|
   \mathbf{Y}_{{\boldsymbol{\iota}}, \mathbf{x}}^{(n)} f|^2 \equiv
   \sum_{\iota_1, ..., \iota_n \in I} |Y_{\iota_1, x_1} ..Y_{\iota_n, x_n}
   f|^2. \]
For some $J \subset I$ (arbitrary but fixed) we set
\[ \label{Y0sqr} Y_{J, x}^2 \equiv \sum_{\alpha \in J} Y_{\alpha, x}^2 \]
and
\[ |Y_{J, x} f|^2 \equiv \sum_{\alpha \in J} |Y_{\alpha, x} f|^2 . \]
For $x \in
\mathbb{Z}^d$, if $\mathbf{q}_x \equiv \{q_{\iota, x} \}_{\iota \in I}$ is a collection of real valued functions (more details about these functions are given below), we set
\[ \mathbf{q}_x \cdot Y_x \equiv \sum_{\iota \in I} q_{\iota, x} Y_{\iota, x}. \]
Analogously, for  $\mathfrak{S}_{xy} \equiv \{ \mathfrak{S}_{\alpha \beta, xy}
\}_{\alpha \beta \in J}$, we introduce
\[ \mathfrak{S}_{xy} \cdot Y_x Y_y \equiv \sum_{\alpha, \beta \in J}
   \mathfrak{S}_{\alpha \beta, xy} \cdot Y_{\alpha, x} Y_{\beta, y} \]
and write
\[ \mathfrak{S}_{xy} \cdot (Y_x f)  (Y_y g) \equiv \sum_{\alpha, \beta \in
   J} \mathfrak{S}_{\alpha \beta, xy} \cdot (Y_{\alpha, x} f) \cdot
   (Y_{\beta, y} g). \]
For every $\boldsymbol{\gamma}\subset \boldsymbol{\iota}\subset I$  and $\z \subset \x \subset \ZZ^d$ we will also use the notation $\check{\ensuremath{\boldsymbol{\gamma}}} \equiv
{\boldsymbol\iota} \setminus \ensuremath{\boldsymbol{\gamma}}$
and $\check{\mathbf{z}} \equiv \mathbf{x} \setminus \mathbf{z}$.\\   
Both $\mathbf{q}_x$ and $\mathfrak{S}_{xy}$ will be assumed to be smooth
functions (of $\omega$), which  can depend on $\omega_y$, $y \neq
x$; all entries of these ``matrices" are assumed to be real valued cylinder functions, so each of the  $\mathbf{q}_x$ and $\mathfrak{S}_{xy}$  only depend on a finite number of coordinates in $\ZZ^d$. It is also
assumed that $\mathfrak{S}_{xy} < 2 \delta_{xy}$ in the sense of quadratic forms.  To stress the cardinality of $\mathbf{x}=(x_1 \dd x_n) \subset \ZZ^d$ as a subset of $\ZZ^d$, we write $\lv \mathbf{x}\rv=n$ (same thing for $\mathbf{\iota}=(\iota_1 \dd \iota_{\ell}) \in I^{\ell} $, we write $\lv \mathbf{\iota} \rv = \ell$.) A
number of additional technical conditions, necessary for development of
nontrivial infinite dimensional theory, will be provided later.\\
For a finite set $\Lambda \subset \subset \mathbb{Z}^d$ we consider the
following Markov generator
\begin{equation}\label{genlambda}
 \mathcal{L}_{\Lambda} = \sum_{x \in \mathbb{Z}^d} L_x +
   \sum_{y \in \Lambda}  \mathbf{q}_y \cdot Y_y + \sum_{y, y' \in \Lambda} 
   \mathfrak{S}_{yy'} \cdot Y_y Y_{y'} 
\end{equation}
where
\[ L_x \equiv Y^2_{J, x} + B_x - \lambda D_x \]
with some constant $\lambda \geq 0$ and
\[ B_x \equiv \sum_{\alpha \in I} b_{\alpha, x} Y_{\alpha, x} \equiv \mathbf{b}_x
   \cdot Y_x , \]
with 
$\mathbf{b}_x \equiv \{ b_{\alpha, x} \in \mathbb{R}\}_{\alpha \in I}$.    We will refer to $\mathbf{q}_x$ and $\mathfrak{S}_{x,y}$ as to {\em interaction functions.} When such functions satisfy the following two conditions
\begin{align}
  Y_{{\alpha},y}\mathbf{q}_x &
  {\equiv}0 {\hspace{2em}}if{\hspace{1em}}dist(y,x){\geq}R \label{fr1}\\
  \mathfrak{S}_{{\gamma}{\gamma}',yy'} &
  {\equiv}0 {\hspace{2em}}if{\hspace{1em}}dist(y,y'){\geq}R \label{fr2}
\end{align}
for some $R>0$, we talk about {\em finite range interaction}. 

We remark that in case of finite range interaction   the operators
$$L_{\Lambda_R}\equiv\sum_{ x \in\Lambda_R } L_x +
   \sum_{y \in \Lambda_R}  \mathbf{q}_y \cdot Y_y + \sum_{y, y' \in \Lambda_R} 
   \mathfrak{S}_{yy'} \cdot Y_y Y_{y'} 
$$
defined with $\Lambda_R\equiv\{x:d(x, \Lambda)\leq R\}$
and 
$$
L_{\Lambda_R^c}\equiv\sum_{ d(x, \Lambda) > R } L_x
$$
commute; therefore
the semigroup generated by $\mathcal{L}_{\Lambda}=L_{\Lambda_R}+L_{\Lambda_R^c}$ is well defined as  product semigroup, denoted  by 
$\PP^{\Lambda}_t \equiv e^{t\mathcal{L}_{\Lambda}}\equiv e^{t L_{\Lambda_R}} e^{t L_{\Lambda_R^c}}$; the first factor here $e^{t L_{\Lambda_R}}$ acts on finite dimensions and is well defined by the standard finite dimensional analysis (see e.g. \cite{CSCV} and references there in), while the second 
$e^{t L_{\Lambda_R^c}}\equiv \prod_{x\in\Lambda_R^c}e^{t L_x}$ 
is an infinite product (of commuting semigroups, each acting on finite dimensions).  In other words,   one way of intuitively understanding the dynamics generated by $\mathcal{L}_{\Lambda}$ \eqref{genlambda} is the following: each of the operators $L_x$ is an hypoelliptic diffusion of the type studied in Section \ref{S.2:Short and long time behaviour} taking place in the copy of $\R^m$ placed at $x \in \ZZ^d$. If the last two addends in  the definition \eqref{genlambda} of $\mathcal{L}_{\Lambda}$   were identically zero, then the dynamics generated by $\mathcal{L}_{\Lambda}$ would simply consist of infinitely many copies of the same hypoelliptic diffusion evolving independently of each other. The last two addends in  \eqref{genlambda} make such diffusions interact.  However, because $\Lambda$ contains only a finite number of points in $\ZZ^d$ (and the interaction functions will always assumed to have finite range), only finitely many of such diffusions  interact ``directly"  under the action of $\mathcal{L}_{\Lambda}$.  
The main purpose of this  section is to show that, in the limit $\Lambda \rightarrow \ZZ^d$,  the semigroup generated  on  $(\R^m)^{\ZZ^d}$ by the operator formally given by
\begin{equation}
 \mathcal{L} = \sum_{x \in \mathbb{Z}^d} L_x +
\sum_{y \in \mathbb{Z}^d}  \mathbf{q}_y \cdot Y_y + \sum_{y, y' \in \mathbb{Z}^d}
   \mathfrak{S}_{yy'} \cdot Y_y Y_{y'} 
\end{equation}
is well posed. In order to achieve this result, some further technical assumptions on the interaction functions will be necessary, see statement of Theorem \ref{existencethm}; some of these assumptions are purely technical.   In order to explain the structure of the remainder of the section and clarify the approach used to construct the infinite dimensional semigroup, we add the following Subsection \ref{subsubsec3}, which should hopefully serve as a navigational chart through the technical results of Section 3. 
\subsubsection{Structure of Section 3}\label{subsubsec3}
In  this subsection we explain the strategy that we are going to use to construct the infinite dimensional semigroup, in its simplest version. In order to do so, we work in a simplified scenario.   The details of the general strategy illustrated in this remark need  technical modifications in our setting, but the bulk of the approach  remains analogous.  
\begin{itemize}
\item
Only for the purpose of this subsection, consider the operator
$$
\mathcal{L}_{\Lambda} = \sum_{x \in \mathbb{Z}^d} L_x +
\sum_{y \in \Lambda}  \mathbf{q}_y \cdot Y_y= 
\sum_{x \in \mathbb{Z}^d} L_x +
\sum_{y \in \Lambda} \sum_{i \in I} q_{i,y}  Y_{i,y}\, , 
$$
generating the semigroup $\PP_t^{\Lambda}$. 
 We want to show 
$$
\lim_{\Lambda \rightarrow \ZZ^d} \PP_t^{\Lambda}f(x)= \PP_t f(x),
$$
for every cylinder function $f$. 
We consider two sets $\bar{\Lambda},\bar{\Lambda}'\subset \ZZ^d$  such that 
 $\Lambda(f)\subset \bar{\Lambda} \subset \bar{\Lambda}'$  and  construct an increasing sequence of sets. Here for simplicity we take  $\{\Lambda_{\mathfrak{m}}\}_{ 0\leq{\mathfrak{m}}\leq \mathcal{N}}$, such that $\Lambda(f)\subset \Lambda_0=\bar{\La}$, $\La_{\mathcal{N}}=\bar{\Lambda}'$ and $\Lambda_{{\mathfrak{m}}+1}\setminus \Lambda _{\mathfrak{m}}=\{h_{\mathfrak{m}}\}$, i.e. $\Lambda_{{\mathfrak{m}}+1}$ is obtained from $\Lambda_{\mathfrak{m}}$ by adding the singleton $h_{\mathfrak{m}}$. 
We denote by $\mathcal{L}_{\Lambda_{\mathfrak{m}}}$ the Markov generator
$$
\mathcal{L}_{\Lambda_{\mathfrak{m}}}\equiv \sum_{x \in \mathbb{Z}^d} L_x +
\sum_{y \in \Lambda_{\mathfrak{m}}} \sum_{i \in I} q_{i,y}  Y_{i,y}
$$
and $\PP_t^{\Lambda_{\mathfrak{m}}}$ the corresponding semigroup. If we  show that the sequence $\{\PP_t^{\Lambda_{\mathfrak{m}}}f(x)\}$ is a Cauchy sequence then we are done. 
From the identity
\be
\PP_t^{\Lambda_{{\mathfrak{m}}+1}}f-\PP_t^{\Lambda_{\mathfrak{m}}}f=
\int_0^t ds  \frac{d}{ds}\left( 
\PP_{t-s}^{\Lambda_{\mathfrak{m}}}\PP_s^{\Lambda_{{\mathfrak{m}}+1}}f
\right)
=\int_0^t ds \left[\PP_{t-s}^{\Lambda_{\mathfrak{m}}} (\cl_{\Lambda_{{\mathfrak{m}}+1}}-\cl_{\Lambda_{\mathfrak{m}}})
\PP_s^{\Lambda_{{\mathfrak{m}}+1}}f\right], 
\ee
we  have \begin{align}
& \|\PP_t^{\bar{\Lambda}}f-\PP_t^{\bar{\Lambda}'}f\|_{\infty}\leq 
\sum_{\mathfrak{m}=0}^{\mathcal{N}-1}
\|\PP_t^{\Lambda_{{\mathfrak{m}}+1}}f-
\PP_t^{\La_{\mathfrak{m}}}f\|_{\infty}\nonumber\\ 
&\leq \sum_{\mathfrak{m}=0}^{\mathcal{N}-1}
\int_0^t ds \|\PP_{t-s}^{\Lambda_{\mathfrak{m}}} (\cl_{\La_{{\mathfrak{m}}+1}}-\cl_{\Lambda_{\mathfrak{m}}})
\PP_s^{\Lambda_{{\mathfrak{m}}+1}}f\|_{\infty}\nonumber\\
&\leq \sum_{\mathfrak{m}=0}^{\mathcal{N}-1} \int_0^t ds 
\sum_{i \in I} \|q_{i,h_{\mathfrak{m}} }Y_{i,h_{\mathfrak{m}}}
f_s^{\Lambda_{{\mathfrak{m}}+1}}\|_{\infty}\label{Cauchy2}
\end{align}
The above is a simplified version of the calculation in  the proof of Theorem \ref{existencethm} - in that setting also second derivatives of $f_s^{\Lambda_{{\mathfrak{m}}+1}}$ would appear in the last step, and this is one of the reasons why in general one cannot choose the simple sequence of increasing sets that we are choosing here. In any event, what is important to notice is that in \eqref{Cauchy2} appears the derivative of $f_s^{\Lambda_{{\mathfrak{m}}+1}}$ at $h_{\mathfrak{m}}$. This brings us to the next point. 
\item Recall that in the above we fixed a cylinder function $f$, localised in $\Lambda(f)$. From the construction in the previous point, $h_{\mathfrak{m}} \notin \Lambda(f)$. Hence the need to find estimates on the derivatives of $Y_{i, x}\PP_t^{\Lambda}$ at a point $x \in \ZZ^d$ which is out of $\Lambda(f)$. This is precisely the kind of estimates that we recover in Theorem \ref{fspthm}. In order to study the well posedness of the infinite dimensional semigroup we would need, in our case, only first and second order derivatives. We find the estimates for derivatives of any order (Lemma \ref{lem.n}) as they will be needed in Section 5.
\item Finally, once the infinite dimensional semigroup is obtained we prove, for such a semigroup, smoothing results similar to those shown to hold in Section \ref{S.2:Short and long time behaviour} for the finite dimensional case. Such results will be used to study the ergodicity of the dynamics.
\end{itemize}

\subsection{Strong approximation property}\label{sec:strsec}

We begin  by stating the preliminary result of Proposition \ref{pro3.1}. In the statement of Proposition \ref{pro3.1} the following notation will be used: if $\mathbf{z}=(z_1 \dd z_l) \subset \mathbb{Z}^d$ and $y \in \mathbb{Z}^d$, then
\begin{equation}\label{gstar}
\lv \mathbf{Y}_{(\mathbf{z},y)}^{(l+1)} f \rv^2 \equiv
\sum_{\iota_1 \dd \iota_{l+1} \in I} \lv Y_{\iota_1, z_1}
   Y_{\iota_2, z_2} \dd Y_{\iota_l, z_l}, Y_{\iota_{l+1}, y}  f  \rv^2\, .
\end{equation}

\begin{prop} \label{pro3.1}
Let $\mathcal{L}_{\Lambda}$ be the generator \eqref{genlambda} and suppose that the commutator relations of Assumption {\upshape(\textbf{GCR})} hold for each of the $L_x$. 
Moreover, assume the interaction functions are such that
\begin{align*}
& i) \sup_{\alpha, z} || q_{\alpha, z} ||_{\infty} <\infty\\
& ii) \mathfrak{S}_{\gamma \gamma', yy'} \equiv
   \delta_{y \neq y'}  \mathfrak{S}_{\gamma \gamma', yy'} (\omega_y,
   \omega_{y'})\\
& iii) \sup_{z \in \mathbb{Z}^d}  \sum_{y \in \mathbb{Z}^d} 
   \sum_{\gamma \gamma' \in J} (| \mathfrak{S}_{\gamma' \gamma, yz} | +
   | \mathfrak{S}_{\gamma \gamma', zy} |) < \infty\\
& iv)\sup_{(\mathbf{\iota}, \mathbf{x}) : | {\boldsymbol\iota} | = n}
   \sum_{y \in \mathbb{Z}^d}  \sum_{\gamma, \gamma' \in J}  \sum_{l = 1}^{n -
   1} \sum_{{({\boldsymbol\beta}, \mathbf{z}) \subset (\boldsymbol\iota,
   \mathbf{x})}\atop{| {\boldsymbol\beta} | = l}} \sum_{y' \in \check{\mathbf{z}}} 
   \left|  \mathbf{Y}_{\check{{\boldsymbol\beta}}, \check{\mathbf{z}}}^{(n - l)}
   \mathfrak{S}_{\gamma \gamma', yy'}  \right| \hspace{0.25em} <
   \hspace{0.25em} \infty\\
& v) \sum_{y \in \mathbb{Z}^d} \sum_{\beta \in I}  \sum_{k =
   1}^{n - 1} \sup_{({\boldsymbol\iota}, \mathbf{x})} \sum_{({\boldsymbol\gamma},
   \mathbf{z}) \subset ( {\boldsymbol\iota}, \mathbf{x}) : | {\boldsymbol\gamma} |
   = k} \| \mathbf{Y}_{\check{{\boldsymbol\gamma}}, \check{\mathbf{z}}}^{(n -
   k)}  q_{\beta, y} \|_{\infty} < \infty\,.
\end{align*}
Then for any $\Lambda \subset \mathbb{Z}^d$, for any cylinder function $f$ with $\Lambda(f)\subset \Lambda$ and  for any  ${\bf x}=(x_1 \dd x_n) \subset \ZZ^d$ we have 
\begin{align}
{\frac{\partial}{\partial s}}\mathcal{P}_{t-s}^{\Lambda}\left|\mathbf{Y}_{\mathbf{x}}^{n}f_s^{\Lambda}\right|^2
  &{\leq}\mathcal{P}_{t-s}^{\Lambda}\left\{\mathbf{v}_n{\hspace{0.25em}}|\mathbf{Y}_{\mathbf{x}}^{n}f_s^{\Lambda}|^2+\sum_{l=1}^{n-1}\sum_{{\mathbf{z}{\subset}\mathbf{x}}\atop{|\mathbf{z}|=l}}
  \mathcal{B}_{\mathbf{x},n}^{(l)}(\mathbf{z}){\hspace{0.25em}}\left|
  \mathbf{Y}_{\mathbf{z}}^{(l)}f_s^{\Lambda}\right|^2\right. \nonumber\\
  &
  \left.+{\varepsilon} \sum_{l=1}^{n-1}\sum_{y\in\Lambda}\sum_{\mathbf{z}{\subset}\mathbf{x}:|\mathbf{z}|=l}\mathcal{A}_{\mathbf{x},n}^{(l)}(\mathbf{z},y)\left|Y_{J,y}\mathbf{Y}_{\mathbf{z}}^{(l)}f_s^{\Lambda}\right|^2+{\varepsilon} \sum_{l=0}^{n-1}\sum_{{\mathbf{z}{\subset}\mathbf{x},y{\in}{\Lambda}}\atop{|\mathbf{z}|=l}} B_{\mathbf{x},n}^{(l)}(\mathbf{z},y){\hspace{0.25em}}|\mathbf{Y}_{(\mathbf{z},y)}^{(l+1)}
f_s^{\Lambda}|^2\right\} \label{statpro3.1}
\end{align}
for some constants $\varepsilon \in (0, 1)$, $\mathcal{B}_{\mathbf{x},n}^{(l)}(\mathbf{z}), \mathcal{A}_{\mathbf{x},n}^{(l)}(\mathbf{z},y), B_{\mathbf{x},n}^{(l)}(\mathbf{z},y)>0$ and $\mathbf{v}_n$, 
independent of $f$ and $t$. 
\end{prop}
\begin{remark}\label{remfri}\textup{
We refrain from writing here a full expression of the constants $\mathcal{B}_{\mathbf{x},n}^{(l)}(\mathbf{z}), \mathcal{A}_{\mathbf{x},n}^{(l)}(\mathbf{z},y), B_{\mathbf{x},n}^{(l)}(\mathbf{z},y)>0$ (however such expressions can be found in the proof of Proposition \ref{pro3.1}). What is important for our purposes is that, in case of finite range interaction (see \eqref{fr1}-\eqref{fr2}), such coefficients vanish unless $diam (\mathbf{x}
\setminus \mathbf{z}) \leq R$.
}
\end{remark}
\begin{proof}[Proof of Proposition \ref{pro3.1}]
The proof of this proposition is deferred to Appendix A.
\end{proof}
%

Integrating the differential inequality \eqref{statpro3.1} gives
\begin{align*}
  \left|\mathbf{Y}_{\mathbf{x}}^{(n)}f_{t}^{\Lambda}\right|^2
  & \leq e^{\mathbf{v}_{n}t}\mathcal{P}_{t}^{\Lambda}|\mathbf{Y}_{\mathbf{x}}^{(n)}f|^2
  +\sum_{l=1}^{n-1}\sum_{{\mathbf{z}{\subset}\mathbf{x}}\atop{|\mathbf{z}|=l}} \mathcal{B}_{\mathbf{x},n}^{(l)}(\mathbf{z}) {\hspace{0.25em}} 
  \int_{0}^{t} ds\, e^{\mathbf{v}_{n}(t-s)} \mathcal{P}_{t-s}^{\Lambda} \left|\mathbf{Y}_{\mathbf{z}}^{(l)} f_s^{\Lambda}\right|^2\\
  & + {\varepsilon} \sum_{l=1}^{n-1}\sum_{y\in\Lambda}\sum_{\mathbf{z}{\subset}\mathbf{x}:|\mathbf{z}|=l} \mathcal{A}_{\mathbf{x},n}^{(l)}(\mathbf{z},y) \int_{0}^{t}d s\, e^{\mathbf{v}_{n}(t-s)}\mathcal{P}_{t-s}^{\Lambda}\left|Y_{J,y}\mathbf{Y}_{\mathbf{z}}^{(l)}f_s^{\Lambda}\right|^2\\
  &+{\varepsilon} \sum_{l=0}^{n-1}
  \sum_{{\mathbf{z}{\subset}\mathbf{x},y{\in}{\Lambda}}\atop{|\mathbf{z}|=l}} B_{\mathbf{x},n}^{(l)}(\mathbf{z},y) {\hspace{0.25em}}\int_{0}^{t}d s\, 
e^{\mathbf{v}_{n}(t-s)}\mathcal{P}_{t-s}^{\Lambda}|\mathbf{Y}_{(\mathbf{z},y)}^{(l+1)}f_s^{\Lambda}|^2.
\end{align*}
Taking the supremum norm, the above bound can be simplified as follows
\begin{align}
  \|\mathbf{Y}_{\mathbf{x}}^{(n)}f_{t}^{\Lambda}\|^2_{\infty}
  & \leq e^{\mathbf{v}_{n}t}\|\mathbf{Y}_{\mathbf{x}}^{(n)}f\|_{\infty}^2+\sum_{l=1}^{n-1}\sum_{{\mathbf{z}{\subset}\mathbf{x}}\atop{|\mathbf{z}|=l}}\mathcal{B}_{\mathbf{x},n}^{(l)}(\mathbf{z}) {\hspace{0.25em}}
  \int_{0}^{t}ds\, e^{\mathbf{v}_{n}(t-s)}\|\mathbf{Y}_{\mathbf{z}}^{(l)}f_s^{\Lambda}\|^2_{\infty} \nonumber\\
  & +{\varepsilon} \sum_{l=1}^{n-1}\sum_{y\in\Lambda}\sum_{\mathbf{z}{\subset}\mathbf{x}:|\mathbf{z}|=l}\mathcal{A}_{\mathbf{x},n}^{(l)}(\mathbf{z},y)\int_{0}^{t}ds\, e^{\mathbf{v}_{n}(t-s)}\|Y_{J,y}\mathbf{Y}_{\mathbf{z}}^{(l)}f_s^{\Lambda}\|^2_{\infty} \nonumber \\
  & +{\varepsilon} \sum_{l=0}^{n-1}
  \sum_{{\mathbf{z}{\subset}\mathbf{x},y{\in}{\Lambda}}\atop{|\mathbf{z}|=l}} B_{\mathbf{x},n}^{(l)}(\mathbf{z},y) {\hspace{0.25em}} \int_{0}^{t}ds\,e^{\mathbf{v}_{n}
(t-s)} 
  \|\mathbf{Y}_{(\mathbf{z},y)}^{(l+1)}f_s^{\Lambda}\|^2_{\infty}, \label{98}
\end{align}
where we have used the  contractivity property of the Markov semigroup with respect
to the supremum norm.  
The norm in the first term on the right hand side
does not depend on time and is zero if $\mathbf{x} \cap \Lambda (f) \equiv
\{x_1,..,x_n\}\cap \Lambda (f)=
\emptyset $; the second sum involves lower order terms and integration with
respect to time (and it may be empty if $n = 1$); the third sum involves
integration with respect to time and differentiations at sites which are not
in $\mathbf{x}$ and are performed in mild directions (from principal part of
the generator with indices from $J$), but the order can be up to $n$; the last
is of similar nature as the third,  except that all directions are
involved.

In the case where the interaction is of finite range,
one can simplify  expression \eqref{98} considerably by using Remark \ref{remfri}. Indeed in this case  we can replace all the constants on the RHS of \eqref{98}   by their supremum
$C_0$ and restrict the summation over $y$ by a condition $d(y, \mathbf{x})
\leq R$. For the rest of the paper we set
$$\| \cdot \|\equiv \| \cdot \|_{\infty}.$$ 
Then we have the following result.
\begin{lemma}\label{lem.n}  Under the assumptions of Proposition \ref{pro3.1}, for all $\x=(x_1 \dd x_n) \subset \ZZ^d$, if  \eqref{fr1} and \eqref{fr2} hold, then 
  \begin{align*}
    \|\mathbf{Y}_{\mathbf{x}}^{(n)}f_{t}^{\Lambda}\|^2
    &
    \leq e^{\mathbf{v}_{n}t}\|\mathbf{Y}_{\mathbf{x}}^{(n)}f\|^2+C_{0} \sum_{l=1}^{n-1}\sum_{{\mathbf{z}{\subset}\mathbf{x}}\atop{|\mathbf{z}|=l,diam(\mathbf{x}{\setminus}\mathbf{z}){\leq}R}} {\hspace{0.25em}}\int_{0}^{t} ds\, e^{\mathbf{v}_{n}(t-s)}\|\mathbf{Y}_{\mathbf{z}}^{(l)}f_s^{\Lambda}\|^2\\
    &
    +C_{0} \sum_{l=1}^{n-1}\sum_{{y{\in}{\Lambda}}\atop{d(y,\mathbf{x}){\leq}R}} \sum_{{\mathbf{z}{\subset}\mathbf{x}:}\atop{|\mathbf{z}|=l,diam(\mathbf{x}{\setminus}\mathbf{z}){\leq}R}} \int_{0}^{t}ds\, e^{\mathbf{v}_{n}(t-s)}\|Y_{J,y}\mathbf{Y}_{\mathbf{z}}^{(l)}f_s^{\Lambda}\|^2\\
    &
    +C_{0} \sum_{l=0}^{n-1}\sum_{{\mathbf{z}{\subset}\mathbf{x},y{\in}{\Lambda}}\atop{|\mathbf{z}|=l,d(y,\mathbf{x}){\leq}R}} {\hspace{0.25em}}\int_{0}^{t}ds\, e^{\mathbf{v}_{n}(t-s)} \|\mathbf{Y}_{(\mathbf{z},y)}^{(l+1)} f_s^{\Lambda}\|^2 . 
  \end{align*}
\end{lemma}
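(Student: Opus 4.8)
The plan is to obtain Lemma \ref{lem.n} as a direct specialization of the integrated estimate \eqref{98} --- already derived in the proof of Proposition \ref{pro3.1} --- to the finite-range case. The first step is to recall how the coefficients $\mathcal{B}_{\mathbf{x},n}^{(l)}(\mathbf{z})$, $\mathcal{A}_{\mathbf{x},n}^{(l)}(\mathbf{z},y)$ and $B_{\mathbf{x},n}^{(l)}(\mathbf{z},y)$ in \eqref{98} were produced: tracing back through Lemma \ref{lem3.4} and Lemma \ref{lem3.6} (and the remarks immediately following them), each of these is a finite linear combination, with numerical weights depending on $n$ but not on $\mathbf{x},\mathbf{z},y$, of supremum norms of the type $\|\mathbf{Y}_{\check{\boldsymbol\gamma},\check{\mathbf{z}}}^{(n-l)}\mathbf{q}_{\beta,y}\|_\infty$ and $\|\mathbf{Y}_{\check{\boldsymbol\beta},\check{\mathbf{z}}}^{(n-l)}\mathfrak{S}_{\gamma\gamma',yy'}\|_\infty$ (summed over the internal indices $\beta,\gamma,\gamma'$ and, for $\mathcal{B}$, over the auxiliary site(s)).

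The key observation is then the locality inherited from \eqref{fr1} and \eqref{fr2}: the derivative $\mathbf{Y}_{\check{\boldsymbol\gamma},\check{\mathbf{z}}}^{(n-l)}\mathbf{q}_{\beta,y}$ vanishes whenever some site of $\check{\mathbf{z}}=\mathbf{x}\setminus\mathbf{z}$ lies at distance $\geq R$ from $y$, and $\mathbf{Y}_{\check{\boldsymbol\beta},\check{\mathbf{z}}}^{(n-l)}\mathfrak{S}_{\gamma\gamma',yy'}$ vanishes unless $\check{\mathbf{z}}\subseteq\{y,y'\}$. Hence $\mathcal{A}_{\mathbf{x},n}^{(l)}(\mathbf{z},y)$ and $B_{\mathbf{x},n}^{(l)}(\mathbf{z},y)$ can be nonzero only when $\mathrm{dist}(y,\mathbf{x})\leq R$ and $\mathrm{diam}(\mathbf{x}\setminus\mathbf{z})\leq R$, while $\mathcal{B}_{\mathbf{x},n}^{(l)}(\mathbf{z})$ --- obtained after summing over the auxiliary site(s), which by the above can only run over a bounded neighbourhood of $\mathbf{x}\setminus\mathbf{z}$ --- can be nonzero only when $\mathrm{diam}(\mathbf{x}\setminus\mathbf{z})\leq R$. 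I would then let $C_0$ denote the supremum over $\mathbf{x}$, $\mathbf{z}$, $y$ and $1\leq l\leq n$ of all the surviving coefficients; this supremum is finite because, under the summability hypotheses of Proposition \ref{pro3.1}, each coefficient is a finite sum of uniformly bounded terms. Substituting $C_0$ for every coefficient in \eqref{98} (the first term $e^{\mathbf{v}_n t}\|\mathbf{Y}_{\mathbf{x}}^{n}f\|^2$ being left untouched), replacing the constraint $y\in\Lambda$ by $\mathrm{dist}(y,\mathbf{x})\leq R$, and inserting the support restrictions just obtained into the remaining summations gives precisely the claimed inequality.

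There is no genuine obstacle here; the only point requiring care is the vanishing bookkeeping --- i.e. verifying that the composite coefficients of \eqref{98} inherit the strict locality of $\mathbf{q}$ and $\mathfrak{S}$ --- but this is immediate from the explicit formulas recorded in Lemmas \ref{lem3.4}--\ref{lem3.6}.
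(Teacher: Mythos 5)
Your proposal is correct and follows essentially the same route as the paper: specialize the integrated estimate \eqref{98} to the finite-range case, observe that the composite coefficients $\mathcal{B}_{\mathbf{x},n}^{(l)}(\mathbf{z})$, $\mathcal{A}_{\mathbf{x},n}^{(l)}(\mathbf{z},y)$, $B_{\mathbf{x},n}^{(l)}(\mathbf{z},y)$ inherit the locality of $\mathbf{q}$ and $\mathfrak{S}$ and so vanish unless $\mathrm{diam}(\mathbf{x}\setminus\mathbf{z})\leq R$ (resp. $\mathrm{dist}(y,\mathbf{x})\leq R$), and then replace the surviving coefficients by their finite supremum $C_0$. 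The extra bookkeeping you supply tracing the coefficients back through Lemmas \ref{lem3.4}--\ref{lem3.6} is a correct and slightly more explicit rendering of the same argument.
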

The special cases $n = 1, 2$ will be immediately relevant for the construction
of the limit of the semigroups $P_t^{\Lambda}$ as $\Lambda \to \mathbb{Z}^d$,  so we state such cases explicitly in the next Lemma \ref{lem.n12}. 

\begin{lemma}\label{lem.n12}
Under the assumptions of Lemma \ref{lem.n}, 
for $n = 1$, we have
  \be \label{3.1.37} 
  \|Y_x f_t^{\Lambda} \|^2 \leq e^{\mathbf{v}_1 t} \|Y_x f\|^2 
     + C_0  \sum_{y\in\Lambda \atop dist(y,x)\leq R}   \hspace{0.25em} 
     \int_0^t ds\, e^{\mathbf{v}_1 (t - s)} \|Y_y
     f_s^{\Lambda} \|^2 
     \ee
  For $n = 2$, with some $C_1, C_2 \in (0, \infty)$ dependent only on $C_0$ and
  $c_{\alpha, \beta, \gamma}$, we have
  \begin{align*}
    \|\mathbf{Y}_{\mathbf{x}}^{(2)}f_{t}^{\Lambda}\|^2
    &
    \leq e^{\mathbf{v}_{2}t}\|\mathbf{Y}_{\mathbf{x}}^{(2)}f\|^2+C_{0}\sum_{z\in\mathbf{x}} {\hspace{0.25em}}\int_{0}^{t}ds\, e^{\mathbf{v}_{2}(t-s)} \|Y_{z} f_s^{\Lambda}\|^2\\
    & +C_1 \sum_{{y\in\Lambda}\atop{d(y,\mathbf{x}){\leq}R}} {\hspace{0.25em}}\int_{0}^{t}ds\, e^{\mathbf{v}_{2}(t-s)}\|Y_y f_s^{\Lambda}\|^2\\
    & + C_{2} \sum_{{z{\in}\mathbf{x},y{\in}{\Lambda}}\atop{d(y,\mathbf{x}){\leq}R}} {\hspace{0.25em}} 
    \int_{0}^{t}ds\,
     e^{\mathbf{v}_{2}(t-s)}\|\mathbf{Y}_{(y,z)}^{(2)}f_s^{\Lambda}\|^2.  \end{align*}
The constants $\mathbf{v}_{1}$ and $\mathbf{v}_{2}$ in  the above are as in the statement of Proposition \ref{pro3.1}.
\end{lemma}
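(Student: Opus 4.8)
The plan is to obtain both assertions as direct specializations of Lemma \ref{lem.n}, so that the only work is to simplify the index sets that survive when $n$ is small. For $n=1$ I would simply put $n=1$ in Lemma \ref{lem.n}: the two sums ranging over $l=1,\dots,n-1$ are then empty, and the final sum collapses to its $l=0$ contribution, for which $\mathbf{z}=\emptyset$ and $\mathbf{Y}_{(\emptyset,y)}^{(1)}=Y_y$. Since $\mathbf{x}=(x)$ one has $\mathrm{dist}(y,\mathbf{x})=\mathrm{dist}(y,x)$, and the first term equals $e^{\mathbf{v}_1 t}\|Y_x f\|^2$ because $|\mathbf{Y}_{\mathbf{x}}^{(1)}f|^2=|Y_x f|^2$ by the definition of the notation; this is exactly \eqref{3.1.37}.

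For $n=2$ I would again set $n=2$ in Lemma \ref{lem.n} and read off the four groups of terms. The first is $e^{\mathbf{v}_2 t}\|\mathbf{Y}_{\mathbf{x}}^{(2)}f\|^2$. In the second group only $l=1$ survives, which forces $\mathbf{z}=(z)$ with $z\in\mathbf{x}$; then $\mathbf{x}\setminus\mathbf{z}$ is a single site, so the constraint $\mathrm{diam}(\mathbf{x}\setminus\mathbf{z})\le R$ is automatic and $\mathbf{Y}_{\mathbf{z}}^{(1)}=Y_z$, producing $C_0\sum_{z\in\mathbf{x}}\int_0^t ds\,e^{\mathbf{v}_2(t-s)}\|Y_z f_s^{\Lambda}\|^2$. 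The third group (again only $l=1$) gives $C_0\sum_{y,z}\int_0^t ds\,e^{\mathbf{v}_2(t-s)}\|Y_{J,y}Y_z f_s^{\Lambda}\|^2$ with $z\in\mathbf{x}$ and $\mathrm{dist}(y,\mathbf{x})\le R$, while the fourth group has $l\in\{0,1\}$: the term $l=0$ yields $C_0\sum_{y}\int_0^t ds\,e^{\mathbf{v}_2(t-s)}\|Y_y f_s^{\Lambda}\|^2$ and the term $l=1$ yields $C_0\sum_{z\in\mathbf{x},\,y}\int_0^t ds\,e^{\mathbf{v}_2(t-s)}\|\mathbf{Y}_{(z,y)}^{(2)}f_s^{\Lambda}\|^2$.

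The only manipulation beyond bookkeeping is to convert the mixed term $\|Y_{J,y}Y_z f\|^2$ coming from the third group into a genuine second order norm. Since $J\subset I$ and, by the locality part of Assumption \textup{(\textbf{GCR})}, $Y_{\alpha,y}$ and $Y_{\iota,z}$ commute for $y\ne z$, I would bound $\|Y_{J,y}Y_z f\|^2\le\|\mathbf{Y}_{(y,z)}^{(2)}f\|^2$ (the case $y=z$, which is admissible since then $\mathrm{dist}(y,\mathbf{x})=0$, is immediate, and $\|\mathbf{Y}_{(z,y)}^{(2)}f\|^2=\|\mathbf{Y}_{(y,z)}^{(2)}f\|^2$ in all cases). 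Adding this to the $l=1$ piece of the fourth group produces the factor $2C_0$ in front of $\sum_{z\in\mathbf{x},\,y}\int_0^t ds\,e^{\mathbf{v}_2(t-s)}\|\mathbf{Y}_{(y,z)}^{(2)}f_s^{\Lambda}\|^2$. The surviving $l=0$ term of the fourth group is the $C_1$ term; one takes $C_1\ge C_0$ large enough also to absorb the lower order remainders generated when two operators sitting at a common site are reordered via the commutation relations in \textup{(\textbf{GCR})}, whose size is controlled by $C_0$ and the structure constants $c_{\alpha\beta\gamma}$ — which explains the stated dependence of $C_1$.

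I do not expect any genuine obstacle: the substantive estimates are already packaged in Proposition \ref{pro3.1} and Lemma \ref{lem.n}, and Lemma \ref{lem.n12} is merely their $n=1,2$ shadow. The one point deserving attention is the inequality $\|Y_{J,y}Y_z f\|^2\le\|\mathbf{Y}_{(y,z)}^{(2)}f\|^2$ together with the edge case $y\in\mathbf{x}$ in the fourth group; this is harmless, because the resulting term is of a form already present on the right hand side and remains inside the time integral, so no circularity is introduced at this stage (the actual resummation is performed afterwards).
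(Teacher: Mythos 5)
Your proof is correct and follows exactly the route the paper implicitly intends: Lemma \ref{lem.n12} is stated without proof precisely because it is the $n=1,2$ specialization of Lemma \ref{lem.n}, and your bookkeeping of the surviving index sets (empty sums for $n=1$; only $l=1$ survives in the middle sums for $n=2$, and $l\in\{0,1\}$ in the last), together with the elementary inequality $\|Y_{J,y}Y_z f\|^2\le\|\mathbf{Y}_{(y,z)}^{(2)}f\|^2$ from $J\subset I$ and the symmetry $\|\mathbf{Y}_{(z,y)}^{(2)}f\|^2=\|\mathbf{Y}_{(y,z)}^{(2)}f\|^2$ (trivial when $y=z$, from locality in (\textbf{GCR}) when $y\neq z$), reproduces the stated bound with $C_1=C_0$; your extra remark about absorbing reordering remainders into $C_1$ is not in fact needed, but harmless, and merely gives a conservative reading of the stated dependence of $C_1$ on $c_{\alpha\beta\gamma}$.
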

Using the above lemmata, we prove the following result.
\begin{theorem}[Finite speed of propagation of information]\label{fspthm}
Suppose the assumptions of Lemma  \ref{lem.n} hold. Then  for any smooth cylinder function $f$ with $\Lambda (f) \subset \Lambda$ and
  for any $n \in \mathbb{N}$, there exist constants $B, c, v \in (0, \infty)$,
  independent of $f$ but possibly dependent on $n$, such that for all 
  $\x=(x_1\dd x_n)\subset \ZZ^d$, 
  \begin{equation}\label{FSP} 
  \| \mathbf{Y}_{\mathbf{x}}^{(n)} f_t^{\Lambda} \|^2 \leq
    Be^{ct - v \cdot d (\mathbf{x}, \Lambda (f))}  \sum_{l = 1, .., n} \|
    \mathbf{Y}^{(l)} f\|^2
  \end{equation}
  where
\begin{equation}\label{gstargstar}  
\| \mathbf{Y}^{(l)} f\|^2 \equiv \sum_{\mathbf{z} : | \mathbf{z} | = l}
     \| \mathbf{Y}^{(l)}_{\mathbf{z}} f\|^2 
\end{equation}
  and we recall that  $d (\mathbf{x}, \Lambda (f))$ denotes the length of the shortest tree
  connecting each component of $\mathbf{x}$ and $\Lambda (f)$.
\end{theorem}
\begin{proof}[Proof of Theorem \ref{fspthm}]The case $n = 1$ is well known, see e.g. \cite{DKZ2011,GZ} and references there in. The estimate is essentially based on
inductive use of the Gronwall type inequality (\ref{3.1.37}) using the fact
that the first term on its right hand side is zero unless $x \in \Lambda (f)$,
so if you start from $d (x, \lambda (f)) \geq NR$ to get a nonzero term you
need to make at least $N$ steps producing multiple integral of that order
which is responsible for a factor of the form $e^{Ct} (N!)^{- 1}$ with $C \leq
\mathbf{v}_1 + C_0  (2 R)^d$ (for more details see \cite{DKZ2011,GZ}). 

For $n = 2$, using \eqref{FSP} with $n=1$, we get 
\begin{align*}
  \|\mathbf{Y}_{\mathbf{x}}^{(2)}f_{t}^{{\Lambda}}\|^{2}
  &
  \leq e^{\mathbf{v}_{2}t}\|\mathbf{Y}_{\mathbf{x}}^{(2)}f\|^{2}+B_{1} e^{\bar{c}_{1} t-\bar{v}_{1} d(z,{\Lambda}(f))} \sum_{y{\in}\mathbf{x}} {\hspace{0.25em}} \|Y_{y} f\|^{2}\\
  & +2 C_{0} \sum_{{z{\in}\mathbf{x},y{\in}{\Lambda}}\atop{d(y,\mathbf{x}){\leq}R}} {\hspace{0.25em}}\int_{0}^{t}ds\, e^{\mathbf{v}_{2}(t-s)}\|\mathbf{Y}_{(y,z)}^{(2)}f_{s}^{{\Lambda}}\|^{2}, \nonumber 
\end{align*}
for some constants $\bar{c}_1,\bar{v}_1 \in (0, \infty)$. 
We will use this relation inductively taking into the account that as long as
$\mathbf{z} \nsubseteq \Lambda (f)$, we have $\| \mathbf{Y}_{\mathbf{z}}^{(2)}
f\|^2 = 0$. Thus the first term on the right hand side will not give nonzero
contribution until we apply our procedure at least $N \equiv d (\mathbf{x},
\Lambda (f)) / (2 R)$ times, but to reach that we will produce multiple
integral of order $N$ giving a factor $(N!)^{- 1}$. This implies the following
bound
\begin{equation*}
  \| \mathbf{Y}_{\mathbf{x}}^{(2)} f_t^{\Lambda} \|^2 \leq B_2
  e^{c_2 t - v_2 d (\mathbf{x}, \Lambda (f))}  \left( \sum_{\mathbf{z}} \|
  \mathbf{Y}_{\mathbf{z}}^{(2)} f\|^2 + \sum_y \hspace{0.25em} \|Y_y f\|^2
  \right).
\end{equation*}
The general case is proved by induction with respect to $n$. We suppose that
\[ 
\mbox{for all }  1 \leq k \leq n - 1 \mbox{ and } \mathbf{z} : |
   \mathbf{z} | = k, \hspace{2em} \| \mathbf{Y}_{\mathbf{z}}^{(k)}
   f_t^{\Lambda} \|^2 \leq B_k  \sum_{z \in \mathbf{z}} e^{c_k t - v_k d (z,
   \Lambda (f))}  \sum_{l = 1, .., n} \| \mathbf{Y}^{(l)} f\|^2. 
\]
Then, using Lemma \ref{lem.n}, we get
\begin{align*}
  \|\mathbf{Y}_{\mathbf{x}}^{(n)}f_{t}^{{\Lambda}}\|^{2}
  &
  \leq e^{\mathbf{v}_{n}t}\|\mathbf{Y}_{\mathbf{x}}^{(n)}f\|^{2}+C_{0}\sum_{l=1}^{n-1}\sum_{{\mathbf{z}{\subset}\mathbf{x}}\atop{|\mathbf{z}|=l}}{\hspace{0.25em}}\int_{0}^{t}ds\, e^{\mathbf{v}_{n}(t-s)}\|\mathbf{Y}_{\mathbf{z}}^{(l)}f_{s}^{{\Lambda}}\|^{2}\\
  & +C_{0} \sum_{l=1}^{n-1}\sum_{{\mathbf{z}{\subset}\mathbf{x},y{\in}{\Lambda}}
  \atop{|\mathbf{z}|=l,d(y,\mathbf{x}){\leq}R}} {\hspace{0.25em}}\int_{0}^{t}ds\, e^{\mathbf{v}_{n}(t-s)} \int_{0}^{t}ds\, e^{\mathbf{v}_{n}(t-s)}\|Y_{J,y}\mathbf{Y}_{\mathbf{z}}^{(l)}f_{s}^{{\Lambda}}\|^{2}\\
  & +C_{0} \sum_{l=0}^{n-1}\sum_{{\mathbf{z}{\subset}\mathbf{x},y{\in}{\Lambda}}\atop{|\mathbf{z}|=l,d(y,\mathbf{x}){\leq}R}} {\hspace{0.25em}}\int_{0}^{t}ds\, e^{\mathbf{v}_{n}(t-s)}\|\mathbf{Y}_{(\mathbf{z},y)}^{(l+1)}f_{s}^{{\Lambda}}\|^{2}.
\end{align*}
\end{proof}
\subsection{Existence of the infinite dimensional semigroup}
\label{sec:exi}
In this section we prove, through an approximation procedure, that the infinite dimensional semigroup is well posed.  We work under the assumption that the interaction functions are bounded, together with their derivatives of any order. Furthermore, we assume that the interaction is short range and we denote by $R>0$ the range of interaction.  This is the meaning of the assumptions in the following theorem. 
\begin{theorem}\label{existencethm}
  Suppose Assumption {\upshape{(\textbf{GCR}) }} is satisfied and, for every $x \in
  \mathbb{Z^{}}^d$, the fields $\{ Y^{}_{J, x}, B_x \}$ form a
  H{\"o}rmander system. Moreover assume \eqref{fr1} and \eqref {fr2} and conditions i) and ii) of Proposition \ref{pro3.1} hold, together  with
$$ 
\sup_{\ensuremath{\boldsymbol{\gamma}}, \mathbf{z}, \beta,
     y, k} \| \mathbf{Y}_{\ensuremath{\boldsymbol{\gamma}}, \mathbf{z}}^{(k)} 
     q_{\beta, y} \|_{\infty} < \infty, 
\qquad 
     \sup_{\ensuremath{\boldsymbol{\alpha}}, \mathbf{z}, \gamma, \gamma', y,
     y', k} \| \mathbf{Y}_{\ensuremath{\boldsymbol{\alpha}}, \mathbf{z}}^{(k)}
     \mathfrak{S}_{\gamma \gamma', yy'}
     \|_{\infty} < \infty.
$$
    Then, for any continuous compactly supported cylinder function $f,$ the
  following limit exists
  \[ \PP_t f \equiv \lim_{\Lambda \rightarrow  \ZZ^{\ensuremath{\boldsymbol{d}}}}
     \PP^{\Lambda}_t f \]
  and its extension defines a strongly continuous Markov semigroup
  on $\mathcal{C} (\Omega)$. Moreover,  $\PP_t (C (\Omega)) \subset C^{\infty}
  (\Omega)$.  In addition, for any continuous compactly supported function \(f\), for all $n \in \mathbb{N}$ and  all \(\mathbf{x} \in \ZZ^d\) with \(|\mathbf{x}|=n\), we have
  \begin{align*}
  \mathbf{Y}^{(n)}_{\mathbf{x}} \PP_t f = \lim _{\Lambda \rightarrow \ZZ^d}\mathbf{Y}^{(n)}_{\mathbf{x}} \PP^{\Lambda}_t f .
  \end{align*}
\end{theorem}
\proof We consider a lexicographic order $(\{x_k\in\ZZ^d, \preceq\}_{k\in\mathbb{N}})$ on the lattice so that
\[x_k \preceq x_{k+1} \iff \hat{d}(x_k,0) \leq \hat{d}(x_{k+1},0)\]
with $\hat{d}(x,y)\equiv \max_{l=1,..,d}|x^l-y^l|,$ and such that $\Lambda_j \equiv \{ x_i : i \leq
j \}$ is a connected set. For a smooth cylinder function $f$ with bounded
derivatives and $\Lambda (f) \subset \Lambda_j$, we have
\[ 
\left| \PP^{\Lambda_{j + 1}}_t f - \PP^{\Lambda_j}_t f\right |  =\left| \int_0^t
 {ds}\, (\PP^{\Lambda_j}_{t - s}   \left(\mathcal{L}_{\Lambda_{j +
   1}} - \mathcal{L}_{\Lambda_j}\right) \PP^{\Lambda_{j + 1}}_s f)) 
   \right|  
\]
Using the definition of the generators and our finite speed of propagation of
information estimate Theorem \ref{fspthm}, we get (acting analogously to \eqref{Cauchy2} and using the notation \eqref{gstar} and \eqref{gstargstar})
\begin{eqnarray*}
  \left| \PP^{\Lambda_{j}}_t f - \PP^{\Lambda_{j-1}}_t f\right |  &\leq & \int_0^t ds\,  \left(  \left\| q_{x_j} \right\| \cdot \|
  \mathbf{Y}^{(1)}_{x_j} \PP^{\Lambda_{j + 1}}_s f \| \right)  \\
  && +\int_0^t
  ds\,  \left( \sum_{y \in \Lambda_{j + 1}}  ( \|
  \mathfrak{S}_{yx_j} \| \cdot \| \mathbf{Y}^{(2)}_{(y,x_j)}
  \PP^{\Lambda_{j + 1}}_s f \|  +  \| \mathfrak{S}_{x_j y} \| \cdot \|
  \mathbf{Y}^{(2)}_{(x_j, y)} \PP^{\Lambda_{j + 1}}_s f \|) \right)  \\
  &\leq & te^{Ct - v \cdot  d (x_j, \Lambda (f))} \sup_{x \in \ZZ^d} \left\|
  \mathbf{q}_{x} \right\| \cdot \| \mathbf{Y}^{(1)} f \|   \\ &&+   2
  te^{Ct  - v\cdot d(x_j, \Lambda (f))} e^{2 vR} 
  | 2 R  |^d \sup_{x,y \in \ZZ^d} \| \mathfrak{S}_{xy} \| \cdot \| \mathbf{Y}^{(2)} f
  \|,
\end{eqnarray*}
with $\left\| \mathbf{q}_{x_{}} \right\| \equiv \sum_{\alpha} ||
{q}_{\alpha, x} ||_{\infty}$ and $\| \mathfrak{S}_{x_{} y} \| \equiv
\sum_{\gamma \gamma'} \| \mathfrak{S}_{\gamma \gamma', x_{} y} \| _{\infty}$.
Hence for any $\Lambda_k $ and $\Lambda_m$, $k \leq m$, we have
\[ \| \PP^{\Lambda_m}_t f - \PP^{\Lambda_k}_t f \|_{} \leq \sum^{}_{k \leq j
   \leq m} | \PP^{\Lambda_{j + 1}}_t f - \PP^{\Lambda_j}_t f | \leq A_t e^{-
   \frac{v}{2} d (x_k, \Lambda (f))} (\| \mathbf{Y}^{(2)} f \| + \|
   \mathbf{Y}^{(1)} f \|), \]
with a constant
\[ A_t \equiv 2 te^{Ct} B \max \left( \sup_x \left\|
   \mathbf{q}_{x} \right\|,   e^{2 vR} | 2 R |^d
   \sup_{xy} \| \mathfrak{S}_{x_{} y} \| \right) ,\]
where $B \equiv \sum_j e^{- \frac{v}{2} d (x_j, \Lambda (f))}$. Hence, for any \(t>0\), the
sequence $\PP^{\Lambda_j}_t f$, $j \in \mathbb{N}$, is Cauchy in the space of continuous functions equipped with the uniform
norm and there exists a (positivity and unit preserving), densely defined
linear operator $\PP_t$ such that

\[ \| \PP_t f - \PP^{\Lambda_k}_t f \|_{} \leq A_t e^{- \frac{v}{2} d
   (\Lambda^c_k, \Lambda (f))} (\| \mathbf{Y}^{(2)} f \| + \| \mathbf{Y}^{(1)}
   f \|). \]
\[  \]
By the density of smooth cylinder functions and contractivity of $\PP_t $, it can be
extended to all $C (\Omega)$. Using the last estimate one can also show the
semigroup property for $\PP_t$. The fact that $\PP_t (C (\Omega)) \subseteq C^{\infty}  (\Omega)$ follows by H\"ormander's theorem.

Next we consider  sequences of derivatives. For $\mathbf{x} \subset
\Lambda_j$ and \(n= |\mathbf{x}|\),  arguing as above and using the definition of the generators, for a
smooth cylinder function $f$ with bounded derivatives and $\Lambda (f) \subset
\Lambda_j$, we have
\[ \left| \mathbf{Y}^{(n)}_{\mathbf{x}} \PP^{\Lambda_j}_t f
   -\mathbf{Y}^{(n)}_{\mathbf{x}} \PP^{\Lambda_{j - 1}}_t f \right|  = \phantom{AAAAAAAAAAAAAAAAAAAAAAAAAAAAAA} \]
 \[ \phantom{AAAAAAAAAAA} \left| \int_0^t ds\,  \left( \mathbf{Y}^{(n)}_{\mathbf{x}} \PP^{\Lambda_j}_{t
   - s}  \left( \mathbf{q}_{x_j} \cdot \mathbf{Y}^{(1)}_{x_j} + \sum_{y
   \in \Lambda_{j + 1}} (\mathfrak{S}_{yx_j} \cdot
   \mathbf{Y}^{(2)}_{(y,x_j)} + \mathfrak{S}_{x_j y} \cdot
   \mathbf{Y}^{(2)}_{(x_j, y)}) \right) \PP^{\Lambda_j}_s f \right)  
   \right|. \]
Applying Theorem 3.1 to $\mathbf{Y}^{(n)}_{\mathbf{x}}
\PP^{\Lambda_j}_{t - s} F$ with
\[ F \equiv \left( \mathbf{q}_{x_j} \cdot \mathbf{Y}^{(1)}_{x_j} + \sum_{y \in
   \Lambda_{j + 1}} (\mathfrak{S}_{yx_j} \cdot
   \mathbf{Y}^{(2)}_{(y,x_j)} + \mathfrak{S}_{x_j y} \cdot
   \mathbf{Y}^{(2)}_{(x_j, y)}) \right) \PP^{\Lambda_j}_s f, \]
we get the following estimate
\[ \left\| \mathbf{Y}^{(n)}_{\mathbf{x}} \PP^{\Lambda_{j - 1}}_{t - s} F
   \right\|^2 \leq Be^{C (t - s)} \sum_{l = 1}^n \sum_{\mathbf{z}
   \subset \widetilde{\Lambda}_j : \left| \mathbf{z} \right| = l} \left\|
   \mathbf{Y}^{(l)}_{\mathbf{z}} F \right\|^2, \]
with $\widetilde{\Lambda}_j \equiv \{ x \in \mathbb{Z}^d : d (x, \Lambda_j)
\leq R \}$. We note that for the cylinder function $f$, the function $F$ is also
a smooth cylinder function with $\Lambda (F) \equiv \widetilde{\Lambda}_j$.
Thus the sum over $\mathbf{z} \subset \widetilde{\Lambda}_j $ such that $\lv
\mathbf{z} \rv = l$ contains less than $\frac{1}{l!} (| \Lambda_j | + 2 R)^l$
terms. Each of the terms can be bounded as follows
\begin{align*} 
\lv \mathbf{Y}^{(l)}_{\mathbf{z}} F \rv^2 &\leq  D_1 \sum_{k = 1}^l 
\sum_{\left| \mathbf{z}' \right| = k} 
\left\| \mathbf{Y}^{(k+ 1)}_{(\mathbf{z}',x_j)} \PP^{\Lambda_j}_s f \right\|^2\\
&+ D_2 \sum_{d (y, x_j) \leq R}\sum_{k= 1}^l 
\sum_{\left| \mathbf{z}' \right| = k}
   \left( \left\| \mathbf{Y}^{(k + 2)}_{(\mathbf{z}', y,x_j)}
   \PP^{\Lambda_j}_s f \right\|^2 + 
   \left\| \mathbf{Y}^{(k + 2)}_{(\mathbf{z}',x_j, y)} \PP^{\Lambda_j}_s f \right\|^2 \right),  
   \end{align*}
with
\begin{align*} D_1 &\equiv \max_{l = 1, \ldots, n} \sup_{\left\{ x_j \in \mathbb{Z}^d, \left| \mathbf{z} \right| = l \right\}} \sum_{k
   = 1  }^l \sum_{\left| \mathbf{z}' \right| = k} \left\| \mathbf{Y}^{(l -
   k)}_{\mathbf{z} \setminus \mathbf{z}'} \mathbf{q}_{x_j} \right\|^2, 
\\D_2& \equiv \max_{l = 1, \ldots, n} \sup_{\left\{ x_j \in \mathbb{Z}^d, \left| \mathbf{z} \right| = l \right\}}  \sum_{d (y, x_j) \leq R} \sum_{k
   = 1  }^l \sum_{\left| \mathbf{z}' \right| = k} 
   \left( \max \left\| \mathbf{Y}^{(l - k)}_{\mathbf{z} \setminus \mathbf{z}'}
   \mathfrak{S}_{yx_j} \right\|^2, \left\| \mathbf{Y}^{(l -
   k)}_{\mathbf{z} \setminus \mathbf{z}'} \mathfrak{S}_{x_j y} \right\|^2
   \right). \end{align*}
Since each tree connecting points in $\mathbf{z}' x_j$, $\mathbf{z}'
yx_j$ and $\mathbf{z}' x_j y$ with $\Lambda (f)$ is of length at least
$d (x_j, \Lambda (f))$, applying Theorem \ref{fspthm} we obtain
\[ \left\| \mathbf{Y}^{(l)}_{\mathbf{z}} F \right\|^2 \leq
   De^{(Cs - vd (x_j, \Lambda (f)))} \sum_{k = 1, \ldots,
   l + 2} \sum_{|\mathbf{z}'| = k } \left\| \mathbf{Y}^{(k)}_{\mathbf{z}'} f
   \right\|^2 ,\]
with some constant $D \in (0, \infty)$ independent of $s$, $x_j$ and the
function $f$. Combining our estimates we arrive at
\[ \left| \mathbf{Y}^{(n)}_{\mathbf{x}} \PP^{\Lambda_j}_t f
   -\mathbf{Y}^{(n)}_{\mathbf{x}} \PP^{\Lambda_{j - 1}}_t f \right| \leq D'
   e^{\frac{1}{2} (Ct - vd (x_j, \Lambda (f)))} \left( \sum_{k =
   1, \ldots, n + 2} \sum_{|\mathbf{z}'|=k} \left\| \mathbf{Y}^{(k)}_{\mathbf{z}'}
   f \right\|^2  \right)^{\frac{1}{2}}, \]
with some constant $D' \in (0, \infty)$ independent of $t$, $x_j$ and
the function $f$. Using a similar telescopic expansion as in the proof of
existence of the limit for the semigroup, this implies that the sequence
$\mathbf{Y}^{(n)}_{\mathbf{x}} \PP^{\Lambda_j}_t f$, $j \in \mathbb{N}$, is
Cauchy in the supremum norm for every $n \in \mathbb{N}$ and $\mathbf{x}$,
$\left| \mathbf{x} \right| = n$. This ends the proof of the theorem.
\endproof

\subsection{Smoothing properties of infinite dimensional semigroup}
\label{sec:smoothingng}
In this section we extend Theorem 2.1 to infinite dimensions, proving
smoothing estimates in the setup when the fields at each site of \(\ZZ^d\) satisfy the
commutation relations of Assumption  \textup{(\textbf{CR.I})}. Now our generator has the form
\[ \mathcal{L}_{} \equiv \ensuremath{\boldsymbol{L}}+ 
   \ensuremath{\boldsymbol{L}}_{\ensuremath{\operatorname{int}}} \]
with
\[ \ensuremath{\boldsymbol{L}} \equiv \sum_{x \in \mathbb{Z}^d} L_x  \]
where
\[ L_x \equiv Z_{J,x}^2 + B_x - \lambda D_x \]
and
\[  \ensuremath{\boldsymbol{L}}_{\ensuremath{\operatorname{int}}}
   \equiv \sum_{x \in \mathbb{Z}^d}  \mathbf{q}_x \cdot
   Z_x + \sum_{y, y' \in \mathbb{Z}^d} 
   \mathfrak{S}_{yy'} \cdot
   Z_{J, y} {Z}_{J, y'}, \]
 recalling that ${J \subset I}$ . 
For notational simplicity we  only describe one component type system, with
$\mathfrak{S}_{klyy'} \equiv
\mathfrak{S}_{00yy'} \equiv
\mathfrak{S}_{yy'}$ and
$Z_{J, y} \equiv Z_{0, y} $, but  provide  sufficient detail
to  make clear how to recover the more general case with many components.
For \(n \in \N\), we introduce the form $\ensuremath{\boldsymbol{\Gamma}}^{(n)}_t$ as follows. For \(n=1\), we consider the following
quadratic form
\[ \ensuremath{\boldsymbol{\Gamma}}_t^{(1)} (f_t) \equiv \sum_{x \in
   \mathbb{Z}^d} \Gamma_{t, x}^{(1)} (f_t), \]
with $\Gamma_{t, x}^{(1)} (f)$ being an isomorphic copy of the form \eqref{gamma_1} defined in
Section 2
\begin{eqnarray*}
  \Gamma_{t, x}^{(1)} (f_t) & \equiv & \sum_{i = 0, \ldots, N} (a_i t^{2 i_{}
  + 1} | Z_{i, x} f_t |^2 + b_i t^{2 i} Z_{i - 1, x} f_t \cdot Z_{i, x} f_t),
\end{eqnarray*}
with the convention that  $Z_{- 1} \equiv
Z_0$  (so de facto there is no spurious term in the second sum on the right
hand side). 
For \(n>1\), we define
\be\label{Gammantft}
\ensuremath{\boldsymbol{\Gamma}}^{(n)}_t  (f_t) \equiv \sum_{\mathbf{x} \in
   \mathbb{Z}^{nd}}
   \ensuremath{\boldsymbol{\Gamma}}^{(n)}_{t, \mathbf{x}} f_t 
\ee
with $\ensuremath{\boldsymbol{\Gamma}}^{(n)}_{t, \mathbf{x}}  (g) \equiv
\ensuremath{\boldsymbol{\Gamma}}^{(n)}_{t, \mathbf{x}}  (g, g),$ where
\begin{align*}
 {\boldsymbol{{\Gamma}}}^{n}_{t,\mathbf{x}} (g,h)   {\equiv}& 
  \sum_{|\mathbf{k}|_{n}=0}^{nN}a_{\mathbf{k},n} t^{2|\mathbf{k}|_{n}+n}
  \mathbf{Z}_{\mathbf{k},n,\mathbf{x}}g{\cdot} \mathbf{Z}_{\mathbf{k},n,\mathbf{x} }h  +
   \sum_{0 \leq |\mathbf{k}|_{n}:k_{1}{\geq}1}^{nN}b_{\mathbf{k},n}t^{2|\mathbf{k}|_{n}+n-1}(\mathbf{Z}_{\mathbf{k}-\mathbf{e}_{1},n,\mathbf{x}} g)(\mathbf{Z}_{\mathbf{k},n,\mathbf{x}}h).
\end{align*}
Here and later $\mathbf{Z}_{\mathbf{k}, n, \mathbf{x}} \equiv Z_{k_1, x_1}
\ldots Z_{k_n, x_n}$ , for $\mathbf{x} \equiv (x_1, \ldots, x_n) \in
\mathbb{Z}^{nd}$, and $\mathbf{k} \equiv (k_{1,
\ldots,} k_n) \in \{ 1, \ldots, N \} \times \{0 \dd N\}^{n-1}$. The main result of this section is the following. 
\begin{theorem}[Infinite Dimensional Smoothing Estimates]
\label{thm:smooth}
 Suppose that for every $x,y\in \ZZ^d$
\begin{eqnarray} 
    \phantom{\mathbb{A}\mathbb{A}\mathbb{A}\mathbb{A}\mathbb{A}\mathbb{A}A}
    Z_{i, x} q_{\ensuremath{\operatorname{}jy}} = 0  &  &
    \hspace{2em} \ensuremath{\operatorname{if}}j > i,
    \phantom{\mathbb{A}\mathbb{A}\mathbb{A}\mathbb{A}\mathbb{A}\mathbb{A}A}
    \label{si1a}\\
    \phantom{\mathbb{A}\mathbb{A}\mathbb{A}\mathbb{A}\mathbb{A}\mathbb{A}A}
    \sum_{j = 1, \ldots, N} c_{ijk}
    q_{jx} = 0 &  & \hspace{2em}
    \ensuremath{\operatorname{if}}k > i,
    \phantom{\mathbb{A}\mathbb{A}\mathbb{A}\mathbb{A}\mathbb{A}\mathbb{A}A}
        \label{si2a}\\
    \phantom{\mathbb{A}\mathbb{A}\mathbb{A}\mathbb{A}\mathbb{A}\mathbb{A}A}
    c_{i 0 k} = 0 &  & \hspace{2em} \ensuremath{\operatorname{if}}k > i,
    \phantom{\mathbb{A}\mathbb{A}\mathbb{A}\mathbb{A}\mathbb{A}\mathbb{A}A}
       \label{si3a} \\
    \phantom{\mathbb{A}\mathbb{A}\mathbb{A}\mathbb{A}\mathbb{A}\mathbb{A}A}
    \sum_{k = 1, \ldots, N} c_{i 0 k} c_{k 0 l} = 0 &  & \hspace{2em}
    \ensuremath{\operatorname{if}}l > i, 
    \phantom{\mathbb{A}\mathbb{A}\mathbb{A}\mathbb{A}\mathbb{A}\mathbb{A}A}  
     \label{si4a}
      \end{eqnarray}
and recall that the commutator relations of Assumption \textup{(\textbf{CR.I})} are assumed to hold at each site. 
  Then, there exist coefficients $a_{\mathbf{k},n}, b_{\mathbf{k},n}, d >0$ (appearing in the definition of ${\boldsymbol{{\Gamma}}}^{n}_{t,\mathbf{x}}$),  $\varepsilon \in (0,
  \infty)$ and $t_0 \in (0, 1)$, such that if
  \[ \sup_{k_j, z_j, i, y} \| Z_{k_j, z_j}
     q_{iy} \| <
     \varepsilon, \]
  then for any $t \in (0, t_0)$ one has
  \[ \sum_{l = 1, \ldots, n} \ensuremath{\boldsymbol{\Gamma}}_t^{(l)} (f_t)
     \leq d (\PP_t (f^2) - (\PP_t f)^2). \]
\end{theorem}
\begin{proof} The proof of Theorem \ref{thm:smooth} is very similar to the proof of Theorem \ref{thm.1} and  can be found in Appendix A. 
\end{proof}
Observe that, if Assumption \textbf{(CR.I)} holds, \eqref{si3a} is redundant as $c_{i0k}=- c_{0ik}=0$ for $k \ge i-1$. Also, the assumptions in the statement of this theorem are all purely technical and are there in order to enable us to extend the technique introduced in Section \ref{S.2:Short and long time behaviour} for the finite dimensional setting to the present infinite dimensional environment. 

\label{EndConstruction}

\section{Existence of invariant states for the infinite dimensional semigroup}\label{S4ExistenceInvariantStates} 
In this section we consider the operators \(\mathcal{L}\)   which are obtained as the limits of \(\mathcal{L}_\Lambda\)  as \(\Lambda \uparrow \ZZ^d\). We will provide a strategy for the associated semigroups \(\PP_t \) which a priori may depend on the initial configuration.  To start with, consider the operator   \(L\) given in Section 2,   on \(\R^m\) equipped with a metric \(\bfd\).  For any \(x \in \ZZ^d\),  we consider the semi-distance      \(\bfd_x(\omega) = \bfd(\omega_x)\), \((\R^m)^{\ZZ^d}\ni \omega = \{\omega_y\in \R^m \}_{y \in \ZZ^d}\), and set $\rho_x(\omega)\equiv \phi(\bfd_x(\omega))$,
for some smooth increasing \(\phi \) with bounded derivative.
Given summable weights $(\epsilon_x\in(0,\infty))_{x \in \ZZ^d}$, $\sum_{x \in \ZZ^d} \epsilon_x<\infty$, we define the set 
\begin{equation}\label{omega000}
\Omega_0 = \left\{\omega \in (\R^m)^{\ZZ^d}: \sum_{x \in \ZZ^d}  \epsilon_x \rho_x( \omega)< \infty\right\}.  
\end{equation}
The following assumption plays a key role in the proof of our results. 
\begin{assumption}
\label{assumptionOnExistenceOfRho}
There exists a smooth function
\(\rho : \R^m \rightarrow \R\),  such that \(\rho (u) \rightarrow \infty \) as \(\bfd (u)\rightarrow \infty  \), 
and with compact level sets (i.e. the sets
 \(\{\rho< \ell \}\), \(\ell>0\), are compact); moreover the function  $\rho$  satisfies the Lyapunov-type condition 
\be
\label{lyapunovcondition}
L\rho  < C_1 - C_2\rho,
\ee
for some constants \(C_1 \ge 0\) and \(C_2>0\).
\end{assumption}

Examples of generators for which \(\rho = \phi(\mathbf{d})\) 
satisfies \eqref{lyapunovcondition} will be given elsewhere \cite{MVII}. The moral behind Theorem \ref{ergodicitytheorem} below is the following: roughly speaking, if we are able to exhibit a Lyapunov function $\rho$ for the finite dimensional dynamics,  then $\sum_{x\in\ZZ^d}\rho_x$ (where $\rho_x$ is a ``copy" of $\rho$ acting at $x\in \ZZ^d$) is a candidate Lyapunov function for the infinite dimensional generator; this is, provided some assumptions involving the interaction functions are satisfied, see \eqref{ccn} and 
\eqref{ecn}. With this in mind, 
we have  the following result on existence of invariant measures.
\begin{theorem} \label{ergodicitytheorem} Suppose that \(\rho\) satisfies Assumption \ref{assumptionOnExistenceOfRho} and let \( \PP_t \)  be the semigroup  generated by 
\be\label{genL}
\cl =
\sum_{x\in \ZZ^d} L_x
-\sum_{x\in\ZZ^d}  \q_x\cdot Z_x
-\sum_{yy'\in\ZZ^d\atop y\neq y'}
\s_{yy'}\cdot Z_{0,y}Z_{0,y'}\,.
\ee 
Assume also that  \(\rho\) is such that 
\be \label{ccn}
-\sum_{z\in\ZZ^d}  \q_z\cdot Z_z\rho_z \leq C_3+ \sum_{y\in\ZZ^d} \eta_{x,y}\rho_y  \qquad {\mbox{for all }} x \in \ZZ^d, 
\ee
for some constant $C_3>0$ and for some sequence  $\{\eta_{xy}\}_{x,y \in \mathbb{Z}^d}$  of positive numbers  satisfying 
\be \label{ecn}
S\equiv  \sup_{x\in\ZZ^d}\sum_{y\in\ZZ^d}\eta_{x,y}<\infty \qquad and \qquad \sum_{x\in\ZZ^d}\epsilon_x \eta_{x,y} \leq C_4 \epsilon_y, 
\ee
for some constant $C_4<C_2$ (and with $\epsilon_x$ being the sequence introduced in \eqref{omega000}). 
Then there exists a subsequence \( (t_k)_{k \in \N}\subset \R \) and  a probability measure \(\mu_\omega\) such that \(\mu_\omega (\Omega_0) = 1\) and 
\be \PP_{t_k}  f(\omega) \rightarrow \mu_\omega (f), 
\ee
as \(k \rightarrow \infty\), for all bounded smooth cylinder functions \(f\) and all \(\omega \in \Omega_0\).
\end{theorem}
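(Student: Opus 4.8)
The plan is to run a Krylov--Bogolyubov type compactness argument, where the nonstandard ingredient is an \emph{infinite-dimensional Lyapunov estimate} that simultaneously produces the needed tightness and forces every limit point to be concentrated on $\Omega$.

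\textbf{Step 1: a Lyapunov functional finite on $\Omega$.} I would set $\bar\rho(\omega)\equiv\sum_{x\in\ZZ^d}\epsilon_x\rho_x(\omega)$, which by construction is finite precisely on $\Omega$, and compute $\cl\bar\rho$. By locality, $L_{x'}$ and $\q_{x'}\cdot Z_{x'}$ kill $\rho_x$ unless $x'=x$; moreover the second-order interaction term drops out entirely on $\bar\rho$ (so no smallness of $\mathfrak S$ is needed here), because when $y\neq y'$ the two derivatives in $Z_{0,y}Z_{0,y'}$ act on distinct coordinates, at most one of which can be $x$, whence $Z_{0,y}Z_{0,y'}\rho_x=0$. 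Using the bound $L_x\rho_x<C_1-C_2\rho_x$ from Assumption~\ref{assumptionOnExistenceOfRho} together with \eqref{ccn}, multiplying by $\epsilon_x$, summing, exchanging the order of summation in $\sum_x\epsilon_x\sum_y\eta_{x,y}\rho_y=\sum_y\rho_y\sum_x\epsilon_x\eta_{x,y}$ and invoking \eqref{ecn}, I obtain a dissipative differential inequality
\[ \cl\bar\rho \;\le\; \tilde C_1-\tilde C_2\,\bar\rho, \qquad \tilde C_1\equiv (C_1+C_3)\sum_{x}\epsilon_x,\quad \tilde C_2\equiv C_2-C_4>0 . \]

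\textbf{Step 2: an a priori moment bound along the semigroup.} From Step 1 one expects $\PP_t\bar\rho(\omega)\le e^{-\tilde C_2 t}\bar\rho(\omega)+\tilde C_1/\tilde C_2$, and in particular $\sup_{t\ge 0}\PP_t\bar\rho(\omega)\le M(\omega)\equiv\bar\rho(\omega)+\tilde C_1/\tilde C_2<\infty$ for every $\omega\in\Omega$. Since $\bar\rho$ is unbounded and not manifestly in the domain of the generator, I would justify this by a truncation argument: apply the inequality of Step 1 to the cylinder functions $\sum_{x\in\Lambda}\epsilon_x\rho_x$ precomposed with smooth concave increasing cutoffs (concavity makes the second-order part only help the inequality), integrate the resulting Gronwall inequality for the finite-volume semigroups $P^{(\Lambda)}_t$ of Section 3, and then remove the cutoffs and let $\Lambda\uparrow\ZZ^d$, using $P^{(\Lambda)}_tf\to\PP_tf$ for cylinder $f$ (Theorem~\ref{existencethm}) together with monotone convergence; the $\Lambda$-boundary contributions coming from the right-hand side of \eqref{ccn} with $y\notin\Lambda$ are absorbed, again via \eqref{ecn}, and disappear in the limit.

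\textbf{Step 3: tightness, extraction, and identification of the limit.} Fix $\omega\in\Omega$ and let $\mu_t^\omega$ denote the Borel probability measure on the full product space $\Omega_0\equiv(\R^m)^{\ZZ^d}$, endowed with the (Polish) product topology, representing the positive normalised functional $f\mapsto\PP_tf(\omega)$ on bounded cylinder functions (obtained, e.g., as the weak limit of the finite-volume laws). The function $\bar\rho$ is lower semicontinuous on $\Omega_0$ and its sublevel sets satisfy $\{\bar\rho\le L\}\subset\prod_{x}\{\rho\le L/\epsilon_x\}$, compact by Tychonoff; combined with Step 2 this gives $\mu_t^\omega(\{\bar\rho>L\})\le M(\omega)/L$ uniformly in $t$, so $\{\mu_t^\omega\}_{t\ge 0}$ is tight. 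By Prokhorov's theorem there exist $t_k\to\infty$ and a probability measure $\mu_\omega$ on $\Omega_0$ with $\mu_{t_k}^\omega\Rightarrow\mu_\omega$. Applying the portmanteau theorem to the closed sets $\{\bar\rho\le L\}$ yields $\mu_\omega(\{\bar\rho\le L\})\ge 1-M(\omega)/L$, whence $\mu_\omega(\{\bar\rho<\infty\})=\mu_\omega(\Omega)=1$; applying it to bounded continuous functions yields $\PP_{t_k}f(\omega)=\mu_{t_k}^\omega(f)\to\mu_\omega(f)$ for every bounded smooth cylinder function $f$, which is exactly the claim. The \emph{main obstacle} is Step 2: Steps 1 and 3 are, respectively, a direct computation and a routine tightness argument, but transferring the formal Lyapunov inequality to the genuinely infinite-dimensional semigroup $\PP_t$ — handling the unboundedness of $\bar\rho$, the absence of an obvious core, and the uniform (in $\Lambda$) control of the interaction boundary terms — is where the convergence and finite-speed-of-propagation results of Section 3 are essential.
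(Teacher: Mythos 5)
Your proposal is correct and follows essentially the same Lyapunov/Gronwall/tightness route as the paper, differing only in execution. The paper applies Gronwall at the level of each $\rho_x$ and then resolves the coupled system after summing against $\epsilon_x$, whereas you sum first and reduce to a scalar ODE inequality $\cl\bar\rho\le\tilde C_1-\tilde C_2\bar\rho$ — cleaner, and the constants $\tilde C_1=(C_1+C_3)\sum_x\epsilon_x$, $\tilde C_2=C_2-C_4$ come out more transparently. You also make explicit two points the paper glosses over: (a) the off-diagonal interaction $\mathfrak{S}_{yy'}\cdot Z_{0,y}Z_{0,y'}$ annihilates each $\rho_x$ because at most one of the two differentiations hits coordinate $x$, so no smallness of $\mathfrak S$ enters this argument; and (b) the compactness/Prokhorov step, which the paper delegates to a reference. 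Your observation that a \emph{concave} increasing cutoff preserves the Lyapunov inequality (since $\phi''\le0$ makes the second-order contribution from $Z_0^2$ nonpositive) is a nice refinement of the paper's remark about smooth cutoffs. One small imprecision: in Step 3 you invoke the portmanteau theorem on the \emph{closed} set $\{\bar\rho\le L\}$, but that inequality goes the wrong way; you should instead use the open set $\{\bar\rho>L\}$ (open by lower semicontinuity of $\bar\rho$) and $\mu_\omega(\{\bar\rho>L\})\le\liminf_k\mu^\omega_{t_k}(\{\bar\rho>L\})\le M(\omega)/L$, or equivalently Fatou for the lower-semicontinuous $\bar\rho$ — the conclusion $\mu_\omega(\Omega)=1$ is then the same.
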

\begin{remark}
One can see that, if $\eta_{x,y} \equiv 0$ when $dist(x,y)\geq R$ (for some $R\in(0,\infty)$), then   condition \eqref{ecn} is satisfied for polynomially as well as exponentially decaying weights.
\end{remark}

\begin{proof}
The proof of Theorem \ref{ergodicitytheorem} consists of the following steps. We start by constructing a Lyapunov function for the operator \(\mathcal{L}\)   using a suitable function \(\rho\). We then use this function to deduce that the corresponding semigroup converges weakly to a probability measure, pointwise with respect to the initial configuration \(\omega \in \Omega_0\). Finally, we show that the limit measure is independent of the initial configuration. 
\par We  consider \( \sum_{x \in \ZZ^d}\rho_x\), with \(\rho_x\) as above. 
Since $Z_{lr,y}\rho_x=0$ whenever $x\neq y$, using Assumption \ref{assumptionOnExistenceOfRho}, we obtain 
\begin{equation}\label{llyapunov.1}
\mathcal{L} \rho_x  = L_x \rho_x -\sum_{x\in\La}  \q_x\cdot Z_x\rho_x \leq
C_1- C_2\rho_x -\sum_{x\in\La}  \q_x\cdot Z_x\rho_x
\end{equation}
Thus if \eqref{ccn} holds, 
setting  $\bar C \equiv C_1+C_3$, we have
\[\frac{d}{dt} \PP_{t} \rho_x = \PP_{t} \mathcal{L} \rho_x \leq
\bar C  - C_2\PP_{t}\rho_x + \sum_{y\in\ZZ^d} \eta_{x,y}\PP_{t}\rho_y\,.
\]
From  the above, 
\begin{align*}
\partial_t\left( e^{C_2t}  \mathcal{P}_t \rho_x\right) &= e^{C_2t} \partial_t \mathcal{P}_t \rho_x + C_2 e^{C_2 t}\mathcal{P}_t \rho_x \\
& \le \bar{C} e^{C_2t} + e^{C_2t} \sum_{y \in \mathbb{Z}^d} \eta_{x, y} \mathcal{P}_t \rho_y
\end{align*}
which after integration yields
\begin{align*}
e^{C_2t} \mathcal{P}_t \rho_x &\le \rho_x + \bar{C} \frac{e^{C_2t} - 1}{C_2} + \int_0^t e^{C_2s} \sum_{y \in \mathbb{Z}^d} \eta_{x, y} \mathcal{P}_s \rho _y ds
\end{align*}
Hence
\[ \PP_{t} \rho_x (\omega)\leq \bar C / C_2 + e^{-C_2t}\rho_x (\omega)+ \sum_{y\in\ZZ^d} \eta_{x,y} \int_0^tds e^{-C_2(t-s)}\PP_{s}\rho_y
\]
and for any summable weights $\epsilon_x\in(0,\infty)$ satisfying  $\sum_{x \in \ZZ^d} \epsilon_x = M < \infty$ we have
\[\PP_{t} \sum_{x\in\ZZ^d}\epsilon_x\rho_x(\omega) \leq   C' + 
e^{-C_2t}\sum_{x\in\ZZ^d}\epsilon_x\rho_x (\omega)+ \sum_{x\in\ZZ^d}\epsilon_x\sum_{y\in\ZZ^d} \eta_{x,y} \int_0^tds e^{-C_2(t-s)}\PP_{s}\rho_y
\]
with $C' = \bar C M/ C_2$.  Using the second condition in \eqref{ecn}, 
 for 
\[F_t\equiv\PP_{t} \sum_{x\in\ZZ^d}\epsilon_x\rho_x(\omega) \]
we get the following relation
\[
F_{t}  \leq   C' + 
e^{-C_2t}F_0+ C_4 \int_0^tds e^{-C_2(t-s)}F_{s}.
\]
This implies that
\be
\sup_{0\leq s\leq t} F_s \leq  (1- \bar \kappa )^{-1}\left(   C'+ 
 F_0 \right)  \ee
which is finite and uniformly bounded in $t$,  provided that \(\bar\kappa\equiv\frac{C_4}{C_2} \in(0,1)\) and  \( \sum_{x\in\ZZ^d}\epsilon_x\rho_x (\omega) <\infty\), i.e. we have

\be\label{ubd}
\sup_{t\geq 0}\left(\PP_{t} \sum_{x\in\ZZ^d}\epsilon_x\rho_x\right)(\omega) \leq 
(1- \bar \kappa )^{-1}\left(  C' + 
 \sum_{x\in\ZZ^d}\epsilon_x\rho_x (\omega) \right)
 \ee
(Strictly speaking one applies first all the above arguments to a smooth cutoff   $\rho_x^{A}\leq A<\infty$ of $\rho_x$ and after applying the formal Gronwall arguments, we pass to the limit $A\to\infty$. This is more lengthy to write, but there is no technical difficulty in that.)

 The existence of such uniform bound \eqref{ubd}  implies  (\cite{DKZ2011}, Section 3.2) the weak convergence of a subsequence of \((\PP_t)_{t\ge 0}\) for an initial configuration \(\omega \in \Omega_0\), i.e. the existence of a sequence \((t_k)_{k \in \N} \subset \R\) and a measure \(\mu_\omega\) such that for all bounded and smooth cylinder functions \(f\) 
\[
\PP_{t_k}f(\omega) \rightarrow \mu_\omega(f),
\]
as \(k \rightarrow \infty\), for all \(\omega \in \Omega_0 \). Consider the set \[\Omega_\ell = \left\{\widetilde\omega \in (\R^m)^{\ZZ^d}: \sum_{x\in\ZZ^d}\epsilon_x\rho_x (\widetilde\omega) < \ell \right\}.\]
Using Markov's inequality we obtain, for all \(\omega \in \Omega,\)
\[\mu_\omega(\Omega_\ell)\ge       1- \frac{1}{\ell}
\sup_{t\geq 0}\left(\PP_{t} \sum_{x\in\ZZ^d}\epsilon_x\rho_x\right)(\omega)\geq  1- \frac{1}{\ell}\left((1- \bar \kappa )^{-1}\left(  C' + 
 \sum_{x\in\ZZ^d}\epsilon_x\rho_x (\omega) \right)\right),\]
and thus taking the limit as \(\ell \rightarrow \infty\), we conclude that \(\mu_\omega(\Omega) = 1.\)

\end{proof}
\section{Ergodic properties of the infinite dimensional semigroup}\label{S5Ergodic Properties} 


We begin this section by a result for the  semigroup  constructed in Section \ref{S.3:Infinite dimensional semigroups}, in the case where the Lie algebra is stratified  (see Remark \ref{rem3.1}) for each $x\in\ZZ^d$, and equipped with a dilation generator $D_x$. In the remainder of this section, we will denote 
$$ \bar\Omega_\delta\equiv \left\{\omega \in (\R^m)^{\ZZ^d} : \sum_{x\in\ZZ^d}  \frac{{\bf{d}}(\omega_x)}{(1+dist(x,0))^{d+\delta}} < \infty\right\}$$
for $\delta > 0$.

\begin{theorem} \label{thm5.1}
Consider the operator
\[\mathcal{L}\equiv 
\sum_{x\in \ZZ^d} L_x
-\sum_{x\in\ZZ^d}  \q_x\cdot Y_x
-\sum_{yy'\in\ZZ^d\atop y\neq y'}
\s_{yy'}\cdot Y_{0,y}Y_{0,y'} -\lambda \sum_{x\in\ZZ^d} D_x
\]
  where $\lambda>0$ and
\[ L_x \equiv \ensuremath{{Y}}_{J,x}^2 + B_x, \]
 where ${J \subset I}$ (see notation in Section \ref{infinitDim}).  For every $n\in\mathbb{N}$ there exists $\lambda_n\in(0,\infty)$ such that
for any $\lambda\geq \lambda_n$ one has
\[ \sum_{\x\in\ZZ^{nd}} |\mathbf{Y}_{\x}^{(n)} f_t|^2 \leq e^{-m_nt} \sum_{\x\in\ZZ^{nd}} |\mathbf{Y}_{\x}^{(n)} f |^2  
\]
with some $m_n\in(0,\infty)$. 
Hence for any $\omega,\omega'\in\bar\Omega_\delta$, defined with some $\delta\in(0,\infty)$, the associated semigroup \(\mathcal{P}_t\equiv e^{t\mathcal{L}}\) satisfies, for all smooth cylider functions $f$,
\[  |\mathcal{P}_tf(\omega) - \mathcal{P}_tf(\omega')|\leq C e^{-m t} \sum_{\x\in\ZZ^{d}} \Vert  Y_x f  \Vert \]
with some $m\in(0,\infty)$ independent of $f$ and some constant $C$ dependent on $\Lambda(f)$ and $\omega,\omega'\in\bar\Omega_\delta$. 
\end{theorem}
For the full gradient bound estimate see \cite{DKZ2011} (Lemma 3.1 and Remark 3.2). The ergodicity statement follows via a similar strategy as in \cite{DKZ2011}. We consider the lexicographic order on the lattice introduced in the proof of Theorem  \ref{existencethm} and an interpolating sequence  of points in $(\R^m)^{\ZZ^d}$, $\{\omega^{(j)}\}$, where each point of the sequence is defined as follows:
 $$(\omega^{(j)})_{x_k}\equiv \begin{cases}
\omega_{x_k} \textup{, if }k\leq j\\
\omega'_{x_k} \textup{, if }  k> j . 
\end{cases}
$$
With this interpolation we consider the following telescopic expansion
\[\mathcal{P}_tf(\omega) - \mathcal{P}_tf(\omega')= \sum_k \left(\mathcal{P}_tf(\omega^{(k+1)}) - \mathcal{P}_tf(\omega^{(k)}) \right)
\] 
and notice that for a piecewise differentiable unit speed path $\gamma^{(k)}_\tau$ such that  $\gamma^{(k)}_{\tau=0} =\omega^{(k)}_{x_{k}}$  and $\gamma^{(k)}_{\tau=1} =\omega^{(k+1)}_{x_{k}}$ with tangent vectors given by $\mathbf{Y}$ (such a path exists by Chow's Theorem, see e.g. \cite{BLU}), we have 
\[ 
\left|\mathcal{P}_tf(\omega^{(k+1)}) - \mathcal{P}_tf(\omega^{(k)})\right| = 
\left| \int_0^1 d\tau  {\dot{\gamma}}^{(k)}_\tau\cdot \nabla_{\mathbf{Y}_{x_k}} \mathcal{P}_tf(\gamma^{(k)}_\tau)  \right| \leq {\bf{d}}(\omega_{x_k}, \omega_{x_k}') \|\nabla_{\mathbf{Y}_{x_k}} \mathcal{P}_tf\|.
\]
The sum of such terms over $\{k: |x_k|\geq Ct\}$, with suitable constant $C\in(0,\infty)$, can be bounded using finite speed of propagation of information
by a factor converging exponentially quickly to zero with respect to $t$.
The remaining contribution can be estimated as follows.
\[
\sum_{k: dist(x_k,0)\leq Ct} \left|\mathcal{P}_tf(\omega^{(k+1)}) - \mathcal{P}_tf(\omega^{(k)})\right|
\leq C^d t^d \max_{dist(x,0)\leq Ct} ({\bf{d}}(\omega_x), {\bf{d}}(\omega_x')) \cdot 
|\mathbf{Y} \mathcal{P}_t f|.
\] 
Thus for $\omega,\omega'
$ in the set $\bar\Omega_\delta$, to get the uniqueness of the limit it is sufficient to show
\[ |\mathbf{Y} \mathcal{P}_tf| \leq C' t^{-d-2\delta}
\] 
with some finite constant $C'$.
A similar idea to prove uniqueness of the limit $\lim_{t\to\infty}\mathcal{P}_tf$ can be used in the situation when additional restrictions on the commutation relations are imposed. 
 
\begin{theorem} \label{thm5.2}
%
  Suppose that Assumption \textup{(\textbf{CR.I})} is satisfied with 
  $c_j=0$ and $c_{0jk}=0$, $j,k=1,..,N$.
 Assume additionally that \eqref{si1a} and \eqref{si2a} hold, together with 
  \[Z_{i, x} {q}_{\ensuremath{\operatorname{}}jy} = 0  
    \hspace{2em} \ensuremath{\operatorname{if}} j \neq  i
    \]
    and \[Z_{i, x} \mathfrak{S}_{kk',yy'} = 0  .
    \]
  Under these assumptions, (and recalling the notation and definition \eqref{Gammantft}) there exist coefficients $a_i , b_i  , d_0  , \varepsilon \in (0,
  \infty)$, such that if
  \[ \sup_{k_j, z_j, i, y} \| \ensuremath{{Z}}_{k_j, z_j}
     \ensuremath{{q}}_{\ensuremath{\operatorname{}iy}} \| <
     \varepsilon,
      \]
  then for any $t \in (0, \infty)$ one has
  \[ \sum_{l = 1, \ldots, n} \ensuremath{\boldsymbol{\Gamma}}_t^{(l)} (f_t)
     \leq d_0 (\mathcal{P}_t (f^2) - (\mathcal{P}_t f)^2) \]
     for all smooth cylinder functions $f$.  Hence, if $[d+2\delta]\leq N$, then for bounded smooth functions $f$
\[ \|(B,Z_{0}, ...,Z_{j_{max}})\mathcal{P}_tf\| \leq C t^{-d-2\delta},
\]
for some constant $C \in (0, \infty)$ dependent on $f$, with  $j_{\max}\equiv \min([d+2\delta], N)$. 
Moreover  the limit $\lim_{t\to\infty} \mathcal{P}_tf(\omega)$ is unique for $\omega\in\bar\Omega_\delta$.
\end{theorem}
We notice that the decay in the directions of $Z_j$ with $j> [d+2\delta]$ is automatically sufficiently fast. Thus for the question of uniqueness it is sufficient
to concentrate on estimates in direction $B$ and $Z_j,\, j\leq [d+2\delta]$. Finally we mention that in the same situation one can take advantage of higher order estimates as follows.  
\begin{theorem}\label{thm5.3}
%
  Under the conditions of Theorem \ref{thm5.2}, assume the higher order bounds including $\mathbf{Y}\equiv (B,\mathbf{Z})$ are true globally in time.
If for some configuration $\widetilde\omega\in\bar\Omega_\delta$ one has
for any bounded cylinder function $f$, 
\[  |\mathbf{Y}^{(n)}\mathcal{P}_tf(\widetilde\omega)| \leq C_n t^{-d-2\delta},
\]
for some constants $C_n \in (0, \infty)$ and $n\leq n_{\max}\equiv [d+2\delta]$, with $\mathbf{Y}\equiv (B,\mathbf{Z})$,
then the limit $\lim_{t\to\infty} \mathcal{P}_tf(\omega)$ is unique for all $\omega\in \bar\Omega_\delta$.
\end{theorem}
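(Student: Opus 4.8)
The strategy is to fix the reference configuration $\widetilde\omega\in\Omega_\delta$ supplied by the hypothesis and to prove that, for \emph{every} $\omega\in\Omega_\delta$, $\lim_{t\to\infty}\bigl(P_tf(\omega)-P_tf(\widetilde\omega)\bigr)=0$; combined with existence of $\lim_{t\to\infty}P_tf(\widetilde\omega)$ this yields a common limit over all of $\Omega_\delta$. Existence at $\widetilde\omega$ I would obtain by a Cauchy argument: for $s<t$, $P_tf(\widetilde\omega)-P_sf(\widetilde\omega)=\int_s^t(\mathcal{L}P_rf)(\widetilde\omega)\,dr$, and since $\mathcal{L}$ involves only first and second order fields, Theorem \ref{fspthm} restricts the infinite site-sum defining $\mathcal{L}P_rf$ to the $O(r^d)$ sites within propagation range of $\Lambda(f)$, each contributing a quantity bounded through the assumed estimates $|\mathbf{Y}^{(n)}P_rf(\widetilde\omega)|\le C_nr^{-d-2\delta}$ for $n=1,2$; the exponent $-d-2\delta$ then makes the integrand integrable near $t=\infty$, so $G(\widetilde\omega):=\lim_tP_tf(\widetilde\omega)$ exists.

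For a general $\omega\in\Omega_\delta$ I would join $\widetilde\omega$ to $\omega$ inside $\Omega_\delta$ by a horizontal path $\Gamma_\tau$, $\tau\in[0,1]$, obtained by concatenating, site by site in the lexicographic order of the proof of Theorem \ref{existencethm}, unit-speed horizontal paths connecting $\widetilde\omega_x$ to $\omega_x$ (these exist by Chow's theorem, \cite{BLU}); membership of $\omega,\widetilde\omega$ in $\Omega_\delta$ bounds the weighted total length $\sum_x\epsilon_x\,\mathrm{dist}(\widetilde\omega_x,\omega_x)$, $\epsilon_x=(1+d(x,0))^{-d-\delta}$, by a finite constant, and this is precisely the summability the estimate will rest on. Along the path, $P_tf(\omega)-P_tf(\widetilde\omega)=\int_0^1\sum_{x}\dot\Gamma^{\,x}_\tau\cdot\mathbf{Y}_xP_tf(\Gamma_\tau)\,d\tau$, with only the finitely many $x$ where $\omega_x\neq\widetilde\omega_x$ contributing.

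The role of the higher order hypotheses — in contrast with Theorem \ref{thm5.2}, which used a \emph{uniform} first order bound — is to evaluate $\mathbf{Y}_xP_tf(\Gamma_\tau)$ back at $\widetilde\omega$. I would Taylor-expand $\mathbf{Y}_xP_tf(\Gamma_\tau)$, in the horizontal sense, about $\widetilde\omega$ to order $n_{\max}-1$: the terms of order $\le n_{\max}$ produce fields $\mathbf{Y}^{(m)}P_tf(\widetilde\omega)$ with $m\le n_{\max}=[d+2\delta]$, which by the standing pointwise hypothesis are $O(t^{-d-2\delta})$ times fixed (time independent) displacement products, while the $n_{\max}$-th order integral remainder carries a field $\mathbf{Y}^{(n_{\max})}P_tf$ evaluated somewhere on the path, controlled by the assumed global higher order smoothing bounds. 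Summing over the active sites and the multi-indices, one invokes the tree form of the finite speed of propagation estimate (Theorem \ref{fspthm}) to suppress configurations whose relevant sites are spread out or far from $\Lambda(f)$, and the $\Omega_\delta$-summability of the displacements to absorb the rest. The cut-off at $n_{\max}=[d+2\delta]$ is exactly what renders the accumulated power of $t$ — the $O(t^{d})$ near sites, the at most polynomial ($\Omega_\delta$-controlled) growth of the displacements there, and the decay $t^{-d-2\delta}$ (resp. the global $n_{\max}$-th order decay for the remainder) — strictly negative, so the whole expression tends to $0$; moreover, the decay of the $Z_j$-derivatives for $j>[d+2\delta]$ already coming from $\boldsymbol{\Gamma}_t^{(n)}$ is fast enough, so only the directions $B$ and $Z_j$ with $j\le[d+2\delta]$ ever need the extra hypothesis, as remarked after Theorem \ref{thm5.2}.

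I expect the main obstacle to be precisely this bookkeeping: one must check that the combinatorial proliferation of terms in the horizontal Taylor expansion (multi-indices of total order $\le n_{\max}$ distributed over the $O(t^d)$ sites within propagation range of $\Lambda(f)$), together with the worst-case displacement growth allowed by $\omega,\widetilde\omega\in\Omega_\delta$, is genuinely beaten by the combination of the pointwise rate $t^{-d-2\delta}$ at $\widetilde\omega$ for orders $\le n_{\max}$ and the assumed global higher order smoothing rates, and that no term of intermediate order survives without an additional gain of $t$ from the $\tau$-integration or from a tree-distance factor. A secondary, essentially routine point is the justification of interchanging the infinite site-sum with the $r$-integration in the existence step, which again reduces to Theorem \ref{fspthm}.
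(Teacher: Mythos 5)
Your proposal is correct and follows essentially the same approach the paper intends: the paper's entire ``proof'' of Theorem~\ref{thm5.3} is the one-line remark that it ``follows in the similar fashion as before re-expanding $\nabla_{\mathbf{Y}_{x_k}} P_tf(\gamma^{(k)}_\tau)$ sufficiently many times,'' and your horizontal Taylor expansion of $\mathbf{Y}_xP_tf(\Gamma_\tau)$ about $\widetilde\omega$ to order $n_{\max}$, using the pointwise rates at $\widetilde\omega$ for the low-order terms and the globally valid higher-order smoothing bounds for the integral remainder, is precisely that re-expansion. You also correctly identify the exponent bookkeeping (near-site count $O(t^d)$, $\Omega_\delta$-controlled displacement growth $O(t^{d+\delta})$ on the near region, decay $t^{-d-2\delta}$ at $\widetilde\omega$, and the finite-speed-of-propagation suppression of the far region) on which the argument rests, so your writeup is in fact considerably more detailed than the paper's own.
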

This result follows in the similar fashion as before re-expanding $\nabla_{\mathbf{Y}_{x_k}} \mathcal{P}_tf(\gamma^{(k)}_\tau) $ sufficiently many times.
\vspace{0.5 cm}

{\bf Acknowledgments.} The authors are grateful to the thoughtful referees that helped improving the paper, both its content and exposition. 

\section*{Appendix A}
This Appendix contains  the proofs of  Proposition \ref{pro3.1} and  Theorem \ref{thm:smooth}. 
\subsection*{A.1 Proof of Proposition \ref{pro3.1} }
\begin{proof}[Proof of Proposition \ref{pro3.1}.]
\label{BeginPRF} For $t \geq s \geq 0$, we have:
\begin{align}
  \frac{\partial}{\partial s}\mathcal{P}_{t-s}^{\Lambda}
  \lv\mathbf{Y}_{{\boldsymbol\iota},\mathbf{x}}^{(n)}f_s^{\Lambda}\rv^2
  & =\mathcal{P}_{t-s}^{\Lambda}\left\{\left( -\mathcal{L}_{\Lambda}+\frac{\partial}{\partial_s} \right) \lv \mathbf{Y}_{{\boldsymbol\iota},\mathbf{x}}^{(n)} f_s^{\Lambda}\rv^2 \right\}\nonumber\\
  & =\mathcal{P}_{t-s}^{\mathcal{L}_{\Lambda}}\left\{
 -\mathcal{L}_{\Lambda}\left|\mathbf{Y}_{{\boldsymbol\iota},\mathbf{x}}^{(n)}f_s^{\Lambda}\right|^2+2\left( \mathbf{Y}_{{\boldsymbol\iota},\mathbf{x}}^{(n)} f_s^{\Lambda}\right)
 \left(\mathcal{L}_{\Lambda}\mathbf{Y}_{{\boldsymbol\iota},\mathbf{x}}^{(n)}
f_s^{\Lambda}\right)
  +2\left( \mathbf{Y}_{{\boldsymbol\iota},\mathbf{x}}^{(n)}f_s^{\Lambda}\right)
  \left[\mathbf{Y}_{{\boldsymbol\iota},\mathbf{x}}^{(n)},\mathcal{L}_{\Lambda}\right]   f_s^{\Lambda} 
\right\}.  \label{63}
\end{align}
First we note that
\begin{align}
  -\mathcal{L}_{\Lambda}\left|\mathbf{Y}_{{\boldsymbol\iota},\mathbf{x}}^{(n)}f_s^{\Lambda}\right|^2
  +2\left(\mathbf{Y}_{{\boldsymbol\iota},\mathbf{x}}^{(n)}f_s^{\Lambda}\right) \mathcal{L}_{\Lambda}\mathbf{Y}_{{\boldsymbol\iota},\mathbf{x}}^{(n)} f_s^{\Lambda}
   =&-2 \sum_{z\in\mathbb{Z}^d}\left|Y_{J,z}
   \mathbf{Y}_{{\boldsymbol\iota},\mathbf{x}}^{(n)}f_s^{\Lambda}\right|^2 \nonumber\\
  & -2 \sum_{z,z'\in\Lambda} \mathfrak{S}_{zz'}{\cdot}
  \left(Y_{z}\mathbf{Y}_{{\boldsymbol\iota},\mathbf{x}}^{(n)}f_s^{\Lambda}\right){\cdot}
  \left(Y_{z'}\mathbf{Y}_{{\boldsymbol\iota},\mathbf{x}}^{(n)}f_s^{\Lambda}\right)\label{64}.
\end{align}
Next the last addend in \eqref{63} can be decomposed as follows:

\begin{align*}
  &2\left(\mathbf{Y}_{{\boldsymbol\iota},\mathbf{x}}^{(n)}f_s^{\Lambda}\right)\left[\mathbf{Y}_{{\boldsymbol\iota},\mathbf{x}}^{(n)},\mathcal{L}_{\Lambda}\right] f_s^{\Lambda}  \\
 & =\sum_{z\in\mathbb{Z}^d}2\left(\mathbf{Y}_{{\boldsymbol\iota},\mathbf{x}}^{(n)} f_s^{\Lambda}\right){\hspace{0.25em}}\left[\mathbf{Y}_{{\boldsymbol\iota},\mathbf{x}}^{(n)},
L_{z}\right]f_s^{\Lambda}  \qquad \qquad \qquad \qquad \qquad \qquad \qquad \qquad \qquad \qquad(\mathrm{{ \bf T_1}}) \\
 & +2 \sum_{z\in\Lambda}\sum_{\beta\in I}\mathbf{q}_{{\beta},z}\left(\mathbf{Y}_{{\boldsymbol\iota},\mathbf{x}}^{(n)}f_s^{\Lambda}\right) {\hspace{0.25em}} \left[\mathbf{Y}_{{\boldsymbol\iota},\mathbf{x}}^{(n)},Y_{{\beta},z}\right]f_s^{\Lambda}
  \qquad \qquad \qquad \qquad \qquad \qquad \qquad \qquad \,\,   (\mathrm{{ \bf T_2}})\\
&+\sum_{z\in\Lambda} \sum_{\beta\in I} 2\left(\mathbf{Y}_{{\boldsymbol\iota},\mathbf{x}}^{(n)}
f_s^{\Lambda}\right){\hspace{0.25em}}\langle\mathbf{Y}_{{\boldsymbol\iota},\mathbf{x}}^{(n)},\mathbf{q}_{{\beta},z}\rangle
Y_{{\beta},z} f_s^{\Lambda}   \qquad \qquad \qquad \qquad \qquad \qquad\,\,\, \qquad \qquad \,\,\,\,
(\mathrm{{ \bf T_3}})\\
 &+\sum_{z,z'\in\Lambda}\sum_{{\gamma} {\gamma}'\in J}
  2\mathfrak{S}_{{\gamma}{\gamma}',zz'}\left(\mathbf{Y}_{{\boldsymbol\iota},\mathbf{x}}^{(n)}f_s^{\Lambda}\right) \left( \left[\mathbf{Y}_{{\boldsymbol\iota},\mathbf{x}}^{(n)},Y_{{\gamma},z}\right]
  Y_{{\gamma}',z'}+Y_{{\gamma},z}\left[\mathbf{Y}_{{\boldsymbol\iota},\mathbf{x}}^{(n)},Y_{{\gamma}',z'}\right]\right) f_s^{\Lambda}  \qquad  \,\, (\mathrm{{ \bf T_4}})  \\
 &  +\sum_{z,z'\in\Lambda}\sum_{{\gamma} {\gamma}'\in J}
  2\left(\mathbf{Y}_{{\boldsymbol\iota},\mathbf{x}}^{(n)}f_s^{\Lambda}\right)\langle\mathbf{Y}_{{\boldsymbol\iota},\mathbf{x}}^{(n)},\mathfrak{S}_{{\gamma}{\gamma}',zz'}\rangle \,Y_{{\gamma},z}  Y_{{\gamma}',z'} f_s^{\Lambda}. \qquad \qquad \qquad \qquad\qquad \qquad \,\,\, (\mathrm{{ \bf T_5}})
\end{align*}
Regarding the terms $(\mathbf{T_2})$ and $(\mathbf{T_3})$ (and similar comments hold for $(\mathbf{T_4})$ and $(\mathbf{T_5})$), such terms have been obtained by writing
$$
\left[\mathbf{Y}_{{\boldsymbol\iota},\mathbf{x}}^{(n)},{q}_{{\beta},z} Y_{\beta,z}\right] = {q}_{{\beta},z} \left[ \mathbf{Y}_{{\boldsymbol\iota},\mathbf{x}}^{(n)}, Y_{\beta,z} \right] + \langle\mathbf{Y}_{{\boldsymbol\iota},\mathbf{x}}^{(n)},{q}_{{\beta},z}\rangle \,.
$$
An explicit expression for the bracket $\langle\mathbf{Y}_{{\boldsymbol\iota},\mathbf{x}}^{(n)},g\rangle $, where $g$ is any sufficiently smooth function, can be found below in \eqref{3.1.7}. Beyond the specific expression written in \eqref{3.1.7}, what is important to notice is that the bracket $\langle\mathbf{Y}_{{\boldsymbol\iota},\mathbf{x}}^{(n)},g\rangle $ contains only differential operators of order up to $n-1$. Therefore 
 in  $\mathbf{T_3}$, the term $\langle\mathbf{Y}_{{\boldsymbol\iota},\mathbf{x}}^{(n)},\mathbf{q}_{{\beta},z}\rangle
Y_{{\beta},z}$  contains only differential operators of order up to $n$. 

To estimate each of the  terms  $(\mathbf{T_1})$- $(\mathbf{T_5})$   we use lengthy but elementary arguments of
which we list the result in Lemma  \ref{lem3.2} to Lemma \ref{lem3.6} below,   and briefly sketch an idea of the proof.
The estimates of $(\mathbf{T_1})$  and $(\mathbf{T_2})$ in the first two lemmas below are based on 
our locality assumption, i.e. the fact that 
$\mathbf{Y}_{{\boldsymbol\iota}, \mathbf{x}}^{(n)}$ and $L_z$,
$Y_{\beta, z}$ \ commute unless $z \in \mathbf{x}$, and the structure of
$L_z$, together with the quadratic Young's inequality \eqref{young}. We recall the notation $c\equiv \sup_{\alpha,\gamma,\beta}\lv 
c_{\alpha\gamma\beta} \rv$, where $c_{\alpha\gamma\beta}$ are as in Assumption (\textup{\textbf{GCR}}).
\begin{lemma}[Estimate of ($\mathrm{{ \bf T_1}}$)]\label{lem3.2} 
 Under the assumptions of Proposition \ref{pro3.1}, for any
  $\varepsilon \in (0, \infty)$ we have 
  \[  2 \sum_{{\boldsymbol\iota}}  \left| \left( 
     \mathbf{Y}_{{\boldsymbol\iota}, \mathbf{x}}^{(n)}
     f_s^{\Lambda} \right)  \hspace{0.25em}  \left[ 
     \mathbf{Y}_{{\boldsymbol\iota}, \mathbf{x}}^{(n)}, \sum_{z
     \in \mathbb{Z}^d} L_z  \right] f_s^{\Lambda}  \right| 
     \leq (- \lambda n
     \kappa + A_n)  \hspace{0.25em} | \mathbf{Y}_{\mathbf{x}}^{(n)}
     f_s^{\Lambda} |^2 + \varepsilon \sum_{j = 1}^n  \hspace{0.25em} | Y_{J,
     x_j} \mathbf{Y}_{\mathbf{x}}^{(n)} f_s^{\Lambda} |^2, \]
  where $\kappa \equiv \inf_{\alpha \in I} \kappa_{\alpha}$, $b \equiv
  \sup_{\alpha \in I, x \in \mathbb{Z}^d} |b_{\alpha, x} |$ and
  $A_n \equiv 2 nbc |I| + n \varepsilon^{- 1} |I| + \frac{1}{2} n^2 c^2
       |I|^2  (|I| + 1)$.
\end{lemma}
\begin{lemma}[Estimate of $(\mathrm{{ \bf T_2}}$)]\label{lem3.3}
Under the assumptions of Proposition \ref{pro3.1},
  \[ \label{3.1.6} \sum_{{\boldsymbol\iota}} \left| 2 \sum_{z \in
     \Lambda} \sum_{\beta \in I}  {q}_{\beta, z}  \left( 
     \mathbf{Y}_{{\boldsymbol\iota}, \mathbf{x}}^{(n)}
     f_s^{\Lambda} \right)  \hspace{0.25em}  \left[ 
     \mathbf{Y}_{{\boldsymbol\iota}, \mathbf{x}}^{(n)}, Y_{\beta,
     z}  \right] f_s^{\Lambda}  \right| \leq 2 n \bar{q} c|I| \hspace{0.25em}
     | \mathbf{Y}_{\mathbf{x}}^{(n)} f_s^{\Lambda} |^2 \]
  with $\bar{q} \equiv \sup_{\alpha, z} || \mathbf{q}_{\alpha, z} ||_{\infty}$.
\end{lemma}
The key to the next estimate is contained in the following 
expression.  For any sufficiently smooth function $g$, we have
\begin{equation} \label{3.1.7} 
\langle\mathbf{Y}_{{\boldsymbol\iota},
   \mathbf{x}}^{(n)}, g\rangle = \sum_{l = 0}^{n-1}
   \sum_{\ensuremath{\boldsymbol{\gamma}} \subset \boldsymbol{\iota} : |
   \ensuremath{\boldsymbol{\gamma}} | = l} 
   \sum_{\ensuremath{\boldsymbol{z}} \subset \boldsymbol{x} : |
   \ensuremath{\boldsymbol{z}} | = l}
   \varphi_n(l) \left(\mathbf{Y}_{\check{\ensuremath{\boldsymbol{\gamma}}},
   \check{\mathbf{z}}}^{(n - l)} g \right) \hspace{0.25em} 
   \mathbf{Y}_{\ensuremath{\boldsymbol{\gamma}}, \mathbf{z}}^{(l)},  
\end{equation}
where 
\begin{align*}
\varphi_n(l)=
\begin{cases}
 1 & \text{if }   l\le n/2
 \\-1 & \text{otherwise,}
\end{cases} 
\end{align*}
and with the convention that $\mathbf{Y}_{\check{{\boldsymbol\gamma}},
\check{\mathbf{z}}}^{(0)} \equiv id$ and the elements of  $\ensuremath{\boldsymbol{\gamma}},\check{\ensuremath{\boldsymbol{\gamma}}}$ are ordered in the same way as in  ${\boldsymbol\iota}$ and those of  $\check{\mathbf{z}}, \check{\mathbf{z}}$ in the same way as in $ \mathbf{x}$.

\begin{lemma}[Estimate of ($\mathrm{{ \bf T_3}}$)]\label{lem3.4}
 Under the assumptions of Proposition \ref{pro3.1}, for any
  $\varepsilon \in (0, 1)$, we have
  \begin{eqnarray*}
    \sum_{{\boldsymbol\iota}}  \left| \sum_{y \in \Lambda}
    \sum_{\beta} 2 \left( \mathbf{Y}_{{\boldsymbol\iota},
    \mathbf{x}}^{(n)} f_s^{\Lambda} \right) \hspace{0.25em}  \langle
    \mathbf{Y}_{{\boldsymbol\iota}, \mathbf{x}}^{(n)},
    {q}_{\beta, y} \rangle Y_{\beta, y} f_s^{\Lambda} \right| & \leq &
    \varepsilon^{- 1} B_n | \mathbf{Y}_{\mathbf{x}}^{(n)} f_s^{\Lambda} |^2 
    \\
    &  & + \varepsilon \sum_{k = 0}^{n-1}  \sum_{\mathbf{z} \subset
    \mathbf{x}, y \in \mathbb{Z}^d} B_{\mathbf{x}, k} (\mathbf{z}, y)
    \hspace{0.25em} | \mathbf{Y}_{(\mathbf{z}, y)}^{(k + 1)} f_s^{\Lambda} |^2
  \end{eqnarray*}
  \[  \]
  with
  \[ B_n  \equiv  \sum_{y \in \mathbb{Z}^d} \sum_{\beta \in I} \sum_{k = 1}^{n} \sup_{({\boldsymbol\iota}, \mathbf{x})} 
     \sum_{(\ensuremath{\boldsymbol{\gamma}}, \mathbf{z}) \subset
     ({\boldsymbol\iota}, \mathbf{x}) : |
     \ensuremath{\boldsymbol{\gamma}} | = k} \|
     \mathbf{Y}_{\check{\ensuremath{\boldsymbol{\gamma}}},
     \check{\mathbf{z}}}^{(n - k)} {q}_{\beta, y} \|_{\infty} \]
  and
  \[ B_{\mathbf{x}, k} (\mathbf{z}, y) \equiv \delta_{\{ \mathbf{z} \subset
     \mathbf{x} : | \mathbf{z} | = k\}} \sup_{\beta,
     \check{\ensuremath{\boldsymbol{\gamma}}}} \|
     \mathbf{Y}_{\check{\ensuremath{\boldsymbol{\gamma}}},
     \check{\mathbf{z}}}^{(n - k)} {q}_{\beta, y} \|_{\infty}. \]
\end{lemma}
We remark that when we consider an interaction with finite range $R \in
\mathbb{N}$, i.e. when $\mathbf{q}_{\beta, y}$ is a cylinder function
dependent only on coordinates $\omega_z$ with
$\ensuremath{\operatorname{dist}} (z, y) < R$, we have
\[ \mathbf{Y}_{\check{\ensuremath{\boldsymbol{\gamma}}},
   \check{\mathbf{z}}}^{(n - k)} {q}_{\beta, y} = 0, \hspace{2em}
   \ensuremath{\operatorname{dist}} \left( \check{\mathbf{z}}, y \right) \geq
   R. \]

The next estimate uses our locality assumption together with the
quadratic Young's inequality.

\begin{lemma}[Estimate of $(\mathrm{{ \bf T_4}})$]\label{lem3.5} Under the assumptions of Proposition \ref{pro3.1}, for any
  $\varepsilon \in (0, 1)$ \label{3.1.9a}
\begin{align*}
    &\sum_{\boldsymbol\iota} \left|\sum_{yy'\in\Lambda}\sum_{{\gamma} {\gamma}'\in J}2\mathfrak{S}_{{\gamma}{\gamma}',yy'}\left(\mathbf{Y}_{{\boldsymbol\iota},\mathbf{x}}^{(n)}f_s^{\Lambda}\right){\hspace{0.25em}}\left(\left[\mathbf{Y}_{{\boldsymbol\iota},\mathbf{x}}^{(n)},Y_{{\gamma},y}\right]Y_{{\gamma}',y'}
    +Y_{{\gamma},y} \left[\mathbf{Y}_{{\boldsymbol\iota},\mathbf{x}}^{(n)},Y_{{\gamma}',y'}\right]\right)f_s^{\Lambda}\right|\\
    &{\leq}C_n\left|\mathbf{Y}_{\mathbf{x}}^{(n)}f_s^{\Lambda}\right|^2 +{\varepsilon} \sum_{y\in\Lambda}\left|Y_{J,y} \mathbf{Y}_{\mathbf{x}}^{(n)}
f_s^{\Lambda}\right|^2    
  \end{align*}
  with positive constants
  \[ \label{3.1.9b} C_n \leq \varepsilon^{- 1}  n^3 c|I|
     \hspace{0.25em} \sup_{z \in \mathbb{Z}^d}  \sum_{y \in \mathbb{Z}^d} 
     \sum_{\gamma \gamma' \in J} (| \mathfrak{S}_{\gamma' \gamma, yz} | + | \mathfrak{S}_{\gamma \gamma',zy} |) + \frac{1}{2} n^2 \bar{C}_n (|I|^2 + 1),
  \]
where $\bar{C}_n=2c \,\sup_{\gamma\in I, y,z\in \ZZ^d}  \sum_{\gamma'\in J} \lv \s_{\gamma' \gamma, yz}\rv $.   
\end{lemma}

The last estimate is similar to that of ($\mathrm{{ \bf T_3}}$).

\begin{lemma}[Estimate of ($\mathrm{{ \bf T_5}}$)] \label{lem3.6} 
 Under the assumptions of Proposition \ref{pro3.1}, for any $n\geq 1$, for every $ l=1\dd n$, for any $\x=(x_1 \dd x_n)\subset \ZZ^d$ and for every $\mathbf{z} \subset \mathbf{x}$  there exist  positive constants   $\mathcal{D}_{\mathbf{x}, n}^{(l)} (\mathbf{z}, y)
  \in (0, \infty)$,  satisfying
  \[ \sup_{| \mathbf{x} | = n, \mathbf{z} \subset \mathbf{x}}  \sum_{y \in
     \mathbb{Z}^d} \mathcal{D}_{\mathbf{x}, n}^{(l)} (\mathbf{z}, y) < \infty
  \]
  such that for any $\varepsilon \in (0, 1)$ the following bound is true
  \begin{align*}
    &\sum_{{\boldsymbol\iota}} \left| 2 \left(
    \mathbf{Y}_{{\boldsymbol\iota}, \mathbf{x}}^{(n)}
    f_s^{\Lambda} \right) \sum_{yy' \in \Lambda}
     \sum_{\gamma, \gamma' \in J}
     \langle \mathbf{Y}_{{\boldsymbol\iota},
    \mathbf{x}}^{(n)}, \mathfrak{S}_{\gamma \gamma', yy'} \rangle 
    Y_{\gamma, y} Y_{\gamma', y'} f_s^{\Lambda} \right|
    \\
   & \leq \varepsilon \sum_{l = 1}^n \sum_{y \in \Lambda} \sum_{\mathbf{z}
    \subset \mathbf{x} : | \mathbf{z} | = l} \mathcal{D}_{\mathbf{x}, n}^{(l)}
    (\mathbf{z}, y) \left| Y_{J, y} \mathbf{Y}_{\mathbf{z}}^{(l)}
    f_s^{\Lambda} \right|^2 + \sum_{l = 1}^n
    \sum_{\mathbf{z}\subset\mathbf{x}:|\mathbf{z}|=l} D_{\mathbf{x}, n}^{(l)} (\mathbf{z})
    \hspace{0.25em} \left| \mathbf{Y}_{\mathbf{z}}^{(l)} f_s^{\Lambda}
    \right|^2
  \end{align*}
   for  some $D_{\mathbf{x},
  n}^{(l)} (\mathbf{z}) \in (0, \infty)$, $l = 1, .., n, \hspace{0.25em}
  \mathbf{z} \subset \mathbf{x}$.
\end{lemma}
We remark that because of our strong assumption of locality of 
$\mathfrak{S}_{\gamma \gamma', yy'}$, we have
\[ \mathbf{Y}_{\check{{\boldsymbol{\beta}}},
   \check{\mathbf{z}}}^{(n - l)}  \mathfrak{S}_{\gamma \gamma',yy'} = 0, \hspace{2em}
   {if}\hspace{2em} \check{\mathbf{z}} \subseteq \{ yy' \}. \]
Now we combine all estimates of Lemma \ref{lem3.2} to Lemma \ref{lem3.6}, i.e. all the estimates of ($\mathrm{{ \bf T_1}}$)-($\mathrm{{ \bf T_5}}$):
\begin{align}
  &
  \left|2\left(\mathbf{Y}_{{\boldsymbol\iota},\mathbf{x}}^{(n)}f_s^{\Lambda}\right){\hspace{0.25em}}\left[\mathbf{Y}_{{\boldsymbol\iota},\mathbf{x}}^{(n)},\mathcal{L}_{\Lambda}\right] f_s^{\Lambda}\right| \nonumber\\
  &
  {\leq}
  (-{\lambda} n {\kappa}+A_n){\hspace{0.25em}}|\mathbf{Y}_{\mathbf{x}}^{(n)}f_s^{\Lambda}|^2+{\varepsilon} \sum\ensuremath{_{\textrm{j=1}}}^n {\hspace{0.25em}}|Y_{J,x_{j}}Y_{\mathbf{x}}^{(n)} f_s^{\Lambda}|^2\nonumber\\
  &
  +2 n \bar{q} c|I|{\hspace{0.25em}}|\mathbf{Y}_{\mathbf{x}}^{(n)}f_s^{\Lambda}|^2 \nonumber\\
  &+{\varepsilon}^{-1}B_n|\mathbf{Y}_{\mathbf{x}}^{(n)}f_s^{\Lambda}|^2
  +{\varepsilon} \sum_{k=0}^{n-1}
  \sum_{{\mathbf{z}{\subset}\mathbf{x},y{\in}{\Lambda}} \atop {|\mathbf{z}|=k}}
  B_{\mathbf{x},k} (\mathbf{z},y) {\hspace{0.25em}}|\mathbf{Y}_{(\mathbf{z},y)}^{(k+1)} f_s^{\Lambda}|^2\nonumber \\
  &+C_n\left|\mathbf{Y}_{\mathbf{x}}^{(n)}f_s^{\Lambda}\right|^2
  +{\varepsilon}n^2 \bar{C}_n \sum_{y\in\Lambda}\left|Y_{J,y}\mathbf{Y}_{\mathbf{x}}^{(n)}f_s^{\Lambda}\right|^2
  \nonumber\\
  &
  +{\varepsilon} \sum_{l=1}^n\sum_{y\in\Lambda}
  \sum_{\mathbf{z}{\subset}\mathbf{x}:|\mathbf{z}|=l} \mathcal{D}_{\mathbf{x},n}^{(l)}(\mathbf{z},y)\left|Y_{J,y} \mathbf{Y}_{\mathbf{z}}^{(l)}f_s^{\Lambda}\right|^2+\sum_{l=1}^n
  \sum_{{\mathbf{z}{\subset}\mathbf{x}}\atop{|\mathbf{z}|=l}} D_{\mathbf{x},n}^{(l)}(\mathbf{z}){\hspace{0.25em}} \left|\mathbf{Y}_{\mathbf{z}}^{(l)}f_s^{\Lambda}\right|^2. \label{77}
\end{align}
This can be rewritten as follows:
\begin{align*}
  \left|2\left(\mathbf{Y}_{{\boldsymbol\iota},\mathbf{x}}^{(n)}f_s^{\Lambda}\right)
  {\hspace{0.25em}}\left[\mathbf{Y}_{{\boldsymbol\iota},\mathbf{x}}^{(n)},\mathcal{L}_{\Lambda}\right] f_s^{\Lambda}\right|{\hspace{0.25em}} &{\leq}{\hspace{0.25em}}{\varepsilon} \sum_{l=1}^n\sum_{y\in\Lambda}\sum_{\mathbf{z}{\subset}\mathbf{x}:|\mathbf{z}|=l}\mathcal{A}_{\mathbf{x},n}^{(l)}(\mathbf{z},y)
\left|Y_{J,y}\mathbf{Y}_{\mathbf{z}}^{(l)}f_s^{\Lambda}\right|^2\\
  &
  +\left(-{\lambda} n {\kappa}+A_n\right){\hspace{0.25em}}|\mathbf{Y}_{\mathbf{x}}^{(n)}f_s^{\Lambda}|^2+\sum_{l=1}^{n-1}\sum_{{\mathbf{z}{\subset}\mathbf{x}}\atop{|\mathbf{z}|=l}}
  \mathcal{B}_{\mathbf{x},n}^{(l)}(\mathbf{z}){\hspace{0.25em}}\left|\mathbf{Y}_{\mathbf{z}}^{(l)}f_s^{\Lambda}\right|^2\\
  & +{\varepsilon} \sum_{k=0}^{n-1}
  \sum_{{\mathbf{z}{\subset}\mathbf{x},y{\in}{\Lambda}} \atop{|\mathbf{z}|=k}} B_{\mathbf{x},k}(\mathbf{z},y){\hspace{0.25em}}|\mathbf{Y}_{(\mathbf{z},y)}^{(k+1)}f_s^{\Lambda}|^2,
\end{align*}
where  $\mathcal{A}_{\mathbf{x},n}^{(l)}(\mathbf{z},y)$ and $\mathcal{B}_{\mathbf{x},n}^{(l)}(\mathbf{z})$ are positive constants depending on the constants appearing in \eqref{77}.  
Putting this together with \eqref{63} and \eqref{64}, we obtain
\begin{align*}
  {\frac{\partial}{\partial s}}\mathcal{P}_{t-s}^{\Lambda}\left|\mathbf{Y}_{\mathbf{x}}^{(n)}f_s^{\Lambda}\right|^2
  & {\leq}\mathcal{P}_{t-s}^{\Lambda}\left\{-2 \sum_{z\in\mathbb{Z}^d}\left|Y_{J,z}\mathbf{Y}_{\mathbf{x}}^{(n)}f_s^{\Lambda}\right|^2-2 \sum_{z,z'\in\Lambda}\mathfrak{S}_{zz'}{\cdot}\sum_{\boldsymbol\iota}\left(Y_{z}\mathbf{Y}_{{\boldsymbol\iota},\mathbf{x}}^{(n)}f_s^{\Lambda}\right){\cdot}\left(Y_{z'}
  \mathbf{Y}_{{\boldsymbol\iota},\mathbf{x}}^{(n)}f_s^{\Lambda}\right)\right\}\\
  &
  +\mathcal{P}_{t-s}^{\Lambda}\left\{{\varepsilon} \sum_{l=1}^n\sum_{y\in\Lambda}\sum_{\mathbf{z}{\subset}\mathbf{x}:|\mathbf{z}|=l}\mathcal{A}_{\mathbf{x},n}^{(l)}(\mathbf{z},y)\left|Y_{J,y}\mathbf{Y}_{\mathbf{z}}^{(l)}f_s^{\Lambda}\right|^2{\phantom{A*A*A*A*A*A*A}}\right.\\
  & \phantom{A*A*A}  +\left(-{\lambda} n {\kappa}+\mathbf{A}_n\right){\hspace{0.25em}}|\mathbf{Y}_{\mathbf{x}}^{(n)}f_s^{\Lambda}|^2+\sum_{l=1}^{n-1}\sum_{{\mathbf{z}{\subset}\mathbf{x}}\atop{|\mathbf{z}|=l}}\mathcal{B}_{\mathbf{x},n}^{(l)}(\mathbf{z}){\hspace{0.25em}}\left|\mathbf{Y}_{\mathbf{z}}^{(l)}f_s^{\Lambda}\right|^2\\
  & \phantom{A*A*A} \left.+{\varepsilon} \sum_{l=0}^{n-1}\sum_{{\mathbf{z}{\subset}\mathbf{x},y}\atop{|\mathbf{z}|=l}}B_{\mathbf{x},k}^{(l)}(\mathbf{z},y){\hspace{0.25em}}|\mathbf{Y}_{(\mathbf{z},y)}^{(l+1)}f_s^{\Lambda}|^2\right\}.
\end{align*}
Assuming that for some  $\delta \in (0, 1)$  we have  $\mathfrak{S}_{zz'} \leq
\delta \,\textup{Id}$ in the sense of quadratic forms, we can simplify the above as
follows.
\begin{align*} 
{\frac{\partial}{\partial s}}\mathcal{P}_{t-s}^{\Lambda}\left|\mathbf{Y}_{\mathbf{x}}^{(n)}f_s^{\Lambda}\right|^2
  &
  {\leq}\mathcal{P}_{t-s}^{\Lambda}\left\{-2 (1-{\delta}) \sum_{z\in\mathbb{Z}^d}\left|Y_{J,z}\mathbf{Y}_{\mathbf{x}}^{(n)}f_s^{\Lambda}\right|^2\right\}\\
  &
  +\mathcal{P}_{t-s}^{\Lambda}\left\{{\varepsilon} \sum_{l=1}^n\sum_{y\in\Lambda}\sum_{\mathbf{z}{\subset}\mathbf{x}:|\mathbf{z}|=l}\mathcal{A}_{\mathbf{x},n}^{(l)}(\mathbf{z},y)\left|Y_{J,y}\mathbf{Y}_{\mathbf{z}}^{(l)}f_s^{\Lambda}\right|^2{\phantom{A*A*A*A*A*A*A}}\right.\\
  & {\phantom{A*A*A}}+
  \left(-{\lambda} n {\kappa}+A_n\right){\hspace{0.25em}}|\mathbf{Y}_{\mathbf{x}}^{(n)}f_s^{\Lambda}|^2+\sum_{l=1}^{n-1}\sum_{{\mathbf{z}{\subset}\mathbf{x}} \atop{|\mathbf{z}|=l}}\mathcal{B}_{\mathbf{x},n}^{(l)}(\mathbf{z}){\hspace{0.25em}}\left|\mathbf{Y}_{\mathbf{z}}^{(l)}f_s^{\Lambda}\right|^2\\
  &
  \left.{\phantom{A*A*A}}+{\varepsilon} \sum_{l=0}^{n-1}
  \sum_{{\mathbf{z}{\subset}\mathbf{x},y}\atop{|\mathbf{z}|=l}} B_{\mathbf{x},n}^{(l)}(\mathbf{z},y){\hspace{0.25em}}
  |\mathbf{Y}_{(\mathbf{z},y)}^{(l+1)}f_s^{\Lambda}|^2\right\} . 
\end{align*}
Choosing $\varepsilon$ so that $\varepsilon \sup_{\mathbf{x}, y, l} 
\mathcal{A}_{\mathbf{x}, n}^{(n)} (\mathbf{z}, y) < 2 (1 - \delta)$, we get
the following bound
\begin{align*}
{\frac{\partial}{\partial s}}\mathcal{P}_{t-s}^{\Lambda}\left|\mathbf{Y}_{\mathbf{x}}^{(n)}f_s^{\Lambda}\right|^2
  &
  {\leq}\mathcal{P}_{t-s}^{\Lambda}\left\{\left(-{\lambda}n{\kappa}+A_n\right){\hspace{0.25em}}|\mathbf{Y}_{\mathbf{x}}^{(n)}f_s^{\Lambda}|^2+\sum_{l=1}^{n-1}\sum_{{\mathbf{z}{\subset}\mathbf{x}}\atop{|\mathbf{z}|=l}}\mathcal{B}_{\mathbf{x},n}^{(l)}(\mathbf{z}){\hspace{0.25em}}\left|\mathbf{Y}_{\mathbf{z}}^{(l)}f_s^{\Lambda}\right|^2\right.\\
  &
  \left.+{\varepsilon} \sum_{l=1}^{n-1}\sum_{y\in\Lambda}\sum_{\mathbf{z}{\subset}\mathbf{x}:|\mathbf{z}|=l}\mathcal{A}_{\mathbf{x},n}^{(l)}(\mathbf{z},y)\left|Y_{J,y}\mathbf{Y}_{\mathbf{z}}^{(l)}f_s^{\Lambda}\right|^2+{\varepsilon} \sum_{l=0}^{n-1}\sum_{{\mathbf{z}{\subset}\mathbf{x},y{\in}{\Lambda}}\atop{|\mathbf{z}|=l}} B_{\mathbf{x},n}^{(l)}(\mathbf{z},y){\hspace{0.25em}}|\mathbf{Y}_{(\mathbf{z},y)}^{(l+1)}
f_s^{\Lambda}|^2\right\} .
\end{align*}
Setting 
\be\label{vn}
 \mathbf{v}_n \equiv \left( - \lambda n \kappa + A_n \right), 
 \ee
 we obtain the statement of Proposition \ref{pro3.1}.
\end{proof}
\subsection*{A.2 Proof of Theorem \ref{thm:smooth}}
Let us  define 
\[ \mathbf{Q}_t^{(1)} (f) \equiv \ensuremath{\boldsymbol{\Gamma}}_t^{(1)} (f)
   + d \lv f\rv^2 \]
and, for $n>1$,
\[ \mathbf{Q}_t^{(n)} (g) \equiv \ensuremath{\boldsymbol{\Gamma}}^{(n)}_t  (g)
   + \varsigma_n \mathbf{Q}_t^{(n - 1)} (g) ,\]
with some $\varsigma_n>0$ to be chosen later. 
\proof[Proof of Theorem \ref{thm:smooth}]
We begin with an estimate for the case $n = 1$. 
For $f_t \equiv \PP_t f \equiv e^{t \mathcal{L}} f$, we have
\begin{align*}
  \partial_s& \PP_{t - s} \mathbf{Q}_s^{(1)} (f_s)  =   \PP_{t - s} (\partial_s -
  \mathcal{L}_{}) \mathbf{Q}_s^{(1)} (f_s)
  \\&-   2 \sum_{x \in \mathbb{Z}^d}  \sum_{i = 0, \ldots, N} \PP_{t - s} (a_i
  s^{2 i_{} + 1} \mathcal{E}_\mathcal{L} (Z_{i, x} f_s)  + b_{i + 1} s^{2 i_{}}
  \mathcal{E_\mathcal{L}} (Z_{i - 1, x} f_s, Z_{i, x} f_s)) - 2 d\mathcal{E_\mathcal{L}} (f_s)\\
  &    + \sum_{x \in \mathbb{Z}^d}  \sum_{i = 0, \ldots, N} \PP_{t - s} (2 a_i
  s^{2 i_{} + 1} (Z_{i, x} f_s \cdot [Z_{i, x}, \mathcal{L}_{}] f_s) + b_i
  s^{2 i_{}} ([Z_{i - 1, x}, \mathcal{L}] f_s \cdot Z_{i, x} f_s + Z_{i - 1,
  x} f_s \cdot [Z_{i, x}, \mathcal{L}_{}] f_s) )\\
  &    + \sum_{x \in \mathbb{Z}^d}  \sum_{i = 0, \ldots, N} \PP_{t - s} ((2
  i_{} + 1) a_i s^{2 i_{}} | Z_{i, x} f_s |^2  + 2 i_{} b_i s^{2 i_{} - 1}
  Z_{i - 1, x} f_s \cdot Z_{i, x} f_s)
  \\  &\equiv (\mathbf{I}),
\end{align*}
where $\mathcal{E} _\mathcal{L}(V, W) \equiv {\frac{1}{2}} (\mathcal{L}_{}
(\ensuremath{\operatorname{VW}}) - V \mathcal{L}_{} W - (\mathcal{L} V)_{} W)$
and $\mathcal{E_\mathcal{L}} (V) \equiv \mathcal{E_\mathcal{L}} (V, V)$ and $Z_{-1}\equiv 0$.  The right-hand side can be written as  \[(\mathbf{I}) = (\mathbf{II})+(\mathbf{III}),\]
with
\begin{align*}
(&\mathbf{II})=  - 2 \sum_{x \in \mathbb{Z}^d}  \sum_{i = 0, \ldots, N} \PP_{t - s} (a_i s^{2
  i_{} + 1} \mathcal{E}_{L_x} (Z_{i, x} f_s)  + b_i s^{2 i_{}}
  \mathcal{E}_{L_x} (Z_{i - 1, x} f_s, Z_{i, x} f_s)) - 2 d\mathcal{E}_{L_x}
  (f_s)\\
  &+\sum_{x \in \mathbb{Z}^d}  \sum_{i = 0, \ldots, N} \PP_{t - s} (2 a_i s^{2
  i_{} + 1} (Z_{i, x} f_s \cdot [Z_{i, x}, L_x] f_s)  + 2 b_i s^{2 i_{}}
  ([Z_{i - 1, x}, L_x] f_s \cdot Z_{i, x} f_s + Z_{i - 1, x} f_s \cdot [Z_{i,
  x}, L_x] f_s) ) 
 \\ &+ \sum_{x \in \mathbb{Z}^d}  \sum_{i = 0, \ldots, N} \PP_{t - s} ((2 i_{} + 1)
  a_i s^{2 i_{}} | Z_{i, x} f_s |^2  + 2 i_{} b_i s^{2 i_{} - 1} Z_{i - 1, x}
  f_s \cdot Z_{i, x} f_s) ,
\end{align*}
and 
\begin{align}
(&\mathbf{III})= - 2 \sum_{x, y \in \mathbb{Z}^d}  \sum_{i = 0, \ldots, N} \PP_{t - s} (a_i   s^{2 i_{} + 1} \mathcal{E}_{L_y} (Z_{i, x} f_s)  + b_i s^{2 i} \mathcal{E}_{L_y} (Z_{i - 1, x} f_s, Z_{i, x} f_s))  \label{eq3} 
\\& - 2 \sum_{y, y' \in \mathbb{Z}^d} 
  \mathfrak{S}_{\ensuremath{\operatorname{yy}}'} \cdot \sum_{x, y \in
  \mathbb{Z}^d}  \sum_{i = 0, \ldots, N} \PP_{t - s} a_i s^{2 i_{} + 1} 
  (Z_{0, y} Z_{i, x} f_s) \cdot
  (Z_{0, y'} Z_{i, x} f_s)   \label{eq4}
\\&  - 2 \sum_{y, y' \in \mathbb{Z}^d} 
  \mathfrak{S}_{\ensuremath{\operatorname{yy}}'} \cdot \sum_{x, y \in
  \mathbb{Z}^d}  \sum_{i = 0, \ldots, N} \PP_{t - s} b_i s^{2 i_{}}
  (Z_{0, y} Z_{i - 1, x} f_s) \cdot
  (Z_{0, y'} Z_{i, x} f_s)  \label{eq5}
  \\& + \sum_{x, y \in \mathbb{Z}^d}  \sum_{i = 0, \ldots, N} \PP_{t - s} \left( 2
  a_i s^{2 i_{} + 1} \left( Z_{i, x} f_s \cdot \left[ Z_{i, x}, \mathbf{q}_y
  \cdot \ensuremath{\boldsymbol{Z}}_y \right] f_s \right)  \right)  \label{eq6}
  \\& + \sum_{x, y \in \mathbb{Z}^d}  \sum_{i = 0, \ldots, N} \PP_{t - s} \left( 2
  b_i s^{2 i_{}} \left( \left[ Z_{i - 1, x}, \mathbf{q}_y \cdot
  \ensuremath{\boldsymbol{Z}}_y \right] f_s \cdot Z_{i, x} f_s + Z_{i - 1, x}
  f_s \cdot \left[ Z_{i, x}, \mathbf{q}_y \cdot \ensuremath{\boldsymbol{Z}}_y
  \right] f_s \right)  \right)   \label{eq7}
  \\& + \sum_{y, y' \in \mathbb{Z}^d} \sum_{x \in \mathbb{Z}^d}  \sum_{i = 0,
  \ldots, N} \PP_{t - s} (2 a_i s^{2 i_{} + 1} (Z_{i, x} f_s \cdot [Z_{i, x},
  \mathfrak{S}_{\ensuremath{\operatorname{yy}}'} \cdot
  \ensuremath{\boldsymbol{Z}}_{0, y} \ensuremath{\boldsymbol{Z}}_{0, y'}] f_s)
  )  \label{eq8}
  \\& + \sum_{x \in \mathbb{Z}^d} \sum_{y, y' \in \mathbb{Z}^d}  \sum_{i = 0,
  \ldots, N} \PP_{t - s} (2 b_i s^{2 i_{}} ([Z_{i - 1, x},
  \mathfrak{S}_{\ensuremath{\operatorname{yy}}'} \cdot
  Z_{0, y} Z_{0, y'}] f_s
  \cdot Z_{i, x} f_s \\ & \qquad \qquad\qquad\qquad\qquad\qquad\qquad\qquad\qquad\qquad+ Z_{i - 1, x} f_s \cdot [Z_{i, x},
  \mathfrak{S}_{\ensuremath{\operatorname{yy}}'} \cdot
  Z_{0, y} Z_{0, y'}] f_s)
  ). \label{eq9}
\end{align}
We have studied \((\mathbf{II})\) (called the `free part' later on) before.
From our assumptions about the free part,  \eqref{eq3} is strictly negative and can be
made so that it dominates contributions from \eqref{eq4}-\eqref{eq5}. The contributions from \eqref{eq6}-\eqref{eq9} can
not be dominated by the free part without additional assumptions about the
interaction which we will discuss in the following. First of all we remark
that
\begin{equation}
  \left[ Z_{i, x}, \mathbf{q}_y \cdot \ensuremath{\boldsymbol{Z}}_y \right] =
  \sum_{j = 1, \ldots, N} \left( Z_{i, x} 
  q_{jy} \right) Z_{j, y} +
  \delta_{xy} \sum_{j, k = 1, \ldots, N}
  c_{ijk}
  q_{jx} Z_{k, x} .  \nonumber
\end{equation}
Assumptions \eqref{si1a} and \eqref{si2a} enable us to dominate the terms involving these commutators by the
free part for small times.  
 The newly generated terms will come accompanied by a sufficiently
high power of $s$  so they will be irrelevant for small times. Next we note that
\begin{eqnarray}
  {}[Z_{i, x}, \mathfrak{S}_{\ensuremath{\operatorname{yy}}'} \cdot
  Z_{0, y} Z_{0, y'}] & =
  & (Z_{i, x}  \mathfrak{S}_{yy'}) \cdot
  Z_{0, y} Z_{0, y'} \nonumber\\
  &  & + \delta_{xy} \sum_{k = 1, \ldots, N} c_{i
  0 k} \mathfrak{S}_{xy'} Z_{0, y'} Z_{k, x}   \nonumber\\
  &  & + \delta_{xy}
  \delta_{xy'} \sum_{k, l = 1, \ldots, N} c_{i 0
  k} c_{k 0 l}  \mathfrak{S}_{yx} Z_{l, x} \nonumber\\
  &  & + \delta_{xy'} \sum_{k = 1, \ldots, N}
  c_{i 0 k} \mathfrak{S}_{yx} Z_{0, y} Z_{k, x} . 
 \nonumber 
\end{eqnarray}
For the corresponding terms to be dominated by the free part it is sufficient that
 \eqref{si3a} and \eqref{si4a} hold, 
because in this case the terms will be accompanied by a sufficiently large power
of $s$. (One can in fact see that \eqref{si3a} implies \eqref{si4a}.) Thus, under the conditions \eqref{si1a}
-\eqref{si3a}, for sufficiently small time we have
\[ \partial_s \PP_{t - s} \mathbf{Q}_s^{(1)} (f_s) \leq 0. \]
Hence, we arrive to the following smoothing estimates for infinite dimensional
system, when \ $n = 1$
\[ \ensuremath{\boldsymbol{\Gamma}}_t^{(1)} (f_t) \leq d (\PP_t f^2 - (\PP_t f)^2)
   . \]
Now we proceed by induction.  We have
\begin{eqnarray*}
  (\partial_s - \mathcal{L}) \ensuremath{\boldsymbol{\Gamma}}^{(n)}_{s,
  \mathbf{x}}  (f_s ) & = & \left( \partial_s
  \ensuremath{\boldsymbol{\Gamma}}^{(n)}_{s, \mathbf{x}}  \right)
   (f_s ) -
  \left[ \ensuremath{\boldsymbol{L}},
  \ensuremath{\boldsymbol{\Gamma}}^{(n)}_{s, \mathbf{x}} \right]_{} 
  (f_s ) -
  \left[ \ensuremath{\boldsymbol{L}}_{\ensuremath{\operatorname{int}}},
  \ensuremath{\boldsymbol{\Gamma}}^{(n)}_{s, \mathbf{x}} \right]_{} 
  (f_s ) 
\end{eqnarray*}
where
\[ \left[ \ensuremath{\boldsymbol{L}},
   \ensuremath{\boldsymbol{\Gamma}}^{(n)}_{s, \mathbf{x}} \right]_{}
    (g, h)
   \equiv \ensuremath{\boldsymbol{L}} \left(
   \ensuremath{\boldsymbol{\Gamma}}^{(n)}_{s, \mathbf{x}} (g, h) \right)
   -\ensuremath{\boldsymbol{\Gamma}}^{(n)}_{s, \mathbf{x}}
   (\ensuremath{\boldsymbol{L}}g, h)
   -\ensuremath{\boldsymbol{\Gamma}}^{(n)}_{s, \mathbf{x}} (g,
   \ensuremath{\boldsymbol{L}}h) \]
and similarly for the second commutator involving 
$\ensuremath{\boldsymbol{L}}_{\ensuremath{\operatorname{int}}} $.

Analogously to the proof of Theorem \ref{thm.1}, let us start with assuming for simplicity that $Z\ensuremath{_{\textrm{0,}y}}$ fields commute with
all the other $Z_{\alpha, x}$ fields; in this case the  terms which will appear on the
$n^{th}$ level  will be as follows
\begin{align}
 & - 2 \sum_{y \in \mathbb{Z}^d} \sum_{ \mathbf{x} \in
  \mathbb{Z}^{nd}}  \sum_{| \mathbf{k} |_n =
  0}^{nN} a_{\mathbf{k}, n} s^{2 | \mathbf{k} |_n + n} \mathcal{E}_{L_y}
  \left( \mathbf{Z}_{\mathbf{k}, n, \mathbf{x}} f_s  \right)
        \label{eq19} \\
   &- 2 \sum_{y \in \mathbb{Z}^d} \sum_{ \mathbf{x} \in
  \mathbb{Z}^{nd}}  \sum_{0 \leq | \mathbf{k} |_n
  : k_1 \geq 1}^{nN} b_{\mathbf{k}, n} s^{2 | \mathbf{k} |_n + n - 1}
  \mathcal{E}_{L_y} \left( \mathbf{Z}_{\mathbf{k} - \mathbf{e}_1, n,
  \mathbf{x}} f_s, \mathbf{Z}_{\mathbf{k}, n, \mathbf{x}} f_s \right)   
    \label{eq20} \\
  &- 2 \sum_{y, y' \in \mathbb{Z}^d} 
  \mathfrak{S}_{yy'} \cdot \sum_{ \mathbf{x} \in
 \mathbb{Z}^{nd}}  \sum_{| \mathbf{k} |_n =
  0}^{nN} a_{\mathbf{k}, n} s^{2 | \mathbf{k} |_n + n} Z_{0, y}
  \mathbf{Z}_{\mathbf{k}, n, \mathbf{x}} f_s \cdot Z_{0, y'}
  \mathbf{Z}_{\mathbf{k}, n, \mathbf{x}} f_s      \label{eq21}\\
  &- 2 \sum_{y, y' \in \mathbb{Z}^d} 
  \mathfrak{S}_{yy'} \sum_{ \mathbf{x} \in
  \mathbb{Z}^{d}} \sum_{0 \leq | \mathbf{k} |_n :
  k_1 \geq 1}^{nN} b_{\mathbf{k}, n}
   s^{2 | \mathbf{k} |_n + n - 1}     
 \cdot Z_{0, y} \mathbf{Z}_{\mathbf{k} - \mathbf{e}_1, n, \mathbf{x}} f_s
  \cdot Z_{0, y'} \mathbf{Z}_{\mathbf{k}, n, \mathbf{x}} f_s    \label{eq22}\\
  &+ 2 \sum_{y \in \mathbb{Z}^d} \sum_{ \mathbf{x} \in
  \mathbb{Z}^{nd}}  \sum_{| \mathbf{k} |_n =
  0}^{nN} a_{\mathbf{k}, n} s^{2 | \mathbf{k} |_n + n} 
  \mathbf{Z}_{\mathbf{k}, n, \mathbf{x}} f_s \cdot \left[
  \mathbf{Z}_{\mathbf{k}, n, \mathbf{x}}, \ensuremath{\boldsymbol{q}}_y 
  \cdot
  \ensuremath{\boldsymbol{Z}}_y \right] f_s     \label{eq23} \\
  &+ 2 \sum_{y \in \mathbb{Z}^d} \sum_{ \mathbf{x} \in
  \mathbb{Z}^{nd}} \sum_{0 \leq | \mathbf{k} |_n :
  k_1 \geq 1}^{nN} b_{\mathbf{k}, n} s^{2 | \mathbf{k} |_n + n - 1} 
  \left[ \mathbf{Z}_{\mathbf{k} - \mathbf{e}_1, n, \mathbf{x}}
 \ensuremath{\boldsymbol{q}}_y \cdot \ensuremath{\boldsymbol{Z}}_y \right]
 f_s \cdot \mathbf{Z}_{\mathbf{k}, n, \mathbf{x}} f_s 
 \\& \qquad \qquad \qquad \qquad \qquad \qquad \qquad \qquad \qquad 
 \qquad +
  \mathbf{Z}_{\mathbf{k} - \mathbf{e}_1, n, \mathbf{x}} f_s \cdot \left[
  \mathbf{Z}_{\mathbf{k}, n, \mathbf{x}}, \ensuremath{\boldsymbol{q}}_y \cdot
  \ensuremath{\boldsymbol{Z}}_y \right] f_s   .    \label{eq24}
      \end{align}
Under the conditions for which the original finite dimensional case is
negative,  \eqref{eq19}-\eqref{eq20} are also negative and, if
$\mathfrak{S}_{yy'}$ is assumed sufficiently
small, they can dominate contributions from \eqref{eq21} and \eqref{eq22}. As discussed before one has
the following expressions for the commutators in  \eqref{eq23} and \eqref{eq24}
\[ \left[ \mathbf{Z}_{\mathbf{k}, n, \mathbf{x}},
   \ensuremath{\boldsymbol{q}}_y \cdot \ensuremath{\boldsymbol{Z}}_y 
   \right] =
   \sum_{i = 1}^n \left( \sum_{l = 1}^{n - 1} \sum_{\widehat{\mathbf{z}}
   \subset \mathbf{z}, \widehat{\mathbf{k}} \subset \mathbf{k} : \atop \left|
   \mathbf{x} \setminus \widehat{\mathbf{z}} \right| = l} \left(
   \mathbf{Z}_{\mathbf{k} \setminus \widehat{\mathbf{k}}, n - l, \mathbf{x}
   \setminus \widehat{\mathbf{z}}}
   q_{iy} \right) \cdot
   \mathbf{Z}_{\widehat{\mathbf{k}}, l, \widehat{\mathbf{z}}}
   \ensuremath{Z}_{i, y}
   +\ensuremath{q}_{iy} \cdot \left[
   \mathbf{Z}_{\mathbf{k}, n, \mathbf{x}}, 
   Z_{i, y}
   \right] \right) \]
and
\[ \left[ \mathbf{Z}_{\mathbf{k} - \mathbf{e}_1, n, \mathbf{x}},
   \ensuremath{\boldsymbol{q}}_y \cdot \ensuremath{\boldsymbol{Z}}_y
    \right] =
   \sum_{i = 1}^n \left( \sum_{l = 1}^{n - 1} \sum_{\widehat{\mathbf{z}}
   \subset \mathbf{z}, \widehat{\mathbf{k}} \subset \mathbf{k} - \mathbf{e}_1
   \atop \left| \mathbf{x} \setminus \widehat{\mathbf{z}} \right| = l} \left(
   \mathbf{Z}_{\mathbf{k} - \mathbf{e}_1 
   \setminus \widehat{\mathbf{k}}, n -
   l, \mathbf{x} \setminus \widehat{\mathbf{z}}}
   q_{\ensuremath{iy}} \right) \cdot
   \mathbf{Z}_{\widehat{\mathbf{k}}, l, \widehat{\mathbf{z}}}
   Z_{i, y}
   +q_{\ensuremath{iy}} \cdot \left[
   \mathbf{Z}_{\mathbf{k} - \mathbf{e}_1, n, \mathbf{x}},
   Z_{i, y} \right] \right), \]
with a rule that \ $\mathbf{Z}_{\mathbf{k} \setminus \widehat{\mathbf{k}}, n -
l, \mathbf{x} \setminus \widehat{\mathbf{z}}} \ensuremath{\boldsymbol{q}}_y$,
$\mathbf{Z}_{\mathbf{k} - \mathbf{e}_1 \setminus \widehat{\mathbf{k}}, n - l,
\mathbf{x} \setminus \widehat{\mathbf{z}}} \ensuremath{\boldsymbol{q}}_y \neq
0$ only in the case when $\ensuremath{\operatorname{dist}} (x_i, y) \leq R$
for each $x_i \in \mathbf{x} \setminus \widehat{\mathbf{z}}$, $i = 1, \ldots,
l$. In both cases we produce the terms of order at most $n$, but the terms
with $l < n - 1$ will be accompanied by higher power of
$\ensuremath{\operatorname{time}}s$ and can be compensated for sufficiently
small time by terms in $\mathbf{Q}_s^{(n - 1)} $ by a choice of sufficiently
large $\varsigma_n$. When $l = n - 1$ and $i = 0$ the corresponding terms can
be compensated by terms coming from the derivative of $\mathbf{Q}_s^{(n - 1)}
$ for small times provided \ $\varsigma_n$ is sufficiently large. Otherwise
for $l = n - 1$ and $i \neq 0$, the corresponding terms can be dominated by
terms coming from the derivative of the free part (i.e. the part coming from
the commutator with $\mathbf{L}$) provided $\sup_{k_j, z_j, i,
y} \left\| Z_{k_j, z_j}
 q_{iy}  \right\| $ is
sufficiently small. 

As in the finite dimensional case of Theorem \ref{thm.1},  if we no longer assume that the $Z\ensuremath{_{\textrm{0,}y}}$ fields commute with
all the other $Z_{\alpha, x}$, then we will obtain extra terms. Such terms can be controlled, like in the proof  of Theorem \ref{thm.1}, thanks to our assumptions on the commutators. We do not repeat the whole calculation here, as it is completely analagous to the one done in finite dimensions. 
\endproof

%
%

\end{document}